\newtheorem{theorem}{Theorem}[section]
\newtheorem{definition}[theorem]{Definition}
\newtheorem{lemma}[theorem]{Lemma}
\newtheorem{corollary}[theorem]{Corollary}
\newtheorem{remark}[theorem]{Remark}
\newtheorem{example}[theorem]{Example}
\newcommand{\sm}{\setminus}
\newcommand{\eset}{\emptyset}
\newcommand{\seq}{\subseteq}
\newcommand{\ra}{\rightarrow}
\newcommand{\ua}{\uparrow}
\newcommand{\da}{\downarrow}
\newcommand{\cS}{{\mathcal S}}
\newcommand{\DP}{\mathrm{DP}}
\newcommand{\id}{\mathrm{id}}
\newcommand{\lepol}{\le_{pol}}
\newcommand{\sprod}[2]{\langle #1 , #2 \rangle}
\newcommand{\transition}[1]{\stackrel{#1}{\longrightarrow}}
\newcommand{\FRONT}{\mathrm{FRONTIER}} 
\newcommand{\CONS}{\mathrm{CONSISTENT}}
\newcommand{\FLAT}{\mathrm{FLATTEN}}
\title{The Word Problem for Products of Symmetric Groups} 
\author{Hans U. Simon\thanks{Faculty of Mathematics, Ruhr-University Bochum,
Germany. E-mail:{\tt hans.simon@rub.de}}}
\begin{document}

\maketitle

\begin{abstract}
The word problem for products of symmetric groups (WPPSG) is a well-known
NP-complete problem. An input instance of this problem consists
of ``specification sets'' $X_1,\ldots,X_m \seq \{1,\ldots,n\}$
and a permutation $\tau$ on $\{1,\ldots,n\}$. The sets $X_1,\ldots,X_m$
specify a subset of the symmetric group $\cS_n$ and the question
is whether the given permutation $\tau$ is a member of this subset.
We discuss three subproblems of WPPSG and show that they can be solved
efficiently. The subproblem WPPSG$_0$ is the restriction of WPPSG to
specification sets all of which are sets of consecutive integers.
The subproblem WPPSG$_1$ is the restriction of WPPSG to specification
sets which have the Consecutive Ones Property. The subproblem WPPSG$_2$
is the restriction of WPPSG to specification sets which have what we
call the Weak Consecutive Ones Property. WPPSG$_1$ is more general
than WPPSG$_0$ and WPPSG$_2$ is more general than WPPSG$_1$.
But the efficient algorithms that we use for solving WPPSG$_1$
and WPPSG$_2$ have, as a sub-routine, the efficient algorithm 
for solving WPPSG$_0$.
\end{abstract}

\section{Introduction}

As described in~\cite{GJMP1980,KLPS2007}, the ``Word Problem for Products
of Symmetric Groups (WPPSG)'' is intimately related to the problems
``Circular Arc Coloring (CAC)'' and ``Interval Scheduling with Machine 
Availabilities (ISMA)''. While the efficient solvability of subproblems
like WPPSG$_0$, WPPSG$_1$ and WPPSG$_2$ is likely to have implications on 
the corresponding subproblems of CAC and ISMA, our interest in this paper
is more in developing a general pattern according to which efficiently
solvable subproblems can be found. This pattern, formulated here for 
an abstract NP-hard decision problem $\DP$, is roughly as follows:
\begin{itemize}
\item
In order to get started, find a first basic subproblem $\DP_0$
of $\DP$ that can be solved efficiently.
\item
In order to get an efficient algorithm for a more general subproblem,
find an equivalence relation $R$ on the input instances which
satisfies the following constraints:
\begin{itemize}
\item
$R$ \emph{respects $\DP$} in the sense that each equivalence class 
contains either only YES-instances or only NO-instances. Call an input 
instance of $\DP$ ``nice'' if its equivalence class contains an instance 
of $\DP_0$.
\item
$R$ allows for an efficient algorithm that decides niceness.
\item
$R$ allows for an efficient algorithm which, given a nice instance $I$, 
returns an instance $I'$ of $\DP_0$ that falls into the same equivalence 
class as $I$.
\end{itemize}
\end{itemize}
Let $\DP_1$ denote the problem $\DP$ restricted to nice instances.
Since an instance of $\DP_0$ is clearly nice, $\DP_1$ is more general
than $\DP_0$. Moreover $\DP_1$ can be solved in the following manner:
\begin{enumerate}
\item
Decide whether the given input instance $I$ is nice. If this is not the case,
then reject $I$ and stop. Otherwise, if $I$ is nice, then proceed to Step 2.
\item
Find an instance $I'$ of $\DP_0$ that falls into the same equivalence class 
as $I$.
\item
Use the algorithm for the basic subproblem $\DP_0$ to decide whether $I'$
is a YES-instance (which is the case iff $I$ is a YES-instance).
\end{enumerate}

In this paper $\DP =$ WPPSG and we will present two equivalence relations
with the required properties. The first one is based on inner automorphisms
of $\cS_n$ and leads to an efficient algorithm for solving WPPSG restricted 
to input instances whose specification sets have the Consecutive Ones Property.
The second one is based on a more complicated transformation technique
and leads to an efficient algorithm for solving WPPSG restricted to input 
instances whose specification sets have the Weak Consecutive Ones Property. 
The algebraic flavor of WPPSG was very helpful for finding these
equivalence relations.

We suspect that the general pattern that we mentioned above can
find a wider applicability, in particular if the underlying decision 
problem has an algebraic flavor.

\section{Definitions and Notations} \label{sec:definitions}

For each integer $n \ge 1$, we set $[n] = \{1,\ldots,n\}$.
More generally, for integers $n_2 \ge n_1$, 
we set $[n_1:n_2] = \{n_1,n_1+1,\ldots,n_2\}$. 
For each integer $n\ge1$, 
let $\cS_n$ denote the symmetric group of all permutations on $[n]$. 
For $X \seq [n]$, let $\cS_X$ denote the subgroup of $\cS_n$ consisting 
of all permutations that leave the elements outside $X$ fixed. 
Let $\id$ denote the identity permutation.
Clearly $\cS_\eset = \{\id\}$ and $\cS_X = \{\id\}$ if $|X| = 1$. 
The composition of $m\ge2$ permutations $\sigma_1,\ldots,\sigma_m \in \cS_n$ 
will be simply denoted by $\sigma_1 \ldots \sigma_m$ with the 
understanding that $\sigma_1$ is applied first, $\sigma_2$ is applied second,
and so on. To emphasize that $\sigma_1$ is applied first, we will denote
the $\sigma_1 \ldots \sigma_m$-image of $i \in [n]$ by $i\sigma_1\ldots\sigma_m$.
Similarly the $\sigma_1 \ldots \sigma_m$-image of $X \seq [n]$ is
written as $X \sigma_1\ldots\sigma_m$, 
i.e., $X \sigma_1\ldots\sigma_m = \{i \sigma_1\ldots\sigma_m: i \in X\}$.
Permutations can be represented in different ways. 
For $1 \le i \neq j \le n$, we will denote by $\langle i,j \rangle$ 
the transposition which exchanges the elements $i$ and $j$.
We call $\langle i,j \rangle$ an \emph{adjacent transposition} 
if $i$ and $j$ are consecutive integers, i.e., if $j = i\pm1$.
In general, we will identify a permutation $\sigma$ with the 
tuple $(1\sigma,2\sigma,\ldots,n\sigma) \in [n]^n$. We say that $i_1\sigma$
\emph{precedes $i_2\sigma$ in $\sigma$} if $i_1 < i_2$. A sequence
\begin{equation} \label{eq:transpositions}
S = \langle i_1,j_1 \rangle , \langle i_2,j_2 \rangle, \ldots,
\langle i_s,j_s \rangle
\end{equation}
of (not necessarily different) transpositions on $[n]$ is called \emph{universal} 
if, for any permutation $\tau \in \cS_n$, there exists a subsequence $T$ 
of $S$ such that $\tau$ is the composition of the transpositions in $T$. 
If $S$ is a sequence of transpositions, then we denote by $\tilde S$ the
corresponding sequence of 2-sets, i.e., if $S$ is given by~(\ref{eq:transpositions}),
then
\[
\tilde S =  \{i_1,j_1\} , \{i_2,j_2\}, \ldots, \{i_s,j_s\} \enspace .
\] 
A universal sequence consisting of $\binom{n}{2}$ transpositions was
presented in~\cite{GJMP1980}. We will see now that there is even 
a universal sequence consisting of $\binom{n}{2}$ \emph{adjacent} 
transpositions:

\begin{lemma} \label{lem:universal-sequence}
The sequence $S = S',S''$ with
\begin{eqnarray*}
S' & = &
\langle 1,2 \rangle , \langle 2,3 \rangle ,\ldots, \langle n-1,n \rangle \\
S'' & = &
\langle 1,2 \rangle , \langle 2,3 \rangle ,\ldots, \langle n-2,n-1 \rangle\ ;\ 
\langle 1,2 \rangle , \langle 2,3 \rangle ,\ldots, \langle n-3,n-2 \rangle\ ;\ldots;  
\langle 1,2 \rangle , \langle 2,3 \rangle\ ;\  \langle 1,2 \rangle
\end{eqnarray*}
is a universal sequence consting of $\binom{n}{2}$ adjacent transpositions.
\end{lemma}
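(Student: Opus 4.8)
The plan is to argue by induction on $n$, exploiting the self-similar structure of the sequence. The cardinality claim is immediate: $S'$ contributes $n-1$ transpositions and $S''$ contributes $(n-2)+(n-3)+\cdots+1 = \binom{n-1}{2}$, for a total of $(n-1)+\binom{n-1}{2}=\binom{n}{2}$, so it suffices to prove universality. The base case $n\le 2$ is trivial: for $n=2$ we have $S = \langle 1,2\rangle$ and its subsequences realize exactly $\id$ and $\langle 1,2\rangle$, which is all of $\cS_2$.

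The key structural observation is that $S''$, read as a sequence of transpositions on $[n]$, is \emph{exactly} the sequence $S$ built for the parameter $n-1$: concatenating $S'_{n-1} = \langle 1,2\rangle,\ldots,\langle n-2,n-1\rangle$ with $S''_{n-1} = \langle 1,2\rangle,\ldots,\langle n-3,n-2\rangle;\ldots;\langle 1,2\rangle$ yields precisely the displayed $S''$. Moreover every transposition occurring in $S''$ involves only elements of $[n-1]$, so any composition of a subsequence of $S''$ fixes $n$ and restricts to a permutation of $[n-1]$. Thus the inductive hypothesis says: every $\pi\in\cS_n$ with $n\pi = n$ is the composition of some subsequence of $S''$.

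Now fix an arbitrary $\tau\in\cS_n$ and let $k\in[n]$ be the element with $k\tau = n$. Consider the subsequence
\[
\rho_0 \;=\; \langle k,k+1\rangle,\ \langle k+1,k+2\rangle,\ \ldots,\ \langle n-1,n\rangle
\]
of $S'$ (the empty sequence if $k=n$); a direct check shows $k\rho_0 = n$, and in fact $\rho_0^{-1}$ sends $n$ to $k$. Set $\pi := \rho_0^{-1}\tau$. Then $n\pi = n\rho_0^{-1}\tau = k\tau = n$, so $\pi$ fixes $n$. By the inductive hypothesis, $\pi$ is the composition of some subsequence $T$ of $S''$. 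Since $\rho_0$ is a subsequence of $S'$ and $T$ is a subsequence of $S''$, the concatenation $\rho_0, T$ is a subsequence of $S = S', S''$, and its composition equals $\rho_0\pi = \rho_0(\rho_0^{-1}\tau) = \tau$. This completes the induction.

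The argument is essentially routine once the recursive identification of $S''$ is in hand; the main point requiring care is the bookkeeping for the composition convention (recall $\sigma_1$ is applied first, so $i\sigma_1\cdots\sigma_m$ is the image), which is what makes it legitimate to ``route'' the preimage $k$ of $n$ to position $n$ using only the tail of the first pass $S'$, and then to clean up the remaining permutation of $[n-1]$ recursively inside $S''$.
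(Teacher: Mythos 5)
Your proposal is correct and follows essentially the same route as the paper: both factor $\tau = \rho_0\pi$ where $\rho_0$ is the tail $\langle k,k+1\rangle,\ldots,\langle n-1,n\rangle$ of $S'$ routing the $\tau$-preimage of $n$ into the last position (the paper's $\tau'$ with $i'=k$), and both handle the remaining permutation fixing $n$ by induction, using that $S''$ is exactly the universal sequence for $n-1$. Your version merely defines the second factor as $\rho_0^{-1}\tau$ instead of writing it out explicitly, which is a cosmetic difference.
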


\begin{proof}
The sequence $S$ obviously consists of $(n-1)+(n-2)+\ldots+2+1 = \binom{n}{2}$
adjacent transpositions. It suffices therefore to show that $S$ is universal.
To this end, we consider  an arbitrary but fixed permutation $\tau\in\cS_n$.  
Set $i' := n\tau^{-1}$. If $i' = n$, then $\tau$ could be viewed as a
permutation on $[n-1]$, and we were done by induction. We may suppose
therefore that $i'<n$. Consider the permutations  
\[
i\tau' = \left\{ \begin{array}{cl}
            i & \mbox{if $i<i'$} \\
            n & \mbox{if $i=i'$} \\
            i-1 & \mbox{if $i>i'$}
           \end{array} \right. \mbox{ and }\ 
i\tau'' = \left\{ \begin{array}{cl}
            i\tau & \mbox{if $i<i'$} \\
            (i+1)\tau) & \mbox{if $i' \le i \le n-1$} \\
            n & \mbox{if $i=n$}
           \end{array} \right. \enspace .
\]
It is easily checked that $\tau = \tau'\tau''$.
Moreover $\tau' := 
\langle i',i'+1 \rangle \langle i'+1,i'+2 \rangle \ldots \langle n-1,n \rangle$.
In other words, $\tau'$ is the composition of the last $n-i'$ transpositions 
in the sequence $S'$. The latter form a subsequence of $S'$ that we denote 
by $T'$. Because of $\tau''(n) = n$, it follows inductively that there exists 
a subsequence $T''$ of $S''$ such that $\tau''$ is the composition of 
the transpositions in $T''$. From this discussion, it follows that $\tau$ 
can be written as the composition of all transpositions in the 
subsequence $T := T' , T''$ of $S$, which completes the proof.
\end{proof}

For sake of simplicity, we have introduced universal sequences of
transpositions (resp.~adjacent transpositions) on the interval $[n]$.
But these notions are defined in the obvious way for arbitrary
integer intervals $[n_1:n_2]$ with $n_1 \le n_2$. An interval of this 
form contains $n_2-n_1+1$ integers and admits a universal sequence 
of $\binom{n_2-n_1+1}{2}$ many adjacent transpositions.

\medskip
We remind the reader to the definition of matrices with the 
\emph{consecutive 1's property for columns (C1-property)}:

\begin{definition}[\cite{G1980}] \label{def:C1-property}
A matrix whose entries are zeros and ones is said to have the 
\emph{C1-property} if the rows can be permuted in such a way 
that the 1's in each column occur consecutively.
\end{definition}

When a column $A_j$ of a matrix $A \in \{0,1\}^{n \times m}$ is
identified with the set $X_j = \{i \in [n]: A[i,j] = 1\}$,
then Definition~\ref{def:C1-property} can be equivalently
reformulated as follows: \\
Subsets $X_1,\ldots,X_m$ of $[n]$ are said to have the C1-property, 
if there exists a permutation $\pi$ on $[n]$ such that, for $j=1,\ldots,m$, 
the elements of $X_j$ occur consecutively within $\pi$.

\smallskip\noindent
For ease of later reference, we make the following observation: 

\begin{remark} \label{rem:C1P}
The elements of a set $X \seq [n]$ occur consecutively within $\pi$ 
iff $X\pi^{-1}$ is a set of consecutive integers.
\end{remark}
For instance, the elements of $X = \{2,5,7,9\}$ occur consecutively
within $\pi = (6,3,8,5,2,9,7,4,1)$, namely in positions $4,5,6,7$.
The permutation $\pi^{-1}$ equals $(9,5,2,8,4,1,7,3,6)$
and satisfies $X\pi^{-1} = \{4,5,6,7\}$.

As shown by Booth and Lueker in~\cite{BL1976}, one can test
in time $O(mn)$ whether a given matrix $A \in \{0,1\}^{n \times m}$ 
(or a given collection of subsets of $[n]$) has the C1-property.
Moreover, the permutation on $[n]$ which witnesses this property 
can also be computed in time $O(mn)$.\footnote{Note that $O(mn)$ 
is a linear time bound because the given matrix $A$ is of size $mn$.}

\section{The Word Problem for Products of Symmetric Groups}
\label{sec:wppsg}

The word problem for products of symmetric groups (WPPSG) is a well-known
NP-complete problem~\cite{GJMP1980}. In Section~\ref{subsec:version1},
we formally state this problem and also remind the reader to some 
(mostly known) polynomial reductions among some of its subproblems. 
In Section~\ref{subsec:version2}, the problem WPPSG is reformulated 
in a way that allows for a natural definition of what we call the 
\emph{sorting strategy}. This strategy is then intensively discussed
in Section~\ref{subsec:sorting-strategy}. The main result,
Theorem~\ref{th:sorting-strategy}, states that this strategy efficiently 
solves WPPSG restricted to specification sets all of which are sets of 
consecutive integers, respectively.

\subsection{The Classical Formulation of the Word Problem} 
\label{subsec:version1}

The \emph{word problem for products of symmetric groups (WPPSG)}
is the following decision problem:
\begin{description}
\item[Input Instance:]
Integers $n,m\ge1$, specification sets $X_1,\ldots,X_m \seq [n]$ and
a permutation $\tau_0$ on~$[n]$.
\item[Question:]
Is $\tau_0$ a member of the product of the symmetric 
groups $\cS_{X_1},\ldots,\cS_{X_m}$, i.e., do there exist 
permutations $\sigma_1 \in \cS_{X_1},\ldots,\sigma_m \in \cS_{X_m}$
such that $\tau_0 = \sigma_1 \ldots \sigma_m$?
\end{description}
Since $\cS_X = \{\id\}$ if $|X|\le1$, we may always assume without loss 
of generality that $n \ge |X_j| \ge 2$ for $j=1,\ldots,m$.

We will also consider some subproblems of WPPSG which are obtained 
by imposing additional constraints on the sets $X_1,\ldots,X_m$:
\begin{itemize}
\item
The subproblem WPPSG$[2]$ results
from the constraints $|X_j| = 2$ for $j = 1,\ldots,m$.
\item
The subproblem WPPSG$_0$ results from the constraint that, 
for each $j \in [m]$, the set $X_j \seq [n]$ consists of 
consecutive integers.
\item
The subproblem WPPSG$_0[2]$ is then obtained by combining 
the constraints from the previous two subproblems.
\item
The subproblem WPPSG$_1$ results from the constraint 
that the sets $X_1,\ldots,X_m \seq [n]$ must have the 
C1-property, i.e., there exists a permutation $\pi \in \cS_n$
such that, for every $j \in [m]$, the elements of $X_j$ occur
consecutively in $\pi$ (or, equivalently, the set $X_j\pi^{-1}$
consists of consecutive integers).
\end{itemize}
Clearly, as witnessed by the identity permutation, WPPSG$_0$ is a 
subproblem of WPPSG$_1$. The following hardness results are known:

\begin{theorem}[\cite{GJMP1980}] \label{th:hardness}
\begin{enumerate}
\item WPPSG is NP-complete.
\item WPPSG $\lepol$WPPSG$[2]$.
\item WPPSG$[2]$ is NP-complete.
\end{enumerate}
\end{theorem}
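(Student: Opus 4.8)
The plan is to recall the reduction from~\cite{GJMP1980} and present it in the notation of this paper. Part~(1) is the original theorem of Garey, Johnson, Miller and Papadimitriou, so I would simply cite it, or sketch it as a consequence of~(3) together with the obvious containment of WPPSG in NP (guess the permutations $\sigma_1,\ldots,\sigma_m$ and verify $\tau_0 = \sigma_1\ldots\sigma_m$ and $\sigma_j \in \cS_{X_j}$ in polynomial time). The real content is in parts~(2) and~(3): part~(3) follows from part~(2) together with part~(1), provided one also checks that WPPSG$[2]$ lies in NP, which is immediate from the same guess-and-check argument. So the crux is the polynomial reduction WPPSG $\lepol$ WPPSG$[2]$.

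For part~(2), the key observation is that every finite symmetric group $\cS_X$ is generated by transpositions, and in fact by a \emph{bounded-length} word in transpositions: Lemma~\ref{lem:universal-sequence} gives, for the interval $[n_1:n_2]$ spanned by $X$, a universal sequence of $\binom{n_2-n_1+1}{2}$ adjacent transpositions, but one only needs a universal sequence of $\binom{|X|}{2}$ transpositions on the set $X$ itself (the classical construction of~\cite{GJMP1980}). Given an instance of WPPSG with sets $X_1,\ldots,X_m$, I would replace each factor $\cS_{X_j}$ by the block of transposition-factors $\cS_{\{i,j\}}$ corresponding to a universal sequence for $X_j$; since each transposition $\langle i,j\rangle$ either is applied or is not (the only two elements of $\cS_{\{i,j\}}$ are $\id$ and $\langle i,j\rangle$), the product over such a block ranges exactly over all of $\cS_{X_j}$. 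Concatenating these blocks for $j=1,\ldots,m$ yields a new instance, with the same $n$ and the same target $\tau_0$, but with $\sum_j \binom{|X_j|}{2} \le m\binom{n}{2}$ specification sets, all of size~$2$; and $\tau_0 \in \cS_{X_1}\cdots\cS_{X_m}$ if and only if $\tau_0$ lies in the product over the expanded list. This transformation is clearly computable in polynomial time.

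The main obstacle — really the only point requiring care — is verifying that a universal sequence of transpositions for a set $X$ does the job as a substitute for $\cS_X$: one must check both inclusions, namely that every product of ``on/off'' choices along the sequence lands in $\cS_X$ (clear, since each transposition moves only elements of $X$) and that every element of $\cS_X$ arises this way (this is exactly the definition of universality, instantiated on the symmetric group over $X$ rather than over $[n]$). A secondary bookkeeping point is that the expanded instance must still satisfy the normalization $|X_j|\ge 2$, which holds automatically since all new sets have size exactly~$2$, and that the reduction size is polynomial, which follows from $\binom{|X_j|}{2} = O(n^2)$ per original set. With these checks in place, parts~(2) and~(3) follow, and part~(1) is either cited or deduced from~(3).
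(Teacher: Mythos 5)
Your plan is correct and follows essentially the same route as the paper, which treats this as a cited result of \cite{GJMP1980}: assertion (1) is taken from there (via a reduction from Directed Disjoint Connecting Paths), assertion (2) is the universal-sequence-of-transpositions reduction you spell out (the same idea as the paper's Lemma~\ref{lem:reduction}), and assertion (3) is immediate from (1) and (2) together with NP membership. The only thing to drop is your parenthetical alternative of deriving (1) from (3), which would be circular in your setup since you obtain (3) from (1) and (2).
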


The proof of the first assertion employs a polynomial reduction from 
``Directed Disjoint Connecting Paths'' to WPPSG. The polynomial reduction 
which establishes the second assertion makes use of universal sequences 
of transpositions (without explicitly calling them such).
See~\cite{GJMP1980} for details. The third assertion is immediate from the 
first two. 

The following result makes use of universal sequences of adjacent
transpositions in a straightforward manner. We present a short proof 
for sake of completeness.

\begin{lemma}[Common Knowledge] \label{lem:reduction}
$\mbox{WPPSG}_0 \lepol \mbox{WPPSG}_0[2]$. 
\end{lemma}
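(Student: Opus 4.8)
The plan is to give a polynomial-time many-one reduction from $\mbox{WPPSG}_0$ to $\mbox{WPPSG}_0[2]$ by replacing each specification set $X_j$ of consecutive integers with a block of size-$2$ specification sets that ``spell out'' a universal sequence of adjacent transpositions on the interval $X_j$. Concretely, given an instance of $\mbox{WPPSG}_0$ consisting of $n,m$, sets $X_1,\ldots,X_m \seq [n]$ of consecutive integers, and a target permutation $\tau_0$, I would build an instance of $\mbox{WPPSG}_0[2]$ on the same ground set $[n]$ with the same target $\tau_0$, but whose list of specification sets is the concatenation, over $j = 1,\ldots,m$, of the $2$-sets $\tilde S^{(j)}$ obtained from a universal sequence $S^{(j)}$ of adjacent transpositions on the integer interval $X_j$, as furnished by (the interval version of) Lemma~\ref{lem:universal-sequence}. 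Since $|X_j| \le n$, each such block has at most $\binom{n}{2}$ sets, so the total number of new specification sets is at most $m\binom{n}{2}$, and each is computable in polynomial time; hence the reduction runs in polynomial time. Every new set is a $2$-set of consecutive integers (adjacent transpositions), so the output is a legitimate $\mbox{WPPSG}_0[2]$ instance.

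The correctness argument splits into the two directions of an ``iff''. For the easy direction, suppose the original instance is a YES-instance, so $\tau_0 = \sigma_1 \cdots \sigma_m$ with $\sigma_j \in \cS_{X_j}$. Because $S^{(j)}$ is a universal sequence of (adjacent) transpositions on the interval $X_j$ and $\sigma_j$ is a permutation of that interval, there is a subsequence $T^{(j)}$ of $S^{(j)}$ whose product equals $\sigma_j$. Choosing, within the $j$-th block of the new instance, the permutation on each $2$-set to be the corresponding transposition when that index lies in $T^{(j)}$ and the identity otherwise, and concatenating these choices over $j = 1,\ldots,m$, gives a factorization of $\tau_0$ witnessing that the new instance is a YES-instance. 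For the converse, suppose the new instance is a YES-instance, say $\tau_0 = \prod \rho_k$ where each $\rho_k \in \cS_{Y_k}$ and $Y_k$ is one of the $2$-sets in block $j(k)$. For each fixed $j$, let $\sigma_j$ be the product, in the given order, of exactly those factors $\rho_k$ coming from block $j$; then $\sigma_j$ is a product of permutations of $2$-subsets of $X_j$, hence $\sigma_j \in \cS_{X_j}$, and since the blocks appear consecutively and in order $1,\ldots,m$ we get $\tau_0 = \sigma_1 \cdots \sigma_m$, so the original instance is a YES-instance. This establishes both directions and completes the reduction.

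I expect the only real subtlety to be purely bookkeeping rather than mathematical: one must be careful that the universal-sequence property is invoked for the correct interval $X_j$ (which need not start at $1$), and that the ordering of the concatenated blocks matches the ordering $\sigma_1 \cdots \sigma_m$ of the original product, since WPPSG is sensitive to the order of the specification sets. The substantive input, namely the existence of a polynomial-size universal sequence of adjacent transpositions on an arbitrary integer interval, is already available from Lemma~\ref{lem:universal-sequence} together with the remark immediately following its proof, so no new combinatorial work is needed; the ``main obstacle,'' such as it is, is simply stating the block decomposition cleanly.
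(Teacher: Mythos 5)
Your proposal is correct and matches the paper's own proof: both replace each interval $X_j$ by the $2$-sets of a universal sequence of adjacent transpositions on $X_j$ (Lemma~\ref{lem:universal-sequence}) and argue the two directions of the equivalence exactly as you do. No gaps; your remarks on block ordering and on intervals not starting at $1$ are the same minor bookkeeping points implicit in the paper's argument.
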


\begin{proof}
The following proof is illustrated in Example~\ref{ex:reduction} below.
Let $I = (n,m,X_1,\ldots,X_m,\tau_0)$ be an instance of WPPSG$_0$.
Then the sets $X_1,\ldots,X_m$ are integer intervals. Let $S_j$ be a 
universal sequence of adjacent transpositions for the interval $X_j$
and let $\tilde S_j$ be the corresponding sequence of $2$-sets.
Consider the instance $I'$ of WPPSG$_0[2]$ which is obtained
from $I$ by substituting $\tilde S_j$ \hbox{for $X_j$}. $I'$ can clearly 
be constructed from $I$ in poly-time. It suffices to show that $I$
is a YES-instance iff $I'$ is a YES-instance. \\
Suppose first that $I'$ is a YES-instance. Then $\tau_0$ can be 
written in the form $\tau_0 = \sigma_1 \ldots \sigma_m$ where $\sigma_j$
is a product of transpositions of a suitably chosen subsequence of $S_j$. 
Obviously $\sigma_j \in \cS_{X_j}$, which shows that $I$ is a YES-instance. \\
Suppose now that $I$ is a YES-instance. Then $\tau_0$ can be written
in the form $\tau_0 = \sigma_1 \ldots \sigma_m$ where $\sigma \in \cS_{X_j}$.
By the universality of the sequence $S_j$, each $\sigma_j$ can be written
as a product of transpositions of a suitably chosen subsequence of $S_j$. 
This shows that $I'$ is a YES-instance.
\end{proof}

\begin{example} \label{ex:reduction}
Consider the instance $I = (n,m,X_1,\ldots,X_m,\tau_0)$ of WPPSG$_0$
where $n=5$, $m=4$, $X_1 = \{2,3\}$, $X_2 = \{1,2\}$, $X_3 = \{3,4,5\}$
and $X_4 = \{1,2,3,4\}$. Let $I' = (n,m,\tilde S_1,\ldots, \tilde S_m,\tau_0)$ 
be the corresponding instance of WPPSG$_0[2]$, as described in the
proof of Lemma~\ref{lem:reduction}. Then
\[
I' = (5,4,\ \underbrace{\{2,3\}}_{\tilde S_1}\ ,\ \underbrace{\{1,2\}}_{\tilde S_2}\ ,\ 
\underbrace{\{3,4\},\{4,5\},\{3,4\}}_{\tilde S_3}\ ,\ 
\underbrace{\{1,2\},\{2,3\},\{3,4\},\{1,2\},\{2,3\},\{1,2\}}_{\tilde S_4}\ ,\ \tau_0) 
\]
For $\tau_0 = (4,5,3,2,1)$, $I$ is a YES-instance 
because $\tau_0 = \sigma_1\sigma_2\sigma_3\sigma_4$ for 
\[
\sigma_1 = \sprod{2}{3}\ ,\ \sigma_2 = \sprod{1}{2}\ ,\ 
\sigma_3 = \sprod{3}{5}\ ,\ \sigma_4 = \sprod{1}{3}\sprod{2}{4} \enspace .
\]
Since $\sprod{3}{5} = \sprod{3}{4}\sprod{4}{5}\sprod{3}{4}$
and $\sprod{1}{3}\sprod{2}{4} = \sprod{2}{3}\sprod{3}{4}\sprod{1}{2}\sprod{2}{3}$,
$I'$ is a YES-instance too. It is no coincidence that, for each $j \in [4]$,
the permutation $\sigma_j$ can be written as a product of adjacent transpositions
taken from $S_j$. Rather it follows from the universality of the sequence $S_j$.
\end{example}

\subsection{A Reformulation of the Word Problem} \label{subsec:version2}

Consider two permutations $\tau,\tau' \in \cS_n$ and 
a set $X \seq [n]$. We write $\tau' \transition{X} \tau$ 
if $i\tau = i\tau'$ holds for each $i \in [n] \sm X$. 
Note that and $\tau' \transition{X} \tau$
implies that $\{i\tau: i \in X\} = \{i\tau': i \in X\}$.
Moreover $\transition{X}$ is an equivalence relation.
We say that $\tau$ is \emph{$X$-sorted} if, for any two
indices $i < j$ from $X$ we have that $i\tau < j\tau$.
Clearly every equivalence class with respect to $\transition{X}$
contains exactly one representative that is $X$-sorted.
The following observation is rather obvious:

\begin{lemma} \label{lem:X-transition}
Let $\tau$ and $\tau'$ be two permutations on $[n]$,
let $\sigma = \tau' \tau^{-1}$ and let $X \seq [n]$.
Then $\tau' \transition{X} \tau$ iff $\sigma \in \cS_X$.
\end{lemma}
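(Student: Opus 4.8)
The plan is to unwind the definitions directly, keeping careful track of the right-action convention for composition, under which $i\sigma = i(\tau'\tau^{-1}) = (i\tau')\tau^{-1}$ for every $i \in [n]$. The pivotal observation I would establish first is a pointwise equivalence: for any single $j \in [n]$, the permutation $\sigma$ fixes $j$ if and only if $\tau$ and $\tau'$ agree on $j$. Indeed, $j\sigma = j$ unwinds to $(j\tau')\tau^{-1} = j$, and applying $\tau$ on the right to both sides (legitimate since $\tau^{-1}\tau = \id$, so that the composition $\tau'\tau^{-1}\tau$ collapses to $\tau'$) turns this into the equivalent equation $j\tau' = j\tau$. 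Once this pointwise statement is in hand, both implications of the lemma follow simply by letting $j$ range over $[n] \sm X$.

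Concretely, for the forward direction I would assume $\tau' \transition{X} \tau$, i.e.\ $i\tau = i\tau'$ for every $i \in [n]\sm X$, and then for an arbitrary $j \in [n]\sm X$ compute $j\sigma = (j\tau')\tau^{-1} = (j\tau)\tau^{-1} = j$; hence $\sigma$ fixes every point outside $X$. For the converse I would assume $\sigma \in \cS_X$, so that $j\sigma = j$ for every $j \in [n]\sm X$, and then invoke the pointwise equivalence above to conclude $j\tau' = j\tau$ for every such $j$, which is precisely the condition defining $\tau' \transition{X} \tau$.

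A small point I would record explicitly is that checking $\sigma \in \cS_X$ amounts only to checking that $\sigma$ is the identity on $[n] \sm X$: since $\sigma$ is a bijection of $[n]$ that fixes every element outside $X$, it automatically maps $X$ onto $X$, so no separate verification is needed. I do not anticipate a genuine obstacle here; the whole argument is a two-line substitution in each direction. The only thing that requires care is the order of composition and the direction in which $\tau$ and $\tau^{-1}$ act, so that the cancellation $\big((j\tau')\tau^{-1}\big)\tau = j\tau'$ is applied in the correct sense — getting this backwards would be the one way to go wrong.
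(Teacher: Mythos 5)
Your proof is correct and follows essentially the same route as the paper: in both directions one evaluates $\sigma = \tau'\tau^{-1}$ pointwise at an arbitrary $j \in [n]\sm X$ and cancels $\tau^{-1}$ against $\tau$, exactly as in the paper's argument. The extra remark that a bijection fixing $[n]\sm X$ automatically lies in $\cS_X$ is a harmless clarification and does not change the substance.
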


\begin{proof}
Suppose first that $\tau' \transition{X} \tau$
so that $\tau$ coincides with $\tau'$ outside $X$. 
Pick an arbitrary but fixed $i \in [n] \sm X$.
Then $i\sigma = i \tau' \tau^{-1} = i \tau \tau^{-1} = i$.
It follows that $\sigma \in \cS_X$. \\
Suppose now that $\sigma \in \cS_X$. 
Pick an arbitrary but fixed $i \in [n] \sm X$.
Then $i = i\sigma = i \tau' \tau^{-1}$, which implies 
that $i\tau = i\tau'$. 
It follows that $\tau' \transition{X} \tau$. 
\end{proof}

\noindent
We will see shortly that WPPSG can be reformulated in the following way:
\begin{description}
\item[Input Instance:]
Integers $n,m\ge1$, subsets $X_1,\ldots,X_m \seq [n]$ and
a permutation $\tau_0 \in \cS_n$.
\item[Question:]
Do there exist permutations $\tau_1,\ldots,\tau_m \in \cS_n$ 
such that $\tau_m = (1,\ldots,n)$ and, for $j = 1,\ldots,m$, 
we have that $\tau_{j-1} \transition{X_j} \tau_j$?
\end{description}
We will refer to this reformulated version of WPPSG as V2-WPPSG
where ``V2'' means ``Version 2''. The following result and its proof
show that the problems WPPSG and V2-WPPSG are equivalent in a strong 
sense:

\begin{lemma} \label{lem:V2-WPPSG}
WPPSG and V2-WPPSG have the same set of YES-instances.
\end{lemma}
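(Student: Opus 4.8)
The plan is to set up, for each fixed instance $I = (n,m,X_1,\ldots,X_m,\tau_0)$, a direct correspondence between a witness $(\sigma_1,\ldots,\sigma_m)$ for WPPSG and a witness $(\tau_1,\ldots,\tau_m)$ for V2-WPPSG, and to verify it using Lemma~\ref{lem:X-transition} together with a telescoping product. Note first that the tuple $(1,\ldots,n)$ is exactly the identity permutation $\id$, so the V2-condition ``$\tau_m = (1,\ldots,n)$'' means ``$\tau_m = \id$''.

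For the direction WPPSG $\Rightarrow$ V2-WPPSG, suppose $\tau_0 = \sigma_1\cdots\sigma_m$ with $\sigma_j \in \cS_{X_j}$. I would define $\tau_j := \sigma_{j+1}\sigma_{j+2}\cdots\sigma_m$ for $j = 0,1,\ldots,m$, reading the empty product (the case $j=m$) as $\id$. By construction $\tau_0$ is the given permutation and $\tau_m = \id$. Moreover $\tau_{j-1} = \sigma_j\tau_j$, hence $\sigma_j = \tau_{j-1}\tau_j^{-1} \in \cS_{X_j}$, and Lemma~\ref{lem:X-transition} (applied with $\tau' = \tau_{j-1}$ and $\tau = \tau_j$) yields $\tau_{j-1} \transition{X_j} \tau_j$ for $j = 1,\ldots,m$. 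Thus $(\tau_1,\ldots,\tau_m)$ is a V2-witness for $I$.

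For the converse direction, suppose $(\tau_1,\ldots,\tau_m)$ is a V2-witness, i.e.\ $\tau_m = \id$ and $\tau_{j-1}\transition{X_j}\tau_j$ for $j=1,\ldots,m$. I would set $\sigma_j := \tau_{j-1}\tau_j^{-1}$; Lemma~\ref{lem:X-transition} gives $\sigma_j \in \cS_{X_j}$. A telescoping computation then gives $\sigma_1\sigma_2\cdots\sigma_m = (\tau_0\tau_1^{-1})(\tau_1\tau_2^{-1})\cdots(\tau_{m-1}\tau_m^{-1}) = \tau_0\tau_m^{-1} = \tau_0$, so $(\sigma_1,\ldots,\sigma_m)$ is a WPPSG-witness for $I$. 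Combining both directions, $I$ is a YES-instance of WPPSG iff it is a YES-instance of V2-WPPSG.

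The argument is essentially pure bookkeeping; the only place where care is needed — and hence the closest thing to an obstacle — is respecting the paper's left-to-right composition convention (in $\sigma_1\cdots\sigma_m$ the permutation $\sigma_1$ acts first), so that both the partial products $\sigma_{j+1}\cdots\sigma_m$ and the telescoping cancellations $\tau_j^{-1}\tau_j = \id$ are oriented correctly. Once that convention is fixed, each implication is immediate from Lemma~\ref{lem:X-transition}.
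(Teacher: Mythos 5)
Your proof is correct and follows essentially the same route as the paper: both directions hinge on the correspondence $\sigma_j = \tau_{j-1}\tau_j^{-1}$, Lemma~\ref{lem:X-transition}, and the telescoping product $\sigma_1\cdots\sigma_m = \tau_0\tau_m^{-1}$. Your explicit choice $\tau_j := \sigma_{j+1}\cdots\sigma_m$ is exactly the sequence the paper defines implicitly via $\sigma_j = \tau_{j-1}\tau_j^{-1}$, so the two arguments coincide.
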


\begin{proof}
Suppose first that $I = (n,m,X_1,\ldots,X_m,\tau_0)$ is a YES-instance 
of V2-WPPSG. Choose $\tau_1,\ldots,\tau_{m-1},\tau_m \in \cS_n$ such 
that $\tau_m = (1,\ldots,n)$ and $\tau_{j-1}\transition{X_j}\tau_j$ 
for $j = 1,\ldots,m$. For $j =1,\ldots,m$, 
set $\sigma_j := \tau_{j-1}\tau_j^{-1}$.
Then $\sigma_1\ldots\sigma_m = \tau_0\tau_m^{-1} = \tau_0$. Moreover,
according to Lemma~\ref{lem:X-transition}, we have 
that $\sigma_j \in \cS_{X_j}$ for $j=1,\ldots,m$. As witnessed 
by $\sigma_1,\ldots,\sigma_m$, $I$ is also a YES-instance of WPPSG. \\
Suppose now that $I = (n,m,X_1,\ldots,X_m,\tau_0)$ is a YES-instance 
of WPPSG. For $j = 1,\ldots,m$, choose $\sigma_j \in \cS_{X_j}$ such 
that $\tau_0 = \sigma_1\ldots,\sigma_m$ and choose $\tau_j$ such 
that $\sigma_j = \tau_{j-1}\tau_j^{-1}$.
Then $\tau_0 = \sigma_1\ldots\sigma_m = \tau_0\tau_m^{-1}$,
which implies that $\tau_m = \id = (1,\ldots,n)$. Moreover, 
according to Lemma~\ref{lem:X-transition}, we have 
that $\tau_{j-1}\transition{X_j}\tau_j$ for $j= 1,\ldots,m$. 
As witnessed by $\tau_1,\ldots,\tau_{m-1},\tau_m$, $I$ is also 
a YES-instance of V2-WPPSG.
\end{proof}

\begin{example} \label{ex:V2-WPPSG}
Consider the instance $I = (n,m,X_1,\ldots,X_m,\tau_0)$ of V2-WPPSG
where $n=5$, $m=4$, $X_1 = \{1,3\}$, $X_2 = \{4,5\}$, $X_3 = \{1,5\}$, 
$X_4 = \{2,3,4\}$ and $\tau_0 = (2,4,5,1,3)$. This is a YES-instance 
of V2-WPPSG because
\[
(2,4,5,1,3) \transition{1,3} (5,4,2,1,3) \transition{4,5} (5,4,2,3,1)
\transition{1,5} (1,4,2,3,5) \transition{2,3,4} (1,2,3,4,5) \enspace .
\]
According to Lemma~\ref{lem:V2-WPPSG}, the above instance $I$ must
be also a YES-instance of WPPSG. If we choose $\sigma_1,\ldots,\sigma_4$ 
as described in the proof of this lemma, we obtain
\[
\sigma_1 = (3,2,1,4,5)\ ,\ \sigma_2 = (1,2,3,5,4)\ ,\    
\sigma_3 = (5,2,3,4,1)\ ,\ \sigma_4 = (1,4,2,3,5)
\]
so that 
$\sigma_1 \in \cS_{\{1,3\}}$, $\sigma_2 \in \cS_{\{4,5\}}$,
$\sigma_3 \in \cS_{\{1,5\}}$, $\sigma_4 \in \cS_{\{2,3,4\}}$
and $\sigma_1\sigma_2\sigma_3\sigma_4 = \tau_0$.
\end{example}

\subsection{The Sorting Strategy} \label{subsec:sorting-strategy}

Let $I = (n,m,X_1,\ldots,X_m,\tau_0)$ be an instance of V2-WPPSG.
We say that the permutations $\tau_1,\ldots,\tau_m$ are chosen 
according to the \emph{sorting strategy (applied to $\tau_0$
and $X_1,\ldots,X_m$)} if, for $j=1,\ldots,m$, 
we choose $\tau_j$ as the unique permutation which is $X_j$-sorted 
and which coincides with $\tau_{j-1}$ outside $X_j$.\footnote{Often
$\tau_0$ and $X_1,\ldots,X_m$ are clear from context and need not
be mentioned explicitly.} Assume now that $I$ is a YES-instance 
of V2-WPPSG. We say that the sorting 
strategy is \emph{successful on $I$} if the final permutation $\tau_m$ 
chosen by this strategy equals the identity permutation $(1,\ldots,n)$. 
If $\tau_m \neq (1,\ldots,n)$, then we say that the sorting strategy 
\emph{fails on $I$}. The following example shows that the sorting 
strategy may fail on YES-instances of V2-WPPSG:

\begin{example} \label{ex:sorting-strategy}
We consider again the YES-instance from Example~\ref{ex:V2-WPPSG}.
The sorting strategy fails on this instance because it leads to
the following sequence of permutations:
\[
(2,4,5,1,3) \transition{1,3} (2,4,5,1,3) \transition{4,5} (2,4,5,1,3)
\transition{1,5} (2,4,5,1,3) \transition{2,3,4} (2,1,4,5,3) \enspace .
\]
\end{example}

\noindent
As for the subproblem V2-WPPSG$_0[2]$ however, we get the following
result:

\begin{theorem} \label{th:sorting-strategy}
The sorting strategy is successful on all YES-instances of V2-WPPSG$_0[2]$.
\end{theorem}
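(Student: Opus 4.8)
We must show that when the specification sets are 2-element sets of consecutive integers (adjacent transpositions, in effect), the greedy "sort within each window" strategy cannot fail on a YES-instance. Fix a YES-instance $I=(n,m,X_1,\ldots,X_m,\tau_0)$ of V2-WPPSG$_0[2]$, so each $X_j=\{a_j,a_j+1\}$. Let $\tau_0^{\,s},\tau_1^{\,s},\ldots,\tau_m^{\,s}$ be the sequence produced by the sorting strategy, and let $\tau_0,\tau_1,\ldots,\tau_m$ be \emph{any} witnessing sequence, so $\tau_m=(1,\ldots,n)$ and $\tau_{j-1}\transition{X_j}\tau_j$ for every $j$. Since at each step $\transition{X_j}$ with $|X_j|=2$ allows only two outcomes — leave the pair as is, or swap it — the key is to compare the greedy run against the witnessing run step by step.

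**The main idea: a dominance invariant.** The natural approach is to find a partial order $\preceq$ on permutations such that (i) $\tau \preceq \tau'$ and $\tau''$ is the $X$-sorted image of $\tau'$ (for $X$ a consecutive pair) implies the $X$-sorted image of $\tau$ is $\preceq \tau''$, and also the $X$-sorted image of $\tau$ is $\preceq$ \emph{any} $X$-transition image of $\tau'$; and (ii) the identity is the unique $\preceq$-minimum. The right candidate is the \emph{inversion-based} order: say $\tau \preceq \tau'$ if the inversion set $\mathrm{Inv}(\tau)=\{(i,j):i<j,\ i\tau>j\tau\}$ is contained in $\mathrm{Inv}(\tau')$ — equivalently $\tau$ is "below" $\tau'$ in the weak Bruhat order on $\cS_n$ generated by adjacent transpositions. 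The plan is to prove, by induction on $j$, that $\tau_j^{\,s} \preceq \tau_j$, i.e. $\mathrm{Inv}(\tau_j^{\,s}) \seq \mathrm{Inv}(\tau_j)$. Then at $j=m$ we get $\mathrm{Inv}(\tau_m^{\,s}) \seq \mathrm{Inv}((1,\ldots,n)) = \eset$, forcing $\tau_m^{\,s}=(1,\ldots,n)$, which is exactly success.

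**Carrying out the induction step.** The base case $j=0$ is trivial since $\tau_0^{\,s}=\tau_0$. For the step, assume $\mathrm{Inv}(\tau_{j-1}^{\,s}) \seq \mathrm{Inv}(\tau_{j-1})$; we must show $\mathrm{Inv}(\tau_j^{\,s}) \seq \mathrm{Inv}(\tau_j)$, where $X_j=\{a,a+1\}$, $\tau_j^{\,s}$ is the $X_j$-sorted image of $\tau_{j-1}^{\,s}$, and $\tau_j$ is $\tau_{j-1}$ or its swap at positions $a,a+1$. Passing from $\tau_{j-1}^{\,s}$ to $\tau_j^{\,s}$ either changes nothing (if already sorted at $a,a+1$) or deletes exactly the single inversion $(a,a+1)$ and touches no other pair; so $\mathrm{Inv}(\tau_j^{\,s})\seq \mathrm{Inv}(\tau_{j-1}^{\,s})$ always, and $\mathrm{Inv}(\tau_j^{\,s})$ never contains $(a,a+1)$. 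If $\tau_j=\tau_{j-1}$, then $\mathrm{Inv}(\tau_j^{\,s})\seq\mathrm{Inv}(\tau_{j-1}^{\,s})\seq\mathrm{Inv}(\tau_{j-1})=\mathrm{Inv}(\tau_j)$ and we are done. The remaining case is $\tau_j$ the swap of $\tau_{j-1}$: here $\mathrm{Inv}(\tau_j)$ and $\mathrm{Inv}(\tau_{j-1})$ differ only in whether they contain $(a,a+1)$. Any pair $(p,q)\neq(a,a+1)$ in $\mathrm{Inv}(\tau_j^{\,s})$ lies in $\mathrm{Inv}(\tau_{j-1}^{\,s})\seq\mathrm{Inv}(\tau_{j-1})$, and since $(p,q)\neq(a,a+1)$ it is also in $\mathrm{Inv}(\tau_j)$; and $(a,a+1)\notin\mathrm{Inv}(\tau_j^{\,s})$. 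Hence $\mathrm{Inv}(\tau_j^{\,s})\seq\mathrm{Inv}(\tau_j)$, completing the induction.

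**Where the subtlety lies.** The argument above is almost entirely bookkeeping once the right invariant — containment of inversion sets — is identified; the one point demanding care is the claim that an adjacent transposition at positions $a,a+1$ changes the inversion set in exactly one coordinate, namely by toggling membership of $(a,a+1)$ and leaving every other pair unchanged. This is the standard elementary fact about adjacent transpositions acting on the \emph{positions} of a permutation, and it is precisely the reason WPPSG$_0[2]$ (adjacent transpositions) is tractable while WPPSG$[2]$ (arbitrary transpositions) is not: a non-adjacent transposition can create or destroy many inversions at once, and the dominance invariant breaks. So the real content is choosing the weak-order invariant; the verification that it is preserved under the two possible moves at each step, and that it pins down the identity as the unique sink, is then routine. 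I expect the main obstacle in writing this cleanly to be stating the single-inversion-toggle lemma in the paper's index-centric notation ($i\tau$ versus positions) without ambiguity.
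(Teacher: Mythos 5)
Your induction step is not valid, and the defect is not merely notational. First, the bookkeeping fact you rely on is false for the inversion set you actually define. With $\mathrm{Inv}(\tau)=\{(i,j):i<j,\ i\tau>j\tau\}$ indexed by \emph{positions}, exchanging the entries in positions $a$ and $a+1$ does not just toggle $(a,a+1)$: for each $i<a$ the pairs $(i,a)$ and $(i,a+1)$ exchange their status, and likewise $(a,j)$ and $(a+1,j)$ for $j>a+1$. The single-toggle statement is correct only for inversions indexed by \emph{values} (the toggled coordinate being the pair of values that are exchanged), but then that pair is $\{a\tau,(a+1)\tau\}$, which is in general a different pair in the greedy run and in the witness run, so the coordinatewise case analysis collapses under either reading. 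Second, and decisively, the inductive step you want is itself false, because the witness is not obliged to sort, or may sort a different pair of values than the greedy step does. Concretely, take $n=3$, $\tau_0=(2,3,1)$, $X_1=X_2=\{2,3\}$, $X_3=\{1,2\}$, $X_4=\{2,3\}$, with witness $(2,3,1)\transition{2,3}(2,3,1)\transition{2,3}(2,1,3)\transition{1,2}(1,2,3)\transition{2,3}(1,2,3)$; this is a YES-instance. After the first step the greedy permutation is $(2,1,3)$ with position-indexed inversion set $\{(1,2)\}$, while the witness still sits at $(2,3,1)$ with inversion set $\{(1,3),(2,3)\}$, so containment fails although it held (with equality) before the step. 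If you instead use value-indexed inversions, take $n=3$, $\tau_0=(3,2,1)$, $X_1=\{1,2\}$, $X_2=\{2,3\}$, $X_3=\{1,2\}$, $X_4=\{2,3\}$ with witness $(3,2,1),(3,2,1),(3,1,2),(1,3,2),(1,2,3)$: after step $1$ the containment $\{\{1,2\},\{1,3\}\}\subseteq\{\{1,2\},\{1,3\},\{2,3\}\}$ still holds, but after step $2$ the greedy run is at $(2,1,3)$ with inverted pair $\{1,2\}$ while the witness is at $(3,1,2)$ with inverted pairs $\{1,3\},\{2,3\}$. So ``the greedy run stays weakly below an arbitrary witnessing run'' is simply not an invariant of the process, and your proof has no valid induction step.

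This is exactly the difficulty the paper's proof is built to circumvent: it does not compare the sorting strategy with a fixed witness move by move, but performs surgery on the witness. Inducting on $m$, if the witness's first move leaves $X_1=\{k_1,k_1+1\}$ unsorted, the paper locates the later step $t$ at which the two offending values $q>p$ are finally exchanged and swaps the roles of $p$ and $q$ throughout $\tau_1,\ldots,\tau_t$; this yields a new witnessing chain whose first permutation is $X_1$-sorted and which coincides with the old one from $\tau_{t+1}$ onwards, after which the induction hypothesis applies. What your approach would need is the statement ``if some completion to the identity through $X_j,\ldots,X_m$ exists from $\tau_{j-1}^{\,s}$, then one exists from its $X_j$-sorted image,'' and proving that is precisely this exchange argument; a dominance comparison against an arbitrary but fixed witness cannot replace it.
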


\begin{proof}
We prove the theorem by induction on $m$. The sorting strategy is clearly
successful on YES-instances with $m=1$. Suppose therefore
$(n,m,X_1,\ldots,X_m,\tau_0)$ is a YES-instance of V2-WPPSG$_0[2]$
with $m \ge 2$. Since each set $X_j$ must must contain two consecutive 
numbers in $[n]$, there exists $k_j \in \{1,\ldots,n-1\}$ such
that $X_j = \{k_j,k_j+1\}$. Because we are talking about a YES-instance,
there must exist permutations $\tau_1,\tau_2,\ldots,\tau_m$ such that
$\tau_0 \transition{X_1} \tau_1$ and
\begin{equation} \label{eq:chain1}
\tau_1 \transition{X_2} \tau_2 \transition{X_3}\ldots\transition{X_m} 
\tau_m = (1, \ldots, n) \enspace .
\end{equation}
By the inductive hypothesis, we may assume that, given $\tau_1$,
the permutations $\tau_2,\ldots,\tau_m$ are chosen according to 
the sorting strategy. If $\tau_1$ were $X_1$-sorted, then the 
theorem would follow directly. Suppose therefore that $\tau_1$ 
is not $X_1$-sorted, i.e., $k_1\tau_1 > (k_1+1)\tau_1$.
For sake of a simple notation, we set $q := k_1\tau_1$ 
and $p := (k_1+1)\tau_1$. Note that $q > p$, $q$ precedes $p$ 
in $\tau_1$ and $p$ precedes $q$ in $\tau_m$. Hence there must 
exist a $t \in [m-1]$ such that $\tau_t$ is of the
form $(v_t,q,p,w_t)$ and $\tau_{t+1}$ is of the form $(v_t,p,q,w_t)$. 
Since (\ref{eq:chain1}) is in accordance with the sorting strategy, 
it must be the case that $q$ precedes $p$ in $\tau_1,\ldots,\tau_t$ 
and $p$ precedes $q$ in $\tau_{t+1},\ldots,\tau_m$.
To complete the proof, it suffices to find a permutation sequence
\begin{equation} \label{eq:new-permutations}
\tau_0 \transition{X_1} \tau'_1 \transition{X_2} \tau'_2
\transition{X_3}\ldots\transition{X_m} \tau'_m = (1, \ldots, n) 
\end{equation}
with the property that $\tau'_1$ is $X_1$-sorted.\footnote{We would like to
emphasize that $\tau'_2,\ldots,\tau'_m$ do not have to be selected
according to the sorting strategy.} We define this sequence
in several stages:
\begin{description}
\item[Stage 1:]
Choose $\tau'_1$ as the unique $X_1$-sorted permutation that coincides 
with $\tau_0$ outside $X_1$. Note that $\tau'_1$ equals $\tau_1$ up to 
an exchange of $p$ and $q$ in the components $k_1$ and $k_1+1$. 
\item[Stage 2:]
Choose the permutations $\tau'_2,\ldots,\tau'_t$ so as to imitate 
the choices of $\tau_2,\ldots,\tau_t$, i.e., for $j=2,\ldots,t$, 
do the following:
\begin{itemize} 
\item
If $\tau_j = \tau_{j-1}$, then set $\tau'_j = \tau'_{j-1}$.
\item
If $\tau_j$ is obtained from $\tau_{j-1}$ by an exchange of the 
components $k_j$ and $k_j+1$, then let $\tau'_j$ be obtained 
from $\tau'_{j-1}$ by the same component exchange. 
\end{itemize}
Note that, in both cases, we have that $\tau'_{j-1} \transition{X_j} \tau'_j$.
\item[Stage 3:]
For $j = t+1,\ldots,m$, set $\tau'_j = \tau_j$.
\end{description}
It is easy to see that, for $j=1,\ldots,t$, the permutation $\tau'_j$ 
can be obtained from $\tau_j$ by an exchange of $p$ and $q$. 
From $\tau_t = (v_t,q,p,w_t)$ and $\tau_{t+1} = (v_t,p,q,w_t)$, 
we can infer that $\tau'_t = (v_t,p,q,w_t) = \tau_{t+1} = \tau'_{t+1}$. 
Thus the permutations $\tau'_j$ and $\tau_j$ are exactly the same 
for $j = t+1,\ldots,m$. In particular $\tau'_m = \tau_m = (1,\ldots,m)$.
We arrived at~(\ref{eq:new-permutations}) with $\tau'_1$ being $X_1$-sorted.
Now the theorem follows by induction.
\end{proof}

\noindent
Here are some immediate implications of Theorem~\ref{th:sorting-strategy}:

\begin{corollary} \label{cor:sorting}
\begin{enumerate}
\item
Let $S$ be a universal sequence of adjacent transpositions on $[n]$
and let $m = |S|$. Then, for each permutation $\tau$ on $[n]$, we 
have that  $I = (n,m,\tilde S,\tau)$ is a YES-instance of V2-WPPSG$_0[2]$. 
\item
Let $X = [a+1:a+n']$ be an integer sub-interval of $[n]$, let $S$ be 
a universal sequence of adjacent transpositions on $X$ and let $m = |S|$. 
Let $\tau_0$ be a permutation on $[n]$ and let $\tau_1,\ldots,\tau_m$
be the permutation sequence chosen by the sorting strategy when the latter
is applied to $\tau_0$ and the $m$ 2-sets in $\tilde S$. Then $\tau_m$ is 
the unique $X$-sorted permutation which coincides 
with $\tau_0$ \hbox{outside $X$}.
\end{enumerate}
\end{corollary}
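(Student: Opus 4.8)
The plan is to derive Corollary~\ref{cor:sorting} directly from Theorem~\ref{th:sorting-strategy} together with the universality of adjacent-transposition sequences (Lemma~\ref{lem:universal-sequence} and its interval version), so that almost nothing new needs to be proved.

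For the first assertion, I would reason as follows. Let $S$ be a universal sequence of adjacent transpositions on $[n]$ with $\tilde S = Y_1,\ldots,Y_m$ its sequence of $2$-sets; each $Y_j$ is a pair of consecutive integers, so $I = (n,m,\tilde S,\tau)$ is indeed an instance of V2-WPPSG$_0[2]$ (more precisely of WPPSG$_0[2]$, which has the same YES-instances by Lemma~\ref{lem:V2-WPPSG}). By universality, there is a subsequence $T$ of $S$ whose composition equals $\tau$, hence $\tau^{-1}$ is the composition of the reversed subsequence $T$ read backwards --- or, even more simply, since $S$ is universal there is a subsequence $T$ composing to $\tau^{-1}$. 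Writing the chosen transpositions as $\sigma_j$ (the $j$-th transposition if it lies in $T$, and $\id$ otherwise), we have $\sigma_1\cdots\sigma_m = \tau^{-1}$, and hence $\tau\sigma_1\cdots\sigma_m = \id$. Setting $\tau_{j} := \tau\sigma_1\cdots\sigma_j$ gives a chain $\tau \transition{Y_1}\tau_1\transition{Y_2}\cdots\transition{Y_m}\tau_m=\id=(1,\ldots,n)$, where each transition step is legitimate because $\sigma_j\in\cS_{Y_j}$ and Lemma~\ref{lem:X-transition} applies. Thus $I$ is a YES-instance of V2-WPPSG.

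For the second assertion, I would use the first one applied to the interval $X=[a+1:a+n']$ in place of $[n]$. Universal sequences on an arbitrary integer interval were noted to exist just after Lemma~\ref{lem:universal-sequence}; its $2$-sets $\tilde S = Y_1,\ldots,Y_m$ are all pairs of consecutive integers inside $X$. Fix a permutation $\tau_0$ on $[n]$ and let $\tau^{\ast}$ be the unique $X$-sorted permutation coinciding with $\tau_0$ outside $X$; I want to show the sorting-strategy sequence ends at $\tau^\ast$. First, $\tau^{\ast}$ is reachable from $\tau_0$ by a chain of $\transition{Y_j}$ steps: by the first part (applied within $X$ and with the roles suitably translated) there is a chain from $\tau_0$ through the $Y_j$'s that sorts the $X$-entries into increasing order while leaving every coordinate outside $X$ fixed, and the unique such sorted permutation is exactly $\tau^{\ast}$. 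Hence $(n,m,\tilde S,\tau_0\tau^{\ast}{}^{-1}\cdot\tau^{\ast})$... more cleanly: the instance obtained by "relabelling so that $\tau^{\ast}$ plays the role of the identity" is a YES-instance of V2-WPPSG$_0[2]$, so by Theorem~\ref{th:sorting-strategy} the sorting strategy succeeds on it, i.e.\ produces $\tau^{\ast}$ as its final permutation. Finally I note the sorting strategy is translation-equivariant: the permutation it picks at each step depends only on the relative order of the entries in the active $2$-set, so running it on $\tau_0$ and $\tilde S$ produces the same sequence whether or not we have relabelled, giving $\tau_m = \tau^{\ast}$.

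The main obstacle is the bookkeeping in the second part: the sorting strategy is defined against the target $(1,\ldots,n)$, whereas here the natural target is $\tau^{\ast}\ne\id$, so I must either justify the relabelling argument carefully (verifying that "$X_j$-sorted" and the transition relation are preserved under composing every permutation in the chain on the right by the fixed permutation $\tau^{\ast}{}^{-1}$, which is where Lemma~\ref{lem:X-transition} and the definition of $X$-sorted do the work, using that $\tau^{\ast}$ fixes the entry-values but only permutes positions within $X$), or else observe directly that the strategy's choices are invariant under this operation. I expect this to be a short but slightly delicate paragraph; the first assertion and the universality input are essentially immediate.
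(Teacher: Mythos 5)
Your overall route is the same as the paper's (part 1 from universality, part 2 by reducing to Theorem~\ref{th:sorting-strategy} via a relabelling that the sorting strategy ``cannot see''), but two points need attention. In part 1, the explicit chain you build is wrong as written. With the paper's left-to-right composition, Lemma~\ref{lem:X-transition} says $\tau_{j-1}\transition{Y_j}\tau_j$ iff $\tau_{j-1}\tau_j^{-1}\in\cS_{Y_j}$, i.e.\ consecutive permutations must differ by an exchange of the \emph{components} $k_j,k_j+1$ (a valid step has the form $\tau_j=\sigma_j^{-1}\tau_{j-1}$). Your definition $\tau_j:=\tau\sigma_1\cdots\sigma_j$ gives $\tau_{j-1}\tau_j^{-1}=\tau_{j-1}\sigma_j\tau_{j-1}^{-1}\in\cS_{Y_j\tau_{j-1}^{-1}}$ (Lemma~\ref{lem1:automorphism}), which is in general not $\cS_{Y_j}$: right multiplication by $\sprod{k_j}{k_j+1}$ swaps the \emph{values} $k_j,k_j+1$ wherever they sit, not the entries in positions $k_j,k_j+1$. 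The conclusion survives via the route you mention only parenthetically, which is the paper's one-line argument: universality gives $\tau=\sigma_1\cdots\sigma_m$ with $\sigma_j\in\cS_{Y_j}$, so $I$ is a YES-instance of WPPSG$_0[2]$ and hence, by Lemma~\ref{lem:V2-WPPSG}, of V2-WPPSG$_0[2]$; if you insist on an explicit chain, use the subsequence composing to $\tau$ and the tail products $\tau_j:=\sigma_{j+1}\cdots\sigma_m$.

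In part 2 the step you defer is the actual content of the proof. Right-composition with $\tau^{*\,-1}$ does preserve each relation $\transition{Y_j}$ (it leaves $\tau'\tau^{-1}$ unchanged), but it does \emph{not} preserve $Y$-sortedness of arbitrary permutations, so ``the strategy's choices are invariant under the relabelling'' needs justification. What makes it true is: (i) the strategy only touches positions inside $X$, so throughout the run the positions of $X$ carry exactly the values $X\tau_0$; and (ii) $\tau^{*\,-1}$ is order-preserving on the value set $X\tau_0$, because $\tau^*$ is $X$-sorted and $X\tau^*=X\tau_0$. Given (i) and (ii), the comparison made at each step has the same outcome before and after relabelling, so the two runs perform identical swaps, the relabelled run ends at $\id$ by Theorem~\ref{th:sorting-strategy} (its start $\tau_0\tau^{*\,-1}$ lies in $\cS_X$, so it is a YES-instance by the part-1 argument for the interval $X$), and hence $\tau_m=\tau^*$. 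This is precisely the role played in the paper's proof by the rank-compressed permutation $\tau'_0$ on $[n']$; as submitted your paragraph asserts the invariance without (i) and (ii), so spell them out to close the gap.
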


\begin{proof}
\begin{enumerate}
\item
The universality of $S$ directly implies that $I$ is a YES-instance 
of V2-WPPSG$_0[2]$.
on $I$.
\item
For reasons of symmetry, we may assume that $a=0$ so that $X = [n']$.
Clearly $\tau_m$ coincides with $\tau_0$ outside $X$. We have to
show that $\tau_m(1) <\ldots< \tau_m(n')$. While this is easy to show,
the formal argument is a bit technical. See Example~\ref{ex:sorting}
below for an illustration. Let $\tau'_0$ be the permutation on $[n']$ 
that is obtained from $\tau_0$ as follows:
for $i=1,\ldots,n'$, set $i\tau'_0 = i'$ if $i\tau_0$ is the 
$i'$-smallest element in $X\tau_0$. Then $i_1\tau'_0 < i_2\tau'_0$ 
iff $i_1\tau_0 < i_2\tau_0$. Loosely speaking, the sorting 
strategy does not realize the substitution of $\tau'_0$ for $\tau_0$. 
Let $\tau'_1,\ldots,\tau'_m$ be the permutation sequence chosen by 
the sorting strategy when the latter is applied to $\tau'_0$ and 
the $m$ 2-sets in $\tilde S$. It now suffices to show that $\tau'_m$
is the identity permutation on $[n']$. But this follows from 
Theorem~\ref{th:sorting-strategy} because, according to the first assertion 
in the corollary, $(n',m,\tilde S,\tau'_0)$ is a YES-instance of WPPSG$_0[2]$,
\end{enumerate}
\end{proof}

\begin{example} \label{ex:sorting}
Let $n = 5$, $X = \{1,2,3\}$, $S = \sprod{1}{2} , \sprod{2}{3} , \sprod{1}{2}$
and $\tau_0 = (3,5,1,4,2)$. Then $\tau'_0 = (2,3,1)$ because $X\tau_0 = \{3,5,1\}$
with $1$ as the smallest member of $X\tau_0$, $3$ as the second- and $5$ as the
third-smallest. 
Let $\tau'_1,\tau'_2,\tau'_3$ be the permutation sequence chosen by 
the sorting strategy when the latter is applied to $\tau'_0$ and $\tilde S$. 
As argued in the proof of Corollary~\ref{cor:sorting}, the permutation $\tau'_3$ 
should be $X$-sorted. This is the case, indeed:
\[
\tau'_0 = (2,3,1) \transition{1,2} \tau'_1 = (2,3,1) \transition{2,3} 
\tau'_2 = (2,1,3) \transition{1,2} \tau'_3 = (1,2,3) \enspace .
\]
Note that the sorting strategy makes the analogous tuple-manipulations
when applied to $\tau_0$ and $\tilde S$:
\[
\tau_0 = (3,5,1,4,2) \transition{1,2} \tau'_1 = (3,5,1,4,2) \transition{2,3} 
\tau'_2 = (3,1,5,4,2) \transition{1,2} \tau'_3 = (1,3,5,4,2) \enspace .
\]
Specifically, it transforms $(3,5,1)$ into the sorted sequence $(1,3,5)$
as it transformed before $(2,3,1)$ into the identity permutation $(1,2,3)$.
\end{example}

\noindent
Combining Theorem~\ref{th:sorting-strategy} with Lemma~\ref{lem:reduction}
and Lemma~\ref{lem:V2-WPPSG}, we obtain the following result:

\begin{corollary}
The problems V2-WPPSG$_0$ and WPPSG$_0$ can be solved efficiently.
\end{corollary}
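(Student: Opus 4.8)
The plan is to chain together the three ingredients the excerpt has already assembled: the polynomial reduction WPPSG$_0 \lepol$ WPPSG$_0[2]$ from Lemma~\ref{lem:reduction}, the equivalence of YES-instance sets given by Lemma~\ref{lem:V2-WPPSG}, and the fact (Theorem~\ref{th:sorting-strategy}) that the sorting strategy correctly decides V2-WPPSG$_0[2]$. First I would observe that the sorting strategy itself is an \emph{efficient} decision procedure for V2-WPPSG$_0[2]$: given an instance $I = (n,m,X_1,\ldots,X_m,\tau_0)$ with each $X_j$ a pair of consecutive integers, one simply computes $\tau_1,\ldots,\tau_m$ one step at a time (each step is an $O(1)$ possible swap of two adjacent components, so the whole run is $O(n+m)$), and then outputs YES iff $\tau_m = (1,\ldots,n)$. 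Correctness in the YES direction is Theorem~\ref{th:sorting-strategy}; correctness in the NO direction is immediate, since if the sorting strategy produces $\tau_m = (1,\ldots,n)$ then the chain $\tau_0 \transition{X_1} \tau_1 \transition{X_2} \cdots \transition{X_m} \tau_m$ witnesses that $I$ is a YES-instance of V2-WPPSG$_0[2]$.

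Next I would handle V2-WPPSG$_0$. By Lemma~\ref{lem:V2-WPPSG}, V2-WPPSG and WPPSG have identical YES-instance sets, and this specializes to the subproblems cut out by the same set-constraints: V2-WPPSG$_0$ and WPPSG$_0$ have the same YES-instances. So it suffices to decide WPPSG$_0$ efficiently. Given a WPPSG$_0$ instance $I$, apply the reduction of Lemma~\ref{lem:reduction} to produce in poly-time an instance $I'$ of WPPSG$_0[2]$ with the same YES/NO status; since each specification interval $X_j$ of length $n_j$ is replaced by a universal sequence $\tilde S_j$ of $\binom{n_j}{2}$ adjacent transpositions, the new instance has $m' = \sum_j \binom{n_j}{2} = O(mn^2)$ specification pairs, so $I'$ has polynomial size. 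Then, invoking Lemma~\ref{lem:V2-WPPSG} once more in the $[2]$-setting, $I'$ (as an instance of WPPSG$_0[2]$) is a YES-instance iff it is a YES-instance of V2-WPPSG$_0[2]$, which we decide by running the sorting strategy as above. Composing: $I$ is a YES-instance of WPPSG$_0$ $\iff$ $I'$ is a YES-instance of WPPSG$_0[2]$ $\iff$ the sorting strategy succeeds on $I'$ viewed in V2-WPPSG$_0[2]$, and the whole pipeline runs in polynomial time.

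Since V2-WPPSG$_0$ and WPPSG$_0$ have the same YES-instances, the same pipeline (with the trivial syntactic pass from a V2-instance to the corresponding classical instance, which is the identity on the data) decides V2-WPPSG$_0$ as well, completing the proof.

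I do not expect a genuine obstacle here, since every nontrivial step is already proved earlier: the only thing to be careful about is bookkeeping — namely, (i) checking that the size blow-up in Lemma~\ref{lem:reduction} is genuinely polynomial (it is, being bounded by $\sum_j \binom{|X_j|}{2} \le m\binom{n}{2}$), and (ii) noting explicitly that Lemma~\ref{lem:V2-WPPSG}'s ``same YES-instances'' statement is constraint-agnostic, so that it applies verbatim to the $0$ and $0[2]$ restrictions (the set-constraints on $X_1,\ldots,X_m$ are preserved because the lemma never modifies the sets). With those two remarks in place the corollary is just the concatenation of the cited results.
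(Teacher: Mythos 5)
Your proof is correct and follows exactly the route the paper intends: the corollary is stated as the combination of Theorem~\ref{th:sorting-strategy}, Lemma~\ref{lem:reduction}, and Lemma~\ref{lem:V2-WPPSG}, which is precisely the pipeline you assemble (reduce WPPSG$_0$ to WPPSG$_0[2]$, pass to the V2 formulation, and run the efficient sorting strategy). Your bookkeeping remarks on the polynomial blow-up and on the constraint-agnostic nature of Lemma~\ref{lem:V2-WPPSG} are accurate and merely make explicit what the paper leaves implicit.
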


Can we solve V2-WPPSG$_0$ efficiently by applying the sorting strategy
directy? We give an affirmative answer to this question:

\begin{theorem} \label{th:WPPSG0}
The sorting strategy is successful on all YES-instances of V2-WPPSG$_0$.
\end{theorem}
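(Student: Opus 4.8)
The plan is to reduce the general case V2-WPPSG$_0$ (where each $X_j$ is an interval, not necessarily of size $2$) to the already-established Theorem~\ref{th:sorting-strategy} for V2-WPPSG$_0[2]$, exploiting the fact that the sorting strategy on an interval behaves exactly like running the sorting strategy along a universal sequence of adjacent transpositions for that interval. This connection is precisely what Corollary~\ref{cor:sorting}(2) supplies: if $X$ is an integer interval and $S$ is a universal sequence of adjacent transpositions on $X$, then applying the sorting strategy to $\tau$ with the single set $X$ produces the same permutation as applying it to $\tau$ with the sequence $\tilde S$ of $2$-sets. So the idea is to ``expand'' each step of a V2-WPPSG$_0$ instance into a block of adjacent-transposition steps, obtaining a V2-WPPSG$_0[2]$ instance on which the sorting strategy does the analogous work.

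Concretely, let $I = (n,m,X_1,\ldots,X_m,\tau_0)$ be a YES-instance of V2-WPPSG$_0$, with each $X_j$ an integer interval of size $n_j \ge 2$. For each $j$, let $S_j$ be a universal sequence of adjacent transpositions on $X_j$ (which exists by Lemma~\ref{lem:universal-sequence} applied to the interval $X_j$), let $\tilde S_j$ be the corresponding sequence of $2$-sets, and form the V2-WPPSG$_0[2]$ instance $I'$ by replacing each $X_j$ with the block $\tilde S_j$. By Lemma~\ref{lem:reduction} (together with Lemma~\ref{lem:V2-WPPSG}, which says WPPSG and V2-WPPSG share YES-instances, and likewise for the $[2]$-variants), $I'$ is a YES-instance of V2-WPPSG$_0[2]$. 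The first key step is then Theorem~\ref{th:sorting-strategy}: the sorting strategy is successful on $I'$, so the permutation sequence it produces on $I'$ ends at $(1,\ldots,n)$.

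The second — and main — step is to check that the sorting strategy on $I$ and the sorting strategy on $I'$ produce ``the same'' intermediate permutations, in the sense that the permutation reached by the sorting strategy on $I$ after processing $X_j$ equals the permutation reached by the sorting strategy on $I'$ after finishing the block $\tilde S_j$. This is an induction on $j$: writing $\tau_j$ for the $I$-iterate and $\tau_j^{*}$ for the $I'$-iterate at the end of block $j$, the inductive hypothesis gives $\tau_{j-1} = \tau_{j-1}^{*}$; then the sorting strategy on $I$ replaces $\tau_{j-1}$ by the unique $X_j$-sorted permutation agreeing with $\tau_{j-1}$ outside $X_j$, while the sorting strategy on $I'$ applied to the block $\tilde S_j$ does exactly the same thing by Corollary~\ref{cor:sorting}(2) (applied with $\tau_0$ there set to $\tau_{j-1}$ and $X$ set to $X_j$). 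Hence $\tau_j = \tau_j^{*}$, and in particular $\tau_m = \tau_m^{*} = (1,\ldots,n)$ by the first step, so the sorting strategy is successful on $I$.

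The step I expect to require the most care is the second one: Corollary~\ref{cor:sorting}(2) as stated fixes $\tau_0$ as the input to the sorting strategy, so one must verify that it applies verbatim with an arbitrary permutation $\tau_{j-1}$ in place of $\tau_0$ — which it does, since nothing in that corollary's proof used special properties of $\tau_0$. One should also be slightly careful that the sorting strategy on $I'$, when it enters block $j$, really is starting from the permutation $\tau_{j-1}^{*}$ that it reached at the end of block $j-1$ (this is just the definition of the sorting strategy, processing sets left to right), and that the block structure of $I'$ lines up with the sets of $I$ as intended. Beyond these bookkeeping points the argument is routine, and the theorem follows by combining the two steps.
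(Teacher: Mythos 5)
Your proposal is correct and follows essentially the same route as the paper: expand each interval $X_j$ into a universal sequence of adjacent transpositions (Lemma~\ref{lem:reduction}), apply Theorem~\ref{th:sorting-strategy} to the resulting V2-WPPSG$_0[2]$ instance, and show by induction on $j$ via Corollary~\ref{cor:sorting}(2) that the sorting-strategy iterate on $I$ after $X_j$ coincides with the iterate on $I'$ after block $\tilde S_j$. Your cautionary remark about Corollary~\ref{cor:sorting}(2) is handled exactly as in the paper, which invokes it with $\tau_0 := \tau'_{j-1}$; the corollary is stated for an arbitrary starting permutation, so no extra argument is needed.
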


\begin{proof}
Consider a YES-instance $I = (n,m,X_1,\ldots,X_m,\tau_0)$ of V2-WPPSG$_0$.
Then the sets $X_1,\ldots,$ $X_m$ are integer intervals. For $j = 1,\ldots,m$,
let $S_j$ and $\tilde S_j$ be defined as in the proof of 
Lemma~\ref{lem:reduction}. As in this proof, let $I'$ be the instance
obtained from $I$ by substituting $\tilde S_j$ for $X_j$. We know from 
Lemma~\ref{lem:reduction} that $I'$ is a YES-instance 
of V2-WPPSG$_0[2]$. Let
\[ 
\tau_0 \transition{X_1} \tau_1 \transition{X_2} \ldots \transition{X_m} \tau_m 
\]
be the permutation sequence chosen by the sorting strategy when
the latter is applied to $\tau_0$ and $X_1,\ldots,X_m$. It follows 
that $\tau_j$ is the unique $X_j$-sorted permutation that outside $X_j$ 
coincides with $\tau_{j-1}$. Let $\tau'_0,\tau'_1,\ldots,\tau'_m$ be 
a permutation sequence that is defined inductively as follows:
\begin{enumerate}
\item $\tau'_0 = \tau_0$.
\item 
For $j = 1,\ldots,m$, let $\tau'_j$ be the final permutation that we get 
after the sorting strategy has been applied to $\tau'_0$ and the 2-sets
$\tilde S_1,\ldots,\tilde S_j$.
\end{enumerate}
\begin{description}
\item[Claim:]
For $j = 0,1,\ldots,m$, we have that $\tau'_j = \tau_j$. 
\item[Proof of the Claim:]
The claim is clearly true for $j=0$. Assume inductively 
that $\tau'_{j-1} = \tau_{j-1}$. As mentioned above already,
the permutation $\tau_j$ is the unique $X_j$ sorted permutation
which coincides with $\tau_{j-1}$ outside $X_j$. Applying 
Corollary~\ref{cor:sorting} for $X_0 := X_j$, $S := S_j$ 
and $\tau_0 := \tau'_{j-1}$, we see that $\tau'_j$ is the unique 
$X_j$-sorted permutation which coincides with $\tau'_{j-1}$ 
outside $X_j$. Because $\tau'_{j-1} = \tau_{j-1}$, we obtain now 
that $\tau'_j = \tau_j$ which concludes the proof of the claim.
\end{description}
Because of the identity $\tau'_m = \tau_m$, the proof of the theorem is 
easy to accomplish. Since, as already mentioned above, $I'$ is a
YES-instance of WPPSG$_0[2]$, we may apply Theorem~\ref{th:sorting-strategy}
and conclude that $\tau'_m = (1,\ldots,m)$. Because of $\tau'_m = \tau_m$,
we have that $\tau_m = (1,\ldots,m)$. But this means that the sorting strategy
is successful on $I$.
\end{proof}

\section{Exploitation of the C1-Property}

We have seen in Section~\ref{subsec:sorting-strategy} that the problem
WPPSG$_0$ can be solved efficiently by means of the sorting strategy.
In this section, we show that even the more general problem WPPSG$_1$
can be solved efficiently. Recall that WPPSG$_1$, by definition, is the 
restriction of WPPSG to specification sets which have the C1-property. 
This will enable us to obtain an algorithm that solves WPPSG$_1$ efficiently 
by combining a well known algorithm of Booth and Lueker~\cite{BL1976} 
with the sorting strategy. 

The proofs of Lemmas~\ref{lem1:automorphism} and~\ref{lem2:automorphism}
below exploit that, for every $\pi \in \cS_n$, 
the mapping $\sigma \mapsto \pi^{-1}\sigma\pi$ is an inner automorphism 
of the group $\cS_n$. The proofs are straightforward but given here for 
the sake of completeness. 

\begin{lemma}[Common Knowledge] \label{lem1:automorphism}
Let $X$ be a subset of $[n]$ and let $\sigma , \pi$ be permutations
on $[n]$. Then $\sigma \in \cS_X$ 
iff $\pi^{-1}\sigma\pi \in \cS_{X\pi}$. 
\end{lemma}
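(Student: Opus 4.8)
The plan is to prove both implications directly by unwinding the definition of $\cS_X$ in terms of fixed points, using the fact that $\pi$ maps $[n]\sm X$ bijectively onto $[n]\sm X\pi$. Recall that $\sigma \in \cS_X$ means precisely that $i\sigma = i$ for every $i \in [n]\sm X$, and likewise $\pi^{-1}\sigma\pi \in \cS_{X\pi}$ means $k(\pi^{-1}\sigma\pi) = k$ for every $k \in [n]\sm X\pi$. So the whole statement is really the equivalence of these two fixed-point conditions.

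First I would establish the bridge between the two index sets: since $\pi$ is a bijection on $[n]$, it restricts to a bijection from $[n]\sm X$ onto $X^c\pi = [n]\sm X\pi$ (using the paper's convention that $X\pi = \{i\pi : i \in X\}$). Hence every $k \in [n]\sm X\pi$ can be written uniquely as $k = i\pi$ with $i \in [n]\sm X$, and conversely. With this in hand, for the forward direction I would assume $\sigma \in \cS_X$, take an arbitrary $k \in [n]\sm X\pi$, write $k = i\pi$ with $i \in [n]\sm X$, and compute
\[
k(\pi^{-1}\sigma\pi) = i\pi\pi^{-1}\sigma\pi = i\sigma\pi = i\pi = k \enspace ,
\]
where the middle equality uses $i\sigma = i$ because $i \notin X$. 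This shows $\pi^{-1}\sigma\pi$ fixes everything outside $X\pi$, i.e. $\pi^{-1}\sigma\pi \in \cS_{X\pi}$.

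For the converse I would run essentially the same computation in reverse. Assume $\pi^{-1}\sigma\pi \in \cS_{X\pi}$ and pick an arbitrary $i \in [n]\sm X$; then $i\pi \in [n]\sm X\pi$, so $(i\pi)(\pi^{-1}\sigma\pi) = i\pi$. Expanding the left side gives $i\sigma\pi = i\pi$, and cancelling $\pi$ (it is invertible) yields $i\sigma = i$. Thus $\sigma$ fixes everything outside $X$, i.e. $\sigma \in \cS_X$.

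There is no real obstacle here; the only thing one must be careful about is the direction in which compositions act — the paper applies permutations on the right, so $i(\pi^{-1}\sigma\pi)$ means ``apply $\pi^{-1}$, then $\sigma$, then $\pi$'' — and the bookkeeping that $\pi$ sends $[n]\sm X$ onto $[n]\sm X\pi$ rather than onto some other set. One could alternatively note that this is just the statement that the inner automorphism $\sigma \mapsto \pi^{-1}\sigma\pi$ carries the pointwise stabilizer of $[n]\sm X$ to the pointwise stabilizer of $([n]\sm X)\pi = [n]\sm X\pi$, which is a standard fact about conjugation of point stabilizers; but the explicit two-line computation above is self-contained and is the cleaner thing to write out.
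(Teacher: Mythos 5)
Your proof is correct and takes essentially the same approach as the paper: both directions amount to the fixed-point computation $i\pi\pi^{-1}\sigma\pi = i\sigma\pi = i\pi$ for points outside the relevant set, using that $\pi$ carries $[n]\sm X$ onto $[n]\sm X\pi$. The only cosmetic difference is in the converse, where the paper re-applies its forward direction to $(\pi^{-1},\,\pi^{-1}\sigma\pi,\,X\pi)$ in place of $(\pi,\sigma,X)$ while you compute directly and cancel $\pi$; both are equally valid.
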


\begin{proof}
Suppose first that $\sigma \in \cS_X$.
Choose an arbitrary $i \in [n] \sm (X\pi)$.
Then $i\pi^{-1} \notin X$ so that $i\pi^{-1}\sigma = i\pi^{-1}$.
Hence $i\pi^{-1}\sigma\pi = i\pi^{-1}\pi = i$.
This shows that $\pi^{-1}\sigma\pi \in \cS_{X\pi}$. \\ 
Suppose now that $\sigma' := \pi^{-1}\sigma\pi \in \cS_{X'}$ 
for $X' := X\pi$. We conclude from the first part of the proof
that $\pi\sigma'\pi^{-1} \in \cS_{X'\pi^{-1}}$. Since 
$\pi\sigma'\pi^{-1} = \pi(\pi^{-1}\sigma\pi)\pi^{-1} = \sigma$
and $X'\pi^{-1} = X\pi\pi^{-1} = X$, it follows 
that $\sigma \in \cS_X$.
\end{proof}

Let $I = (n,m,X_1,\ldots,X_m,\tau)$ be an instance of WPPSG and let $\pi$ 
be a permutation \hbox{on $[n]$}. We set
\begin{equation} \label{eq:equal-up-to}
I^\pi := (n,m,X_1\pi,\ldots,X_m\pi , \pi^{-1} \tau \pi) 
\enspace .
\end{equation}
We say that \emph{$I'$ equals $I$ up to renumbering} if $I' = I^\pi$
for some permutation $\pi$ on $[n]$. It is easily verified 
that $I^{\id} = I$ and $(I^{\pi_1})^{\pi_2} = I^{\pi_1\pi_2}$.
It follows that $(I^\pi)^{\pi^{-1}} = I^{\id} = I$.
It follows furthermore that being equal up to renumbering is an equivalence
relation so that it partitions the instances of WPPSG into equivalence classes.
The following result is quite obvious:

\begin{lemma}[Common Knowledge] \label{lem2:automorphism}
If $I = (n,m,X_1,\ldots,X_m,\tau)$ is a YES-instance of WPPSG, 
then every instance that equals $I$ up to renumbering is a 
YES-instance too.
\end{lemma}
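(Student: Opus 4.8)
The plan is to unwind the definition of a YES-instance and apply the inner-automorphism machinery of Lemma~\ref{lem1:automorphism} componentwise. So suppose $I = (n,m,X_1,\ldots,X_m,\tau)$ is a YES-instance of WPPSG. By definition there exist permutations $\sigma_1 \in \cS_{X_1},\ldots,\sigma_m \in \cS_{X_m}$ with $\tau = \sigma_1 \ldots \sigma_m$. Fix a permutation $\pi$ on $[n]$; I want to exhibit witnesses for $I^\pi = (n,m,X_1\pi,\ldots,X_m\pi,\pi^{-1}\tau\pi)$. The natural candidates are the conjugates $\sigma'_j := \pi^{-1}\sigma_j\pi$ for $j = 1,\ldots,m$.

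Two things then have to be checked. First, membership in the right symmetric groups: since $\sigma_j \in \cS_{X_j}$, Lemma~\ref{lem1:automorphism} (applied with $X = X_j$) gives $\sigma'_j = \pi^{-1}\sigma_j\pi \in \cS_{X_j\pi}$, which is exactly the $j$-th specification set of $I^\pi$. Second, the product condition: because the intermediate $\pi\pi^{-1}$ factors telescope,
\[
\sigma'_1 \ldots \sigma'_m
= (\pi^{-1}\sigma_1\pi)(\pi^{-1}\sigma_2\pi)\ldots(\pi^{-1}\sigma_m\pi)
= \pi^{-1}(\sigma_1 \ldots \sigma_m)\pi
= \pi^{-1}\tau\pi \enspace ,
\]
which is the target permutation of $I^\pi$. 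Hence $\sigma'_1,\ldots,\sigma'_m$ witness that $I^\pi$ is a YES-instance, and since every instance equal to $I$ up to renumbering is of the form $I^\pi$, the claim follows.

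There is no real obstacle here; the only points requiring a word are that conjugation distributes over the composition (the telescoping identity above) and that Lemma~\ref{lem1:automorphism} is exactly the statement needed to transport each $\sigma_j$ into $\cS_{X_j\pi}$. I would also remark, for later use, that since being equal up to renumbering is an equivalence relation and $I = (I^\pi)^{\pi^{-1}}$, the implication reverses: $I^\pi$ is a YES-instance \emph{iff} $I$ is, so each equivalence class consists entirely of YES-instances or entirely of NO-instances — i.e., this relation respects WPPSG in the sense of the general pattern from the introduction.
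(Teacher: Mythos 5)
Your proof is correct and follows essentially the same route as the paper: conjugate each witness $\sigma_j$ by $\pi$, use the telescoping identity to get $\pi^{-1}\tau\pi = (\pi^{-1}\sigma_1\pi)\ldots(\pi^{-1}\sigma_m\pi)$, and invoke Lemma~\ref{lem1:automorphism} to place each conjugate in $\cS_{X_j\pi}$. Your closing remark about the equivalence class consisting entirely of YES- or NO-instances is a useful observation the paper also relies on later, but it is not needed for the lemma itself.
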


\begin{proof}
Suppose first that $I$ is a YES-instance of WPPSG. Then there 
exist $\sigma_1\in\cS_{X_1},\ldots,\sigma_m \in\cS_{X_m}$ 
such that $\tau = \sigma_1 \ldots \sigma_m$. Pick now an
arbitrary but fixed permutation $\pi \in \cS_n$.
We have to show that $I^\pi = (n,m,X_1\pi,\ldots,X_m\pi,\pi^{-1}\tau\pi)$ is a YES-instance of WPPSG. Note first that
\[
\pi^{-1} \tau \pi =
\pi^{-1} \sigma_1 \sigma_2 \ldots \sigma_m \pi =
(\pi^{-1} \sigma_1 \pi) (\pi^{-1} \sigma_2 \pi) 
\ldots (\pi^{-1} \sigma_m \pi) \enspace .
\]
Moreover, according to Lemma~\ref{lem1:automorphism}, 
we have that $\pi^{-1} \sigma_j \pi \in \cS_{X_j\pi}$. 
Thus $I^\pi$ is a YES-instance.  
\end{proof}

\begin{description}
\item[Observation:]
Let $I = (n,m,X_1,\ldots,X_m,\tau_0)$ be an instance of WPPSG.
There exists an instance $I'$ of WPPSG$_0$ that equals $I$
up to renumbering iff the sets $X_1,\ldots,X_m \seq [n]$ have 
the C1-property. More precisely, for each permutation $\pi$ 
on $[n]$, the following three assertions are equivalent:
\begin{enumerate}
\item
$I^\pi$ is an instance of WPPSG$_0$ 
\item 
$\pi$ is a permutation on $[n]$ such that, for $j=1,\ldots,m$,
the set $X_j\pi \seq [n]$ consists of consecutive integers.
\item
$\pi$ is a permutation on $[n]$ such that, for $j=1,\ldots,m$,
the elements of $X_j$ occur consecutively within $\pi^{-1}$.
\end{enumerate}
\end{description}
The equivalence between the first two assertions follows directly
from the definitions of $I^\pi$ and WPPSG$_0$. The equivalence 
of the last two assertions is immediate from Remark~\ref{rem:C1P}.
We are now in the position to prove the following result:

\begin{theorem}
The problem WPPSG$_1$ can be solved efficiently.
\end{theorem}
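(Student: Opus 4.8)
The plan is to reduce WPPSG$_1$ to WPPSG$_0$ using the ``equal up to renumbering'' equivalence relation, following exactly the abstract pattern described in the introduction with $\DP = $ WPPSG, $\DP_0 = $ WPPSG$_0$, and $R$ being the relation ``equals $I$ up to renumbering.'' Lemma~\ref{lem2:automorphism} already shows that $R$ respects WPPSG (each equivalence class consists entirely of YES-instances or entirely of NO-instances), and the Observation immediately preceding the theorem shows that an instance $I$ is ``nice'' --- i.e.\ its class contains a WPPSG$_0$ instance --- precisely when the specification sets $X_1,\ldots,X_m$ have the C1-property, which is exactly the defining restriction of WPPSG$_1$. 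So every instance of WPPSG$_1$ is nice, and the only work left is to make the three steps of the pattern algorithmic.

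Concretely, the algorithm I would give, on input $I = (n,m,X_1,\ldots,X_m,\tau_0)$ with the sets having the C1-property, is: first run the Booth--Lueker algorithm~\cite{BL1976} in time $O(mn)$ to obtain a permutation witnessing the C1-property; by Remark~\ref{rem:C1P} and the Observation, this yields in polynomial time a permutation $\pi$ on $[n]$ such that each $X_j\pi$ is a set of consecutive integers (one takes the Booth--Lueker permutation to be $\pi^{-1}$, or inverts as needed). Second, form $I^\pi = (n,m,X_1\pi,\ldots,X_m\pi,\pi^{-1}\tau_0\pi)$, which by the Observation is a genuine instance of WPPSG$_0$ and can plainly be written down in time $O(mn)$. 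Third, invoke the efficient algorithm for WPPSG$_0$ --- which exists by the Corollary following Theorem~\ref{th:sorting-strategy} (equivalently, by Theorem~\ref{th:WPPSG0} applied to the V2-reformulation via Lemma~\ref{lem:V2-WPPSG}) --- to decide whether $I^\pi$ is a YES-instance. Output this answer.

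For correctness: since $I^\pi = I^\pi$ and $I = (I^\pi)^{\pi^{-1}}$, the instances $I$ and $I^\pi$ are equal up to renumbering, so by Lemma~\ref{lem2:automorphism} (applied in both directions) $I$ is a YES-instance iff $I^\pi$ is a YES-instance. Hence the algorithm's output is correct. For efficiency: the Booth--Lueker step is $O(mn)$, the construction of $I^\pi$ is $O(mn)$, and the WPPSG$_0$ subroutine runs in polynomial time; the total is polynomial (indeed, with a careful implementation of the sorting strategy, near-linear). I do not anticipate a genuine obstacle here --- the substance of the result is entirely carried by the earlier lemmas --- but the one point that deserves care in the write-up is bookkeeping about $\pi$ versus $\pi^{-1}$: the C1-property is stated in terms of ``elements of $X_j$ occur consecutively within $\pi$,'' which by Remark~\ref{rem:C1P} means $X_j\pi^{-1}$ is consecutive, whereas the Observation requires $X_j\pi$ consecutive; so the permutation fed to $I^{(\cdot)}$ must be the inverse of the one the C1-test naturally returns. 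Getting that inversion right (and noting it costs only $O(n)$) is the only place one can slip.
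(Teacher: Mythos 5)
Your proposal is correct and follows essentially the same route as the paper: Booth--Lueker to test the C1-property and extract a renumbering permutation, then Lemma~\ref{lem2:automorphism} plus the Observation to reduce to a WPPSG$_0$ instance $I^\pi$ decided by the sorting strategy. Your explicit care about feeding the inverse of the Booth--Lueker permutation into $I^{(\cdot)}$ is a detail the paper leaves implicit, but it does not change the argument.
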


\begin{proof}
We use the method of Booth and Lueker~\cite{BL1976}, to test
whether the sets $X_1,\ldots,X_m$ have the C1-property
and, if applicable, to get a permutation $\pi$ such 
that $I^\pi$ is an instance of WPPSG$_0$. If $X_1,\ldots,X_m$
do not have the C1-property, then $I$ is rejected. Otherwise,
if $X_1,\ldots,X_m$ have the C1-property, then we use sorting 
strategy to decide whether $I^\pi$ is a YES-instance. 
According to Lemma~\ref{lem2:automorphism} (and because equality
up to renumbering is an equivalence relation), this is the case 
iff $I$ is a YES-instance.
\end{proof}

The following example\footnote{Example~\ref{ex:C1P} below is a slight
modification of the example shown in Fig.~8.3 in~\cite{G1980}.}
shows that there exists a collection $X_1,\ldots,X_m \seq [n]$ 
with the C1-property on which a direct application of the sorting 
strategy may fail:

\begin{example} \label{ex:C1P}
Consider the collection
\[ 
X_1 = \{1,4\}\ ,\ X_2 = \{2,4\}\ ,\  X_3 = \{2,3,5\}\ ,\  X_4 = \{1,2,4,5\} 
\enspace .
\]
Set $I = (5,4,X_1,X_2,X_3,X_4,\tau)$ for $\tau = (1,3,4,2,5)$.
An inspection of
\[
(1,3,4,2,5) \transition{1,4} (1,3,4,2,5) \transition{2,4} (1,3,4,2,5)
\transition{2,3,5} (1,4,3,2,5) \transition{1,2,4,5} (1,2,3,4,5) 
\]
shows that $I$ is a YES-instance of V2-WPPSG.
Choose now $\pi = (1,4,2,5,3)$ so that $\pi^{-1} = (1,3,5,2,4)$.
The sets
\[
X_1\pi^{-1} = \{1,2\}\ ,\ X_2\pi^{-1} = \{2,3\}\ ,\  X_3\pi^{-1} = \{3,4,5\}\ ,\  
X_4\pi^{-1} = \{1,2,3,4\} 
\]
consist of consecutive integers, respectively. Thus $\{X_1,X_2,X_3,X_4\}$
has the C1-property. Thus any $I = (5,4,X_1,X_2,X_3,X_4,\tau)$
with $\tau \in \cS_5$ is an input instance of WPPSG$_1$ and the
sorting strategy is fine when applied to $I^\pi$. However, 
for $\tau = (1,3,4,2,5)$, the sorting strategy fails on $I$ because 
it leads to the following sequence of permutations:
\[
(1,3,4,2,5) \transition{1,4} (1,3,4,2,5) \transition{2,4} (1,2,4,3,5)
\transition{2,3,5} (1,2,4,3,5) \transition{1,2,4,5} (1,2,4,3,5)
\enspace .
\]
\end{example}

\section{A More General Subproblem of WPPSG} 
\label{sec:large-subproblem}

In Section~\ref{subsec:transformation}, transformations of input
instances are brought into play. They will have the property that
the transformation $I'$ of an instance $I$ is a YES-instance
of WPPSG iff the original \hbox{instance $I$} is a YES-instance. 
In Section~\ref{subsec:nice-instances} the subproblem WPPSG$_2$
is defined as the problem WPPSG restricted to ``nice instances''
where an instance is called \emph{nice} if it can be transformed 
to some input instance of WPPSG$_0$. As we shall see, instances 
of WPPSG$_1$ are very special cases of nice instances. 
Therefore the problem WPPSG$_2$ is significantly more general 
than WPPSG$_1$. It will turn out furthermore that niceness 
can be tested efficiently. The main building blocks of such 
a test are presented in Section~\ref{subsec:niceness-test}.

\subsection{Transformation of Input Instances} 
\label{subsec:transformation}

As usual, let $I = (n,m,X_1,\ldots,X_m,\tau)$ an arbitrary but fixed instance 
of WPPSG.  Let $j' \in [m]$ and let $\varphi \in \cS_{X_{j'}}$.
The \emph{$(j',\varphi)$-transformation of $I$}, denoted by $I[j',\varphi]$
in the sequel, is given by $I[j',\varphi] = (n,m,X'_1,\ldots,X'_m,\tau')$ 
where  
\begin{equation} \label{eq:admissibility}
\tau' = \tau \varphi\ \mbox{ and }\ 
X'_j = \left\{ \begin{array}{ll}
        X_j & \mbox{if $j \le j'$} \\
        X_j\varphi & \mbox{if $j > j'$}
       \end{array} \right. \enspace .
\end{equation}
To avoid unnecessary case distinctions, we set $I[0,\varphi] = I^\varphi$
and $X_0 = [n]$. 
Note that $I[j,\id] = I$ and $(I[j,\varphi])[j,\psi] = I[j,\varphi\psi]$.
It follows that $(I[j,\varphi])[j,\varphi^{-1}] = I[j,\id] = I$.
We call $I'$ a \emph{valid elementary transformation of $I$} and 
write $I \ra I'$ if $I' = I[j,\varphi]$ for some $j \in [m]\cup\{0\}$ 
and some $\varphi \in \cS_{X_j}$. We refer to $\varphi \in \cS_{X_j}$
as the \emph{validity condition}. Note that $\ra$ is a reflexive and 
symmetric relation.

\begin{lemma} \label{lem:elementary-transformation}
With the above notations the following hold.
If $I = (n,m,X_1,\ldots,X_m,\tau)$ is a YES-instance of WPPSG, 
then every valid elementary transformation of $I$ is a YES-instance too.
\end{lemma}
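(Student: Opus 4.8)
The plan is to split into the two regimes of the definition: the easy case $j'=0$, where $I[0,\varphi]=I^\varphi$ equals $I$ up to renumbering and the claim is already handed to us by Lemma~\ref{lem2:automorphism}; and the genuine case $j'\in[m]$, where we must show directly that a factorization of $\tau$ through $\cS_{X_1},\ldots,\cS_{X_m}$ can be massaged into a factorization of $\tau'=\tau\varphi$ through $\cS_{X'_1},\ldots,\cS_{X'_m}$. So assume $I$ is a YES-instance and fix $\sigma_1\in\cS_{X_1},\ldots,\sigma_m\in\cS_{X_m}$ with $\tau=\sigma_1\ldots\sigma_m$.

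The core computation is to insert $\varphi$ at the right spot and push it through the tail factors by conjugation. Since $\varphi\in\cS_{X_{j'}}$, write
\[
\tau' \;=\; \tau\varphi \;=\; \sigma_1\ldots\sigma_{j'}\bigl(\sigma_{j'+1}\ldots\sigma_m\bigr)\varphi
\;=\; \sigma_1\ldots\sigma_{j'}\,\varphi\,\bigl(\varphi^{-1}\sigma_{j'+1}\varphi\bigr)\ldots\bigl(\varphi^{-1}\sigma_m\varphi\bigr).
\]
Now I would define the witnesses for $I[j',\varphi]$ by $\sigma'_j=\sigma_j$ for $j<j'$, $\sigma'_{j'}=\sigma_{j'}\varphi$, and $\sigma'_j=\varphi^{-1}\sigma_j\varphi$ for $j>j'$. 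The displayed identity gives $\tau'=\sigma'_1\ldots\sigma'_m$, so it only remains to check each $\sigma'_j\in\cS_{X'_j}$. For $j<j'$ this is immediate since $X'_j=X_j$. For $j=j'$ we have $\sigma_{j'}\in\cS_{X_{j'}}$ and $\varphi\in\cS_{X_{j'}}$, hence $\sigma_{j'}\varphi\in\cS_{X_{j'}}=\cS_{X'_{j'}}$ because $\cS_{X_{j'}}$ is a subgroup. For $j>j'$, we have $\sigma_j\in\cS_{X_j}$, so Lemma~\ref{lem1:automorphism} (with $\pi:=\varphi$) yields $\varphi^{-1}\sigma_j\varphi\in\cS_{X_j\varphi}=\cS_{X'_j}$. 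That finishes the case $j'\in[m]$.

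The one point that needs a line of care — the closest thing to an obstacle — is the index bookkeeping at $j=j'$: the definition in~(\ref{eq:admissibility}) leaves $X'_{j'}=X_{j'}$ (the split is $j\le j'$ versus $j>j'$), so $\varphi$ must be absorbed into $\sigma_{j'}$ rather than into $\sigma_{j'+1}$, which is exactly why the factorization above is grouped as $\sigma_1\ldots\sigma_{j'}\,\varphi\,\ldots$ and not $\sigma_1\ldots\sigma_{j'-1}\,\varphi\,\ldots$. Also, since $\ra$ is stated to be symmetric, one could alternatively deduce the $j'\in[m]$ case from the $j'=0$ case plus the observation that the inverse transformation $(I[j',\varphi])[j',\varphi^{-1}]=I$; but the direct argument above is cleaner. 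Finally, the $j'=0$ clause: $I[0,\varphi]=I^\varphi$ is a YES-instance by Lemma~\ref{lem2:automorphism}, so all cases are covered.
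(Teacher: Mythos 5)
Your proposal is correct and follows essentially the same route as the paper's own proof: handle $j'=0$ via Lemma~\ref{lem2:automorphism}, then for $j'\ge1$ use exactly the witnesses $\sigma'_j=\sigma_j$ ($j<j'$), $\sigma'_{j'}=\sigma_{j'}\varphi$, $\sigma'_j=\varphi^{-1}\sigma_j\varphi$ ($j>j'$), verifying membership by the subgroup property and Lemma~\ref{lem1:automorphism}. No gaps; the index bookkeeping at $j=j'$ is handled exactly as in the paper.
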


\begin{proof}
Choose an arbitrary valid elementary transformation $I[j',\varphi]$ of $I$.
Since Lemma~\ref{lem2:automorphism} covers the case $j'=0$, we may assume 
that $j'\ge1$. Suppose that there 
exist $\sigma_1\in\cS_{X_1},\ldots,\sigma_m\in\cS_{X_m}$ such 
that $\tau = \sigma_1 \ldots \sigma_m$. For $j = 1,\ldots,m$, 
we define
\begin{equation} \label{eq:transformation}
\sigma'_j = \left\{ \begin{array}{ll}
           \sigma_j & \mbox{if $j<j'$} \\
           \sigma_{j'} \varphi & \mbox{if $j=j'$} \\
           \varphi^{-1} \sigma_j \varphi & \mbox{if $j>j'$}
            \end{array} \right. \enspace .
\end{equation}
The following calculation shows that $\tau' = \tau\varphi$ can be
written as a product of the permutations $\sigma'_1,\ldots\sigma'_m$:
\begin{eqnarray*}
\tau \varphi & = &
\sigma_1 \ldots \sigma_{j'-1} \sigma_{j'} 
\sigma_{j'+1} \ldots \sigma_m \varphi \\ 
& = & 
\sigma_1 \ldots \sigma_{j'-1} (\sigma'_{j'} \varphi^{-1}) 
\sigma_{j'+1} \ldots \sigma_m \varphi) \\
& = &
\sigma_1 \ldots \sigma_{j'-1} (\sigma'_{j'} \varphi^{-1})
(\varphi \sigma'_{j'+1} \varphi^{-1})
\ldots (\varphi \sigma'_m \varphi^{-1}) \varphi\ =\    
\sigma'_1 \ldots\sigma'_m \enspace .
\end{eqnarray*}
Remember that $\varphi \in \cS_{X_{j'}}$ and $\sigma_j \in \cS_{X_j}$ 
for any choice of $j \in [m]$. 
We claim that $\sigma'_j \in \cS_{X'_j}$ (with $\sigma'_j$
as given by~(\ref{eq:transformation}) and $X'_j$ as given
by~(\ref{eq:admissibility})). For $j<j'$, 
this holds because $\sigma'_j = \sigma_j$ and $X'_j = X_j$.
For $j = j'$, this holds because  $X'_{j'} = X_{j'}$ and
$\sigma'_{j'} = \sigma_{j'}\varphi$ is the composition of two 
permutations in $\cS_{X'_{j'}}$ and therefore itself a permutation 
in $\cS_{X'_{j'}}$. For $j>j'$, it holds 
because $\sigma'_j = \varphi^{-1}\sigma_j\varphi$ 
and $X'_j = X_j\varphi$ so that we may apply Lemma~\ref{lem1:automorphism}.
This discussion shows that $I[j',\varphi]$ is a YES-instance.
\end{proof}

We call $I'$ a \emph{transformation of $I$} and write $I \transition{*} I'$, 
if $I'$ can be obtained from $I$ by a chain of valid elementary transformations. 
Note that $\transition{*}$ is an equivalence relation. 
From Lemma~\ref{lem:elementary-transformation}, we immediately obtain
the following result:

\begin{lemma} \label{lem:transformation}
If $I$ is a YES-instance of WPPSG, then every transformation of $I$
is a YES-instance too.
\end{lemma}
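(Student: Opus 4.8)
The plan is to reduce this immediately to Lemma~\ref{lem:elementary-transformation} via a trivial induction on the length of a chain of valid elementary transformations. Concretely, unfold the definition of $\transition{*}$: if $I \transition{*} I'$, then by definition there is a finite chain
\[
I = I_0 \ra I_1 \ra I_2 \ra \cdots \ra I_k = I'
\]
in which each $I_{\ell}$ is a valid elementary transformation of $I_{\ell-1}$, i.e.\ $I_\ell = I_{\ell-1}[j_\ell,\varphi_\ell]$ for suitable $j_\ell \in [m]\cup\{0\}$ and $\varphi_\ell \in \cS_{X_{j_\ell}^{(\ell-1)}}$ (the validity condition, relative to the specification sets of $I_{\ell-1}$). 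The claim is then proved by induction on $k$.

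For the base case $k=0$ we have $I' = I$, which is a YES-instance by hypothesis. For the inductive step, assume the statement holds for chains of length $k-1$. Given a chain of length $k$ as above, the prefix $I = I_0 \ra \cdots \ra I_{k-1}$ witnesses $I \transition{*} I_{k-1}$ by a chain of length $k-1$, so the induction hypothesis applies and $I_{k-1}$ is a YES-instance of WPPSG. Since $I' = I_k$ is a valid elementary transformation of $I_{k-1}$, Lemma~\ref{lem:elementary-transformation} yields that $I'$ is a YES-instance. This completes the induction and hence the proof.

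I expect no real obstacle here: the content is entirely carried by Lemma~\ref{lem:elementary-transformation}, and the only thing to be careful about is bookkeeping, namely that the validity condition at step $\ell$ is measured against the (possibly already modified) specification sets of $I_{\ell-1}$ rather than those of the original $I$ — but this is exactly how $\ra$ was defined, so Lemma~\ref{lem:elementary-transformation} applies verbatim at each step. One could also phrase the argument without explicit induction, simply by noting that $\transition{*}$ is the transitive closure of $\ra$ and that ``being a YES-instance'' is preserved along $\ra$ by Lemma~\ref{lem:elementary-transformation}; the induction is just the standard way of making ``preserved along the transitive closure'' precise. Given how short this is, in the paper I would keep it to two or three sentences rather than spelling out the induction in full.
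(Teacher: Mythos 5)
Your proof is correct and matches the paper's reasoning: the paper derives this lemma immediately from Lemma~\ref{lem:elementary-transformation}, and your induction along the chain of valid elementary transformations is just the explicit spelling-out of that one-line argument. No issues.
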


\noindent
The concatenation of two valid elementary transformations is commutative
in the following sense:
\begin{lemma} \label{lem:commutativity}
Let $I = (n,m,X_1,\ldots,X_m,\tau)$ be an instance of WPPSG.
Let $0 \le j_1 < j_2 \le m$ and let $\varphi_2$ and $\varphi'_2$
be permutations on $[n]$ such 
that $\varphi'_2 = \varphi_1^{-1} \varphi_2 \varphi_1$.
Suppose that $\varphi_1 \in \cS_{X_{j_1}}$. Suppose furthermore
that $\varphi_2 \in \cS_{X_{j_2}}$.\footnote{According to 
Lemma~\ref{lem1:automorphism}, this implies
that $\varphi'_2 \in \cS_{X_{j_2}\varphi_1}$.}  
Then $(I[j_1,\varphi_1])[j_2,\varphi'_2] = (I[j_2,\varphi_2])[j_1,\varphi_1]$.
\end{lemma}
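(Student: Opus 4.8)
The strategy is direct computation: expand both sides of the asserted identity using the definition~(\ref{eq:admissibility}) of an elementary transformation and compare the $m+1$ components (the integers $n,m$, the $m$ specification sets, and the permutation). Both sides are instances of WPPSG of the form $(n,m,\ldots)$, so the first two components agree trivially. It therefore suffices to check, for each $j \in [m]$, that the $j$-th specification set is the same on both sides, and that the two final permutations coincide.

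\medskip
\textbf{Step 1: the permutation component.}
Write $I = (n,m,X_1,\ldots,X_m,\tau)$. On the left-hand side, $I[j_1,\varphi_1]$ has final permutation $\tau\varphi_1$; applying $[j_2,\varphi'_2]$ multiplies on the right by $\varphi'_2$, giving $\tau\varphi_1\varphi'_2 = \tau\varphi_1(\varphi_1^{-1}\varphi_2\varphi_1) = \tau\varphi_2\varphi_1$. On the right-hand side, $I[j_2,\varphi_2]$ has final permutation $\tau\varphi_2$, and applying $[j_1,\varphi_1]$ gives $\tau\varphi_2\varphi_1$. The two agree. (This uses only that $\varphi'_2$ is defined by conjugation; no validity conditions are needed here.)

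\medskip
\textbf{Step 2: the specification sets.}
Fix $j \in [m]$ and split into the three regimes determined by the positions of $j$ relative to $j_1 < j_2$. For $j \le j_1$: on the left, the $[j_1,\varphi_1]$-step leaves $X_j$ unchanged (since $j\le j_1$), and the subsequent $[j_2,\varphi'_2]$-step also leaves it unchanged (since $j \le j_1 < j_2$); on the right, symmetrically both steps fix $X_j$. For $j_1 < j \le j_2$: on the left, the first step sends $X_j \mapsto X_j\varphi_1$ (since $j > j_1$), and the second step fixes it (since $j \le j_2$), yielding $X_j\varphi_1$; on the right, the first step fixes $X_j$ (since $j \le j_2$) and the second step sends it to $X_j\varphi_1$ (since $j > j_1$), again yielding $X_j\varphi_1$. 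For $j > j_2$: on the left we get $(X_j\varphi_1)\varphi'_2 = X_j\varphi_1\varphi_1^{-1}\varphi_2\varphi_1 = X_j\varphi_2\varphi_1$; on the right we get $(X_j\varphi_2)\varphi_1$. These match. Here I must be slightly careful about the boundary case $j = j_2$: applying $[j_2,\varphi'_2]$ does not touch $X_{j_2}$ (the condition in~(\ref{eq:admissibility}) is "$j > j'$"), so on both sides $X_{j_2}$ is transformed only by the $\varphi_1$-step, consistently with the "$j_1 < j \le j_2$" analysis above. The footnote's remark that $\varphi'_2 \in \cS_{X_{j_2}\varphi_1}$ (via Lemma~\ref{lem1:automorphism}) guarantees that $(I[j_1,\varphi_1])[j_2,\varphi'_2]$ is a legitimate elementary transformation in the first place, i.e.\ that $\varphi'_2$ satisfies the validity condition at position $j_2$ after the first transformation has been applied; I would note this so the composite expressions on the left-hand side are well-defined.

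\medskip
\textbf{Main obstacle.}
There is no real obstacle — this is a bookkeeping lemma. The only thing requiring attention is getting the index boundaries right (strict versus non-strict inequalities) in~(\ref{eq:admissibility}) and making sure the conjugation bookkeeping $\varphi_1\varphi'_2 = \varphi_2\varphi_1$ is applied in the correct order, since composition here is written left-to-right. I would present the proof as the two-step verification above, perhaps collapsing Step 2 into a single displayed case analysis, and explicitly invoke Lemma~\ref{lem1:automorphism} once to justify that both iterated transformations are valid.
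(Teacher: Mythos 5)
Your approach is the same as the paper's: a componentwise comparison of the two composite instances, with the permutation handled by the conjugation identity $\varphi_1\varphi'_2=\varphi_2\varphi_1$ and the specification sets handled by the case split $j\le j_1$, $j_1<j\le j_2$, $j>j_2$. For $j_1\ge 1$ your computation is complete and correct, including the boundary observation at $j=j_2$ and the remark (via Lemma~\ref{lem1:automorphism}) that $\varphi'_2\in\cS_{X_{j_2}\varphi_1}$, so the left-hand composite is a valid elementary transformation.

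There is, however, one genuine gap: the lemma allows $j_1=0$, and in that case the transformation $I[0,\varphi_1]$ is \emph{not} given by~(\ref{eq:admissibility}) but is defined as $I^{\varphi_1}$, so the permutation component becomes $\varphi_1^{-1}\tau\varphi_1$ rather than $\tau\varphi_1$. Your Step~1 asserts that ``$I[j_1,\varphi_1]$ has final permutation $\tau\varphi_1$,'' which is false for $j_1=0$, so as written the proof does not cover this case (which the paper treats separately). The fix is short: for $j_1=0$ the left-hand side has permutation $\varphi_1^{-1}\tau\varphi_1\varphi'_2$ and the right-hand side has $\varphi_1^{-1}\tau\varphi_2\varphi_1$, and these coincide by the same identity $\varphi_1\varphi'_2=\varphi_2\varphi_1$; the set computation in your Step~2 goes through unchanged, with the regime $j\le j_1$ vacuous (note that $X_j\varphi_1$ for all $j\ge 1$ is exactly what $I^{\varphi_1}$ prescribes, and the validity condition $\varphi_1\in\cS_{X_0}=\cS_n$ is automatic). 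Adding these two or three lines makes your argument match the paper's proof in full.
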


\begin{proof}
The assumptions $\varphi_1 \in \cS_{X_{j_1}}$ 
and $\varphi_2 \in \cS_{X_{j_2}}$ make sure that all validity conditions
which are relevant for the lemma are satisfied. Choose 
now $X_1^\ua,\ldots,X_m^\ua,\tau^\ua$ and $X_1^\da,\ldots,X_m^\da,\tau^\da$ 
such that 
\begin{eqnarray*}
(I[j_1,\varphi_1])[j_2,\varphi'_2] & = & (n,m,X^\ua_1,\ldots,X^\ua_m,\tau^\ua) \\
(I[j_2,\varphi_2])[j_1,\varphi_1] & = & (n,m,X^\da_1,\ldots,X^\da_m,\tau^\da)
\end{eqnarray*}
We have to show that $\tau^\ua = \tau^\da$ and $X_j^\ua = X_j^\da$
for $j = 1,\ldots,m$. Let's discuss first the case $j_1 \ge 1$.
It follows that $\tau^\da = \tau\varphi_2\varphi_1$
and $\tau^\ua = \tau\varphi_1\varphi'_2$. Because 
of $\varphi'_2 = \varphi_1^{-1} \varphi_2 \varphi_1$,
we get $\tau^\da = \tau^\ua$. Clearly $X_j^\da = X_j = X_j^\ua$
for $j \le j_1$ and $X_j^\da = X_j\varphi_1 = X_j^\ua$
for $j_1 < j \le j_2$. Suppose now that $j>j_2$. 
Then we obtain $X_j^\da = X_j\varphi_2\varphi_1$
and $X_j^\ua = X_j\varphi_1\varphi'_2$. Expanding $\varphi'_2$
in terms of $\varphi_2$, we see that $X_j^\da = X_j^\ua$. \\
Assume now that $j_1=0$. It follows 
that $\tau^\da = \varphi_1^{-1}\tau\varphi_2\varphi_1$
and $\tau^\ua = \varphi_1^{-1}\tau\varphi_1\varphi'_2$. Because
of $\varphi'_2 = \varphi_1^{-1} \varphi_2 \varphi_1$,
we get $\tau^\da = \tau^\ua$. The calculations which show 
that $X_j^\da = X_j^\ua$ are the same as as above and they
follow the same case distinction except that the case $j \le j_1$ 
becomes vacuous for $j_1=0$.
\end{proof}

\noindent
A chain of valid elementary transformations from $I$ to $I' = I_m$
of the form
\begin{equation} \label{eq:ascending-chain}
I \ra I_1 = I^{\pi_1} \ra I_2 = I_1[1,\sigma_1] \ra\ldots\ra  
I_m = I_{m-1}[m-1,\sigma_{m-1}]
\end{equation}
is called \emph{an ascending chain}. 
The tupel $(\pi_1,\sigma_1,\ldots,\sigma_{m-1})$ is called
its \emph{description}. In view of the
identity $(I[j,\varphi])[j,\psi] = I[j,\varphi\psi]$ 
and of Lemma~\ref{lem:commutativity}, we obtain the 
following result:

\begin{corollary} \label{cor:normal-form}
If $I \transition{*} I'$, then there exists an ascending chain 
that transforms $I$ into $I'$.
\end{corollary}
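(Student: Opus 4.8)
The plan is to argue that an arbitrary chain of valid elementary transformations $I \to I'$ can be reorganized, step by step, into the special ascending form~(\ref{eq:ascending-chain}), using only the two structural facts already at hand: the composition law $(I[j,\varphi])[j,\psi] = I[j,\varphi\psi]$ (together with its instance for $j=0$, i.e. $(I^{\pi_1})^{\pi_2} = I^{\pi_1\pi_2}$), and the commutation law of Lemma~\ref{lem:commutativity}. Write the given chain as a word $I[j_1,\varphi_1][j_2,\varphi_2]\cdots[j_r,\varphi_r]$, where each index $j_t$ lies in $\{0,1,\ldots,m\}$ and each $\varphi_t$ satisfies the relevant validity condition at the stage where it is applied. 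The target is a word in which the indices appear in the order $0,1,2,\ldots,m-1$, each index used exactly once.

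First I would use Lemma~\ref{lem:commutativity} as a ``bubble-sort'' move: whenever two adjacent elementary transformations $[j_s,\varphi_s][j_{s+1},\varphi_{s+1}]$ occur with $j_s > j_{s+1}$, replace them by $[j_{s+1},\psi][j_s,\varphi_s]$ with the appropriately conjugated $\psi$ (reading the lemma with the roles of the smaller and larger index as stated there; note the hypotheses of the lemma guarantee the new validity conditions hold). Repeating this finitely often, I obtain an equivalent chain in which the index sequence is nondecreasing. Second, wherever the same index $j$ appears in two consecutive positions, I collapse them via $(I[j,\varphi])[j,\psi]=I[j,\varphi\psi]$ into a single elementary transformation with index $j$; since the sequence is now nondecreasing, after finitely many collapses each index occurs at most once. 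Third, for any index $j \in \{1,\ldots,m-1\}$ that is still missing, I insert the trivial transformation $[j,\id]$ (legal since $I[j,\id]=I$, and $\id\in\cS_{X_j}$ always), again preserving the nondecreasing order; similarly, if the index $0$ is missing I prepend $[0,\id]=I^{\id}$. The result is precisely a chain whose description is a tuple $(\pi_1,\sigma_1,\ldots,\sigma_{m-1})$ in the form~(\ref{eq:ascending-chain}), and since every move above replaces a chain from $I$ to $I'$ by another chain from $I$ to $I'$, this ascending chain still transforms $I$ into $I'$.

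One point that needs a little care, and which I expect to be the main obstacle, is bookkeeping the validity conditions through the rearrangement. When a transformation $[j_s,\varphi_s]$ is moved past $[j_{s+1},\varphi_{s+1}]$, the permutation attached to the larger index gets conjugated by the one attached to the smaller index, so the symmetry-group membership required of it changes from $\cS_{X_{j}}$ to $\cS_{X_{j}\varphi}$ (as recorded in the footnote to Lemma~\ref{lem:commutativity}); one must check that this is exactly the validity condition demanded at the new stage of the chain, which follows because moving the smaller-index transformation earlier is precisely what renumbers the later specification set. Because only finitely many bubble moves, collapses, and insertions are performed, and each is individually validity-preserving and YES/NO-preserving (indeed, validity is all we need here, since Corollary~\ref{cor:normal-form} is a statement purely about the existence of a chain, not about YES-instances), the induction terminates and yields the claimed ascending chain. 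An index-counting potential function — say, the number of adjacent out-of-order pairs plus the number of repeated indices — strictly decreases under the bubble and collapse moves, which makes the termination argument precise.
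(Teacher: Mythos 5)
Your overall strategy is exactly the one the paper intends: the paper's own justification of Corollary~\ref{cor:normal-form} is nothing more than the remark that it follows from the identity $(I[j,\varphi])[j,\psi]=I[j,\varphi\psi]$ together with Lemma~\ref{lem:commutativity}, and your bubble-sort/collapse/pad-with-$[j,\id]$ elaboration, with the observation that validity is preserved because the permutation attached to the larger index gets conjugated into $\cS_{X_j\varphi_1}$ (Lemma~\ref{lem1:automorphism}), is precisely that argument spelled out. One slip in the write-up: for $j_s>j_{s+1}$, Lemma~\ref{lem:commutativity} turns $[j_s,\varphi_s][j_{s+1},\varphi_{s+1}]$ into $[j_{s+1},\varphi_{s+1}][j_s,\varphi_{s+1}^{-1}\varphi_s\varphi_{s+1}]$, i.e.\ the smaller-index permutation passes through unchanged while the larger-index one is conjugated; your displayed replacement $[j_{s+1},\psi][j_s,\varphi_s]$ has this backwards, although your later discussion states the correct rule.

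There is, however, one case your argument (and, to be fair, the paper's one-sentence justification as well) does not cover: elementary transformations with index $j=m$, which are allowed in the definition of $\ra$ but have no slot in the ascending-chain format~(\ref{eq:ascending-chain}), whose indices run only from $0$ to $m-1$. After your sorting and collapsing, a single step $[m,\varphi]$ may survive at the right end of the word, and your moves never eliminate it; nor can it be absorbed in general, since it multiplies the $\tau$-component on the right by $\varphi$ while leaving every specification set fixed. For instance, with $m=1$, $\tau=\id$ and $\varphi=\langle 1,2\rangle\in\cS_{X_1}$, we have $I\ra I[1,\varphi]$, but every ascending chain ends in $I^{\pi_1}$, whose $\tau$-component is $\pi_1^{-1}\id\,\pi_1=\id\neq\varphi$. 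So the statement as literally formulated requires either excluding index $m$ from valid elementary transformations or permitting a final index-$m$ step in~(\ref{eq:ascending-chain}); for the paper's application the point is harmless, because a trailing $[m,\varphi]$ does not change the specification sets, so when the target $I'$ is an instance of WPPSG$_0$ the truncated ascending chain already ends in an instance of WPPSG$_0$. Your proof should at least say explicitly how index-$m$ steps are disposed of.
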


\subsection{WPPSG Restricted to Nice Input Instances}
\label{subsec:nice-instances}

Call an input instance $I$ of WPPSG \emph{nice} if $I \transition{*} I'$
holds for some instance $I'$ of WPPSG$_0$. Let WPPSG$_2$ be the restriction 
of WPPSG to nice instances. Note that, for obvious reasons, every input 
instance of WPPSG$_1$ is nice. Hence WPPSG$_1$ is a subproblem of WPPSG$_2$. 
We will see in Sections~\ref{subsec:niceness-test} 
and~\ref{subsec:overcome} that we can efficiently decide 
whether $I$ is nice and, if applicable, that we can efficiently compute 
the description of an ascending chain leading from $I$ to some instance $I'$ 
of WPPSG$_0$. Therefore WPPSG$_2$ can be solved efficiently as follows:
\begin{enumerate}
\item 
Test whether $I$ is nice. Reject $I$ if this is not the case
and, otherwise, compute the description of an ascending chain 
from $I$ to $I'$ for some instance $I'$ of WPPSG$_0$.
\item
Use the description of this ascending chain to actually compute $I'$.
\item
Decide by means of the sorting strategy whether $I'$ is a YES-instance.
\end{enumerate}

\subsection{The Main Building Blocks of a Test for Niceness} 
\label{subsec:niceness-test}

A central tool in the sequel is the procedure REDUCE-EXPAND which,
given sets $X_1,\ldots,X_m \seq [n]$, works as follows:
\begin{description}
\item[Initialization.]
$X_0 := [n]$ and $\Pi_0 := \cS_n$.
\item[Loop.]
For $j = 1,\ldots,m$:
\begin{description}
\item[Expansion Step:]
$\Pi'_j := \{\pi\sigma: \pi \in \Pi_{j-1} \mbox{ and } \sigma \in \cS_{X_{j-1}\pi}\}$.
\item[Reduction Step:]
$\Pi_j := \{\pi \in \Pi'_j: \mbox{ $X_j\pi$ is a set of consecutive integers}\}$.
\end{description}
\end{description}
By the first execution of the for-loop, we get
\begin{equation} \label{eq:first-step}
\Pi'_1 := \cS_n\ \mbox{ and }\ \Pi_1 := 
\{\pi \in \cS_n: \mbox{$X_1\pi$ is a set of consecutive integers}\}
\enspace .
\end{equation}
We say that the sets $X_1,\ldots,X_m \seq [n]$ have the
\emph{Weak Consecutive Ones Property (WC1P)} if REDUCE-EXPAND
with input $X_1,\ldots,X_m$ computes 
sets $\Pi_0,\Pi'_1,\Pi_1,\ldots,\Pi'_m,\Pi_m$ such that $\Pi_m \neq \eset$
(which implies that all these sets are non-empty). 
We briefly note that the removal of all expansion steps would have
the consequence that $\Pi_m \neq \eset$ iff $X_1,\ldots,X_m$
have the C1-property. Since the removal of expansion steps increases
the chances to arrive at some empty set $\Pi_j$, it follows that
the C1-property implies the weak C1-property. It does not come as
a surprise that there exist collections $X_1,\ldots,X_m$ having 
the \emph{weak} C1-property but not the C1-property itself.
We will illustrate this by an example after we have proven the
following result:

\begin{theorem} \label{th:reduce-expand}
Let $I = (n,m,X_1,\ldots,X_m,\tau)$ be an instance of WPPSG
and let $\Pi_0,\Pi'_1,\Pi_1 ,\ldots, \Pi'_m,$ $\Pi_m$ be the sequence
of permutation sets computed by the procedure REDUCE-EXPAND with 
input $X_1,\ldots,X_m$. With these notations, we have that $I$ is 
nice iff $\Pi_m \neq \eset$.
\end{theorem}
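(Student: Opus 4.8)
The plan is to relate the set $\Pi_m$ produced by REDUCE-EXPAND to the possible end points of ascending chains starting at $I$. First I would reformulate niceness purely in terms of ascending chains: by the definition of niceness together with Corollary~\ref{cor:normal-form} (and the definition~(\ref{eq:ascending-chain}) of an ascending chain), the instance $I$ is nice if and only if there is an ascending chain from $I$ whose final instance $I_m$ belongs to WPPSG$_0$; that is, all of whose specification sets are sets of consecutive integers. Next I would trace, via~(\ref{eq:equal-up-to}) and~(\ref{eq:admissibility}), the effect of an ascending chain with description $(\pi_1,\sigma_1,\ldots,\sigma_{m-1})$ on the specification sets. A routine bookkeeping argument shows that the set in position $j$ of the final instance $I_m$ is $X_j\rho_j$, where $\rho_j:=\pi_1\sigma_1\cdots\sigma_{j-1}$, and that the validity conditions along the chain are exactly $\pi_1\in\cS_n$ (which is vacuous, since $X_0=[n]$) and $\sigma_i\in\cS_{X_i\rho_i}$ for $i=1,\ldots,m-1$. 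Consequently, $I$ is nice iff there is a choice of $\pi_1\in\cS_n$ and $\sigma_i\in\cS_{X_i\rho_i}$ ($1\le i\le m-1$) for which every $X_j\rho_j$, $j\in[m]$, is a set of consecutive integers.

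The heart of the proof is then to show that $\Pi_j$ is exactly the set of ``running permutations'' $\rho_j$ that can be reached by a valid partial ascending chain for which the consecutiveness requirement has been met through index $j$. Precisely, with $\rho_i:=\pi_1\sigma_1\cdots\sigma_{i-1}$ as above, I would prove by induction on $j=1,\ldots,m$ that
\[
\Pi_j=\bigl\{\rho_j : \pi_1\in\cS_n,\ \sigma_i\in\cS_{X_i\rho_i}\ (1\le i\le j-1),\ X_i\rho_i\text{ is a set of consecutive integers }(1\le i\le j)\bigr\}.
\]
The base case $j=1$ is exactly~(\ref{eq:first-step}). For the inductive step from $j$ to $j+1$: the expansion step of iteration $j+1$ multiplies each $\pi=\rho_j\in\Pi_j$ by an arbitrary $\sigma\in\cS_{X_j\pi}=\cS_{X_j\rho_j}$, which is precisely the freedom in selecting the next permutation $\sigma_j$ of the chain, and one has $\pi\sigma=\rho_{j+1}$; the subsequent reduction step keeps exactly those $\rho_{j+1}$ for which $X_{j+1}\rho_{j+1}$ is a set of consecutive integers. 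This is the claimed description for $j+1$.

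Granting this, $\Pi_m$ is exactly the set of permutations $\rho_m$ that arise from a valid ascending chain whose final instance lies in WPPSG$_0$; hence $\Pi_m\neq\eset$ iff such a chain exists iff, by the first paragraph, $I$ is nice. I expect the main obstacle to be essentially notational: one must keep the indices carefully aligned between REDUCE-EXPAND --- where iteration $j$ performs an expansion governed by $X_{j-1}$ followed by a reduction governed by $X_j$ --- and the ascending chain --- where the renumbering step uses $X_0=[n]$ and the $i$-th elementary transformation uses the current set $X_i\rho_i$. One must also check that the quantification over all $\sigma\in\cS_{X_j\pi}$ in the expansion step matches the choice of $\sigma_j$ without over- or under-counting, and remain aware that applying $\sigma_i\in\cS_{X_i\rho_i}$ also alters the sets $X_{i'}\rho_{i'}$ for $i'>i$, so the consecutiveness conditions are genuinely coupled across the chain rather than independent. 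Apart from this index-tracking, each individual verification is routine.
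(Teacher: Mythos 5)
Your proposal is correct and follows essentially the same route as the paper's proof: both rest on Corollary~\ref{cor:normal-form} together with the bookkeeping identity that the $j$-th specification set along an ascending chain with description $(\pi_1,\sigma_1,\ldots,\sigma_{m-1})$ is $X_j\pi_1\sigma_1\cdots\sigma_{j-1}$, and on matching the expansion/reduction steps of REDUCE-EXPAND with the validity and consecutiveness conditions of such a chain. The only difference is organizational: you package the two implications into a single inductive set-equality characterizing each $\Pi_j$, whereas the paper argues the two directions separately (backward decomposition of $\pi_m$ for one, forward induction $\pi_j\in\Pi_j$ for the other), which amounts to the same argument.
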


\begin{proof}
Suppose first that $\Pi_m \neq \eset$. We will prove the niceness 
of $I$ by finding a description of an ascending chain from $I$ to 
some instance of WPPSG$_0$. We start by choosing an arbitrary 
but fixed permutation $\pi_m \in \Pi_m$. Since $\Pi_m \seq \Pi'_m$, 
we may also choose permutations $\pi_{m-1} \in \Pi_{m-1}$ 
and $\sigma_{m-1} \in \cS_{X_{m-1}\pi_{m-1}}$
such that $\pi_m = \pi_{m-1}\sigma_{m-1}$. Using this reasoning 
iteratively, we see that there exist 
$\pi_m \in \Pi_m, \pi_{m-1} \in \Pi_{m-1}, \ldots, \pi_1 \in \Pi_1$
and $\sigma_{m-1} \in \cS_{X_{m-1}\pi_{m-1}}, \ldots, \sigma_1 \in \cS_{X_1\pi_1}$
which satisfy $\pi_j = \pi_{j-1}\sigma_{j-1}$ for $j = 2,\ldots,m$.
This implies that 
\begin{equation} \label{eq:recursion}
\forall j = 2,\ldots,m: \pi_j = \pi_1\sigma_1\ldots\sigma_{j-1} \enspace .
\end{equation}
Let's interpret $(\pi_1,\sigma_1,\ldots,\sigma_{m-1})$ as the description
of an ascending chain of the form (\ref{eq:ascending-chain}).
The chain starts at $I$, passes through $I_1,\ldots,I_{m-1}$ 
and ends in $I_m$. We denote the $j$-th specification set of $I_{j_0}$
\hbox{by $X_j^{j_0}$}. We have two show two things: 
\begin{enumerate}
\item
The validity conditions in~(\ref{eq:ascending-chain}) are satisfied, i.e.,
\begin{equation} \label{eq:validity} 
\forall j=1,\ldots,m-1: \sigma_j \in \cS_{X_j^j} \enspace .
\end{equation}
Since we already know that $\sigma_j \in \cS_{X_j\pi_j}$,
it suffices to show that $X_j^j = X_j\pi_j$.
\item
$I_m$ is an input instance of WPPSG$_0$, i.e., 
$X_1^m,\ldots,X_m^m$ are sets of consecutive integers.
\end{enumerate}
Expanding the definition of $I^\pi_1$ and $I_j[j,\sigma_j]$, 
we see that the following holds for every choice of $j,j_0 \in [m]$:
\begin{equation} \label{eq:instances}
X_j^{j_0} = \left\{ \begin{array}{ll}
              X_j\pi_1\sigma_1\ldots\sigma_{j_0-1} = X_j\pi_{j_0} & 
                \mbox{if $j \ge j_0$} \\
              X_j^j = X_j\pi_j & \mbox{if $j < j_0$} 
           \end{array} \right.  \enspace .
\end{equation}
Setting $j_0 = m$ or $j_0 = j$, we obtain 
\begin{equation} \label{eq:final-instance}
X_j^m = X_j^j = X_j\pi_j = X_j\pi_1\sigma_1\ldots\sigma_{j-1} 
\end{equation} 
for $j = 1,\ldots,m$. From $\pi_j \in \Pi_j$ and the definition 
of $\Pi_j$, it follows that $X_1^m,\ldots,X_m^m$ are sets of 
consecutive integers. This concludes the proof that $I$ is nice. \\
Now we turn our attention to the other proof direction. Suppose that 
there is an ascending chain of the form~(\ref{eq:ascending-chain})
such that the corresponding validity conditions are satisfied and 
such that $I_m$ is an instance of WPPSG$_0$. We have to show
that $\Pi_m \neq \eset$. For $j = 2,\ldots,m$, 
we set $\pi_j := \pi_{j-1}\sigma_{j-1}$ which implies 
that $\pi_j = \pi_1\sigma_1\ldots\sigma_{j-1}$. As before, we denote $X_j^{j_0}$
the $j$-th specification set in the instance $I_{j_0}$. As in the first part
of the proof, the set $X_j^{j_0}$ satisfies~(\ref{eq:instances}). 
Since $I_m$ is an instance of WPPSG$_0$, it follows 
that $X_j^j = X_j\pi_j = X_j^m$ is a set of consecutive integers. 
Hence $X_1\pi_1 = X_1^1$ is a set of consecutive integers.
Because of~(\ref{eq:first-step}), we obtain $\pi_1 \in \Pi_1$.
\begin{description}
\item[Claim:]
For $j=1,\ldots,m$, we have that $\pi_j \in \Pi_j$.
\item[Proof of the Claim:]
We have verified the claim for $j=1$ already. Thus we may focus on
the inductive step from $j$ to $j+1$. Since the validity conditions are
satisfied, we have that $\sigma_j \in \cS_{X_j^j} = \cS_{X_j\pi_j}$. 
Hence $\pi_{j+1} = \pi_j\sigma_j \in \Pi'_{j+1}$. Moreover, 
since $X_{j+1}\pi_{j+1}$ is a set of consecutive integers,
we have that $\pi_{j+1} \in \Pi_{j+1}$, which completes the inductive proof
of the claim. 
\end{description}
The punchline of this discussion is that $\pi_m \in \Pi_m$ 
so that $\Pi_m \neq \eset$.
\end{proof}

\begin{example} \label{ex:WC1P}
Consider an instance $I$ of WPPSG with 
specification sets $X_1,X_2,X_3,X_4 \seq [5]$ given by
\[
X_1 = \{1,2\}\ ,\ X_2 = \{1,3\}\ ,\ X_3 = \{1,4\}\ ,\ X_4 = \{1,5\}
\enspace .
\]
It was mentioned in~\cite{G1980} (and it is easy to check) that 
the sets $X_1,X_2,X_3,X_4$ do not have the C1-property. We will
prove now that these sets have the weak C1-property. To this end,
we set $\pi_1 = \id$, $\sigma_1 = \sprod{1}{2}$, $\sigma_2 = \sprod{2}{3}$
and $\sigma_3 = \sprod{3}{4}$. Consider the ascending chain with 
description $(\pi_1,\sigma_1,\sigma_2,\sigma_3)$. We use the notation 
from the proof of Theorem~\ref{th:reduce-expand}. Because $\pi_1 = \id$,
an application of~(\ref{eq:final-instance}) yields that
\[
\begin{array}{cccclcc}  
X_1^4 & = & X_1^1 & = & X_1 & = & \{1,2\} \\
X_2^4 & = & X_2^2 & = & X_2 \sigma_1 & = & \{2,3\} \\
X_3^4 & = & X_3^3 & = & X_3 \sigma_1 \sigma_2 & = & \{3,4\} \\
X_4^4 & = & & = & X_4 \sigma_1 \sigma_2 \sigma_3 & = & \{4,5\}
\end{array}
\enspace .
\]
Note that, for each $j \in [3]$, we have that $X_j^j = X_j^4 = \{j,j+1\}$
and $\sigma_j = \sprod{j}{j+1} \in \cS_{X_j^j}$. It follows that the 
validity conditions for the ascending chain with 
description $(\pi_1,\sigma_1,\sigma_2,\sigma_3)$ are satisfied. 
Moreover this chain ends in an instance from WPPSG$_0$, namely in the 
instance with specification sets $X_1^4,X_2^4,X_3^4,X_4^4$. Thus $I$ is nice 
and, by virtue of Theorem~\ref{th:reduce-expand}, we may conclude
that $\Pi_4 \neq \eset$. In other words, $I$ has the weak C1-property.
\end{example}

In order to obtain an algorithm that solves WPPSG$_2$ efficiently,
one has to overcome the following obstacles.
\begin{description}
\item[Obstacle 1:] 
A succinct representation for (possibly exponentially large)
subsets of $\cS_n$ is needed.
Here we make use of PQ-trees which are well known from the
work of Booth and Lueker~\cite{BL1976}. 
See Section~\ref{subsec:pq-trees}.
\item[Obstacle 2:]
Booth and Lueker~\cite{BL1976} presented an efficient procedure
which implements the reduction step, i.e., it transforms a
PQ-tree which represents $\Pi'_j$ into a PQ-tree which 
\hbox{represents $\Pi_j$}. We have to present an efficient procedure which 
implements the expansion step, i.e., it transforms a PQ-tree which 
represents $\Pi_{j-1}$ into a PQ-tree which represents $\Pi'_j$. 
Iterated application of these two procedures obviously leads to 
an efficient implementation of the procedure REDUCE-EXPAND.
\item[Obstacle 3:]
Let $T_j$ (resp.~$T'_j$) denote a PQ-tree which represents $\Pi_j$
(resp.~$\Pi'_j$) and assume that the trees $T_1,\ldots,T_m$
and $T'_2,\ldots,T'_m$ are given. We have to present an efficient 
procedure which does the following. 
Given a permutation $\pi_j \in \Pi_j$, it computes 
permutations $\pi_{j-1} \in \Pi_{j-1}$ 
and $\sigma_{j-1} \in \cS_{X_{j-1}\pi_{j-1}}$ such 
that $\pi_j = \pi_{j-1}\sigma_{j-1}$. 
As should become clear from the proof of Theorem~\ref{th:reduce-expand}, 
iterated application of this procedure yields the 
description $(\pi_1,\sigma_1,\ldots,\sigma_{m-1})$ of an ascending 
chain which starts at $I$ and ends at some input instance of WPPSG$_0$. 
\end{description}
As for overcoming Obstacles 2 and 3, see Section~\ref{subsec:overcome}
and, in particular, Corollary~\ref{cor:flatten}.

\section{An Efficient Algorithm for WPPSG$_2$} 
\label{sec:wppsg2-algorithm}

In Section~\ref{subsec:pq-trees}, we recall some central definitions 
from~\cite{BL1976} and state some useful facts.
In Section~\ref{subsec:overcome}, we show how to overcome the obstacles
that we mentioned at the end of Section~\ref{sec:large-subproblem}.
We are then ready for the main result of this paper. The latter
is presented in Section~\ref{subsec:main-result}.

\subsection{PQ-Trees} \label{subsec:pq-trees}

\emph{PQ-trees over the universe $[n]$} are rooted
ordered trees whose leaves are elements of $[n]$ and whose internal
nodes are distinguished as either P- or Q-nodes.\footnote{Booth
and Lueker used the universe $\{u_1,\ldots,u_n\}$
instead of $[n]$.} 
In figures, a P-node is drawn as a circle with its children below it
while a Q-node is drawn as a rectangle with its children below it.
A PQ-tree $T$ is \emph{proper} if the following hold:
\begin{enumerate}
\item
Every element in $[n]$ appears precisely once as a leaf in $T$.
\item
Each P-node has at least $2$ and each Q-node has at least $3$ children.
\end{enumerate}
In the sequel, we only talk about proper PQ-trees without explicitly
mentioning this again. The symbol $T$ is reserved to denote such trees.
Reading the leaves in $T$ from left to right yields the \emph{frontier} 
of $T$, which is denoted by $\FRONT(T)$. 
We identify $\FRONT(T)$ with a permutation on $[n]$ in the usual way:
if $\FRONT(T) = (u_1 \ldots u_n) \in [n]^n$, then it equals the permutation 
which maps $i$ to $u_i$. Hence the inverse of $\FRONT(T)$, 
denoted by $\FRONT(T)^{-1}$, maps $u_i$ to $i$ for $i = 1,\ldots,n$.

A PQ-tree $T'$ is said to be \emph{equivalent to PQ-tree $T$} if it 
can be obtained by the following \emph{permitted modifications} of $T$,
which may take place at every internal node:
\begin{description}
\item[Modification at a P-node:]
The children of a P-node can be arbitrarily permuted (including, 
as a special case, the option to leave the order of the children
unchanged).
\item[Modification at a Q-node:]
The children of a Q-node are either left in their original order
or they are put in reverse order.
\end{description}
If $T$ is a PQ-tree and $x$ is a node in $T$, then $T_x$
denotes the subtree of $T$ that is rooted \hbox{at $x$}. 
A permutation $\pi$ is \emph{consistent with $T$} if $\pi$ equals
the frontier of one of the trees which are equivalent to $T$. The 
set of permutations that are consistent with $T$ is denoted by $\CONS(T)$. 

\begin{definition} \label{def:consistency}
The \emph{consistency problem for PQ-trees} is the following problem:
given a PQ-tree $T$ with $n$ leaves and a permutation $\pi \in \cS_n$, 
decide whether $\pi \in \CONS(T)$ and, if applicable, construct a tree 
that is equivalent to $T$ and has frontier $\pi$.
\end{definition}

\begin{lemma}[Common Knowledge] \label{lem:consistency}
The consistency problem for PQ-trees can be solved in time $O(n)$.
\end{lemma}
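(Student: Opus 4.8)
The plan is to give a constructive argument that mirrors the structure of Booth and Lueker's original consistency test. First I would recall that every permitted modification is local to a single internal node, and that two equivalent trees are related by a sequence of such modifications. The key observation is that a permutation $\pi$ is consistent with $T$ if and only if one can reorder the children at each internal node so that the resulting frontier is exactly $\pi$; moreover, because subtrees occupy contiguous blocks of positions, this reordering can be determined independently at each node in a single bottom-up (or top-down) pass.

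More concretely, I would proceed as follows. Given $T$ and $\pi \in \cS_n$, process the nodes of $T$ in post-order. At a leaf $i$, record the position $i\pi^{-1}$, i.e.\ the location of $i$ in the target frontier, and set the ``interval'' of the leaf to the singleton $\{i\pi^{-1}\}$. At an internal node $x$ with children $c_1,\ldots,c_k$ (in their current order), each child $c_\ell$ has already been assigned an interval $J_\ell \seq [n]$; the node $x$ can yield a valid frontier only if the $J_\ell$ can be arranged into a sequence of consecutive blocks whose union $J$ is itself an interval of $[n]$. If $x$ is a Q-node, the only admissible child orders are $c_1,\ldots,c_k$ and its reverse, so I check whether either order makes the intervals line up consecutively; if $x$ is a P-node, I sort the children by the left endpoint of their intervals and check that they then tile an interval exactly. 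If no admissible reordering works at some node, output that $\pi \notin \CONS(T)$. If the root is reached successfully, the recorded reorderings describe a tree equivalent to $T$ whose frontier is $\pi$; output it. Correctness follows because the permitted modifications are precisely the child-reorderings allowed at P- and Q-nodes, the interval condition is necessary for any descendant frontier to agree with $\pi$ on the relevant positions, and the reordering chosen at each node is forced (up to irrelevant ties at a P-node) by $\pi$.

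For the running time, the work at a node with $k$ children is $O(k)$ for a Q-node (two orders to test, each checked by scanning the endpoints) and $O(k\log k)$ — or $O(k)$ with bucket sorting over $[n]$ — for a P-node; summing over all nodes and using that a proper PQ-tree has $O(n)$ nodes and total child-count $O(n)$ gives $O(n)$ overall. I would note that bookkeeping the intervals as (left endpoint, right endpoint) pairs and propagating them upward costs only $O(1)$ per node beyond the sorting.

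The main obstacle is the careful handling of the Q-node case and of ties at P-nodes: one must argue that checking just the two orders at a Q-node is sufficient (this rests on the fact that within a Q-node only whole-node reversal is permitted, not arbitrary permutation, so consecutiveness of the child-intervals in one of the two orders is both necessary and sufficient), and that at a P-node any two children whose intervals have the same left endpoint must in fact be identical singletons or lead to a contradiction, so the sort is essentially unique where it matters. Since the reader is told this is common knowledge, I would keep the write-up brief, citing~\cite{BL1976} for the detailed data structures and stating only that the bottom-up interval-propagation described above solves the consistency problem in linear time.
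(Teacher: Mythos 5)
Your proposal is correct and follows essentially the same route as the paper: propagate, for each subtree, the positions its leaves occupy under $\pi$ (the paper's $P_{min}/P_{max}$ values), reorder children by sorting at P-nodes (bucket sort) and by testing the two admissible orders at Q-nodes, and check consecutiveness, for $O(n)$ total time. The only differences are organizational (you fold everything into one bottom-up pass and check contiguity locally, while the paper computes the position bounds bottom-up and then reorders in a preorder pass with a final frontier check), and your worry about ties at P-nodes is moot since distinct children have disjoint leaf-position sets.
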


We suspect that this lemma is known but, for sake of completeness,
its proof will be presented in Section~\ref{sec1:app} of the appendix.

\noindent
A PQ-tree $T$ is said to \emph{represent} the set
\[
\CONS^{-1}(T) := \{\pi \in \cS_n: \pi^{-1} \in \CONS(T)\} \seq \cS_n
\enspace .
\]

Let $X \seq [n]$ be of size at least $2$ and let $T$ be a PQ-tree over $[n]$.
A subtree $T_x$ of $T$ is called an \emph{$X$-tree} if the set of leaves 
in $T_x$ equals $X$.

Let $x_1,\ldots,x_r$ be siblings in $T$ which follow one another successively
in the order from left to right and let $x$ denote their common parent. 
The forest $F$ formed by $T_{x_1},\ldots,T_{x_r}$ is called an \emph{$X$-forest} 
if the set of leaves in $F$ equals $X$. If $F$ is an $X$-forest, then $T_x$ 
is said to \emph{contain an $X$-forest}. 

\begin{lemma}[Common Knowledge] \label{lem:consecutive-integers}
Let $X \seq [n]$ be of size at least $2$ and let $T$ be a PQ-tree such that
\begin{equation} \label{eq:consecutive-integers}
\CONS(T) \seq \{\pi\in\cS_n:
\mbox{the elements of $X$ occur consecutively within $\pi$}\} \enspace .
\end{equation}
Let $x$ be the youngest common ancestor of all $X$-leaves in $T$.
Then the following hold:
\begin{enumerate}
\item
If $x$ is a P-node, then $T_x$ is an $X$-tree.
\item
If $x$ is a $Q$-node, then $T_x$ contains an $X$-forest.
\end{enumerate}
\end{lemma}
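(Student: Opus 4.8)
The plan is to argue directly from the structure of PQ-trees, using the characterization of $\CONS(T)$ in terms of permitted modifications. Let $L$ denote the set of $X$-leaves and let $x$ be their youngest common ancestor. I first observe that, by minimality of $x$, the $X$-leaves are spread among at least two distinct children of $x$ (otherwise that single child would be a common ancestor, contradicting the choice of $x$), and also that $x$ cannot be a leaf. So $x$ is an internal node with children $y_1,\ldots,y_k$ (in the left-to-right order), and $X$ meets the leaf sets of at least two of them.

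For the case where $x$ is a P-node, I would argue by contradiction: suppose $T_x$ is not an $X$-tree, i.e.\ some child $y_i$ has a leaf $a$ in its subtree with $a \notin X$, while (by the previous paragraph) there are two other children $y_p, y_q$ each containing an $X$-leaf, say $b \in X$ in $T_{y_p}$ and $c \in X$ in $T_{y_q}$. Since $x$ is a P-node we may permute its children freely; in one equivalent tree we can place $y_p$ first among $\{y_p,y_q,y_i\}$, then $y_i$, then $y_q$ (with the remaining children arranged arbitrarily around them). The frontier $\pi$ of the resulting tree then has $b$ appearing, then $a$ (with $a\notin X$) appearing strictly between $b$ and $c$, then $c$ appearing — so the elements of $X$ do \emph{not} occur consecutively within $\pi$, contradicting~(\ref{eq:consecutive-integers}). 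Hence every child's leaf set is either entirely inside $X$ or entirely outside $X$; combined with the fact that the leaves of $T_x$ that lie in $X$ are exactly $X$ (as $x$ is the common ancestor of \emph{all} $X$-leaves), and that no child lies entirely outside $X$ — because if some child $y_i$ were leaf-disjoint from $X$ we could, by the P-node freedom, move $y_i$ between two $X$-containing children and again break consecutiveness — we conclude that every child of $x$ has its leaves inside $X$, so the leaf set of $T_x$ is exactly $X$, i.e.\ $T_x$ is an $X$-tree.

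For the case where $x$ is a Q-node, the children $y_1,\ldots,y_k$ can only be kept in order or globally reversed, so I cannot shuffle them arbitrarily. Let $i_0$ be the smallest and $i_1$ the largest index such that $T_{y_{i_0}}$ and $T_{y_{i_1}}$ each contain an $X$-leaf; by the first paragraph $i_0 < i_1$. I claim the forest $F = T_{y_{i_0}}, T_{y_{i_0}+1}, \ldots, T_{y_{i_1}}$ is an $X$-forest. Indeed, in the current frontier (or its reversal) every leaf in $F$ sits between some $X$-leaf in $T_{y_{i_0}}$ and some $X$-leaf in $T_{y_{i_1}}$; consecutiveness of $X$ then forces every such leaf to belong to $X$. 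Conversely every $X$-leaf lies in $T_x$, and cannot lie in a child $y_j$ with $j<i_0$ or $j>i_1$ by the choice of $i_0,i_1$; so the leaves of $F$ are exactly $X$. To finish I should note that $y_{i_0},\ldots,y_{i_1}$ are consecutive siblings and $x$ is their common parent, so by definition $T_x$ contains an $X$-forest.

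The main obstacle I anticipate is the bookkeeping in the P-node case: I must be careful that the "move an outside leaf between two inside leaves" argument is always available, which requires knowing there are at least two children carrying $X$-leaves (handled by the minimality of $x$) and that a permitted P-node permutation genuinely realizes the desired order of just those three children while leaving everything else legal — this is immediate since a P-node permits \emph{all} permutations of its children, but it is worth stating cleanly. A secondary subtlety is ruling out the degenerate possibility that $x$ has exactly the structure of an $X$-tree already but with some child being a single non-$X$ leaf; the two sub-arguments above (no child straddles $X$, and no child avoids $X$) together close this off. The Q-node case is comparatively routine once one fixes the orientation and reads off the frontier.
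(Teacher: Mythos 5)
There is a genuine gap, and it is the same one in both of your cases: you never rule out a child of $x$ that \emph{straddles} $X$, i.e.\ whose subtree contains both $X$-leaves and non-$X$-leaves, and your toolbox (rearranging the children of $x$ only) cannot rule it out. In the P-node case your sandwich argument needs two $X$-containing children \emph{other than} the offending child $y_i$; the first paragraph only guarantees two $X$-containing children in total, and one of them may be $y_i$ itself. Concretely, if $x$ has exactly two children, $y_1$ containing both $X$-leaves and a leaf $a\notin X$, and $y_2$ containing only $X$-leaves, then neither of your two sub-arguments applies, yet $T_x$ is not an $X$-tree. In the Q-node case the assertion ``every leaf in $F$ sits between some $X$-leaf in $T_{y_{i_0}}$ and some $X$-leaf in $T_{y_{i_1}}$'' is false for the end trees: a non-$X$ leaf of $T_{y_{i_0}}$ lying to the left of the $X$-leaves inside its own block is to the left of \emph{all} $X$-leaves, so neither the current frontier nor its global reversal exhibits any violation of~(\ref{eq:consecutive-integers}), and reversing the Q-node $x$ cannot produce one. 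So your argument silently assumes exactly the statement that needs proof, namely that $y_{i_0}$ and $y_{i_1}$ lie entirely inside $X$.

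The missing idea is to descend \emph{into} the straddling child's subtree. The paper isolates this as a separate claim (each child of $x$ is entirely inside or entirely outside $X$) and proves it as follows: pick an $X$-leaf $z'$ in a different child of $x$, and $z^+\in X$, $z^-\notin X$ in the straddling child $T_y$; condition~(\ref{eq:consecutive-integers}) forbids $z^-$ between $z'$ and $z^+$, and membership of $z^+,z^-$ in $T_y$ (with $z'$ outside $T_y$) forbids $z'$ between $z^+$ and $z^-$, so $z^+$ lies between $z'$ and $z^-$; now reversing the children of the youngest common ancestor of $z^+$ and $z^-$ \emph{inside} $T_y$ (a permitted modification whether that node is a P- or Q-node) swaps $z^+$ and $z^-$ while leaving $z'$ on the same side, producing a permutation in $\CONS(T)$ with $z^-$ between two $X$-leaves --- a contradiction. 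With that claim in hand, your remaining steps (no child entirely outside $X$ in the P-case via the P-node shuffle; the middle children of the Q-case via the current frontier; consecutiveness of the block of $X$-children) do go through, but as written your proof does not establish the claim, so both parts are incomplete. (A repair staying closer to your style is possible in the P-node case --- either the current internal order of the straddling block already violates~(\ref{eq:consecutive-integers}), or its $X$-leaves form an end segment and you place another $X$-containing child on the far side --- but no such repair at the level of $x$ alone exists in the Q-node case, so the descent into the child subtree is unavoidable.)
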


We suspect that this lemma is known but, for sake of completeness,
its proof will be presented in Section~\ref{sec2:app} of the appendix.


\subsection{Overcoming Obstacles 2 and 3} \label{subsec:overcome}

Let $\Pi_0,\Pi'_1,\Pi_1,\ldots,\Pi'_m,\Pi_m$ be the sets returned
by the procedure REDUCE-EXPAND when the latter is applied 
to $X_1,\ldots,X_m \seq [n]$. Our goal is to efficiently 
construct PQ-trees $T_0,T'_1,T_1,\ldots,T'_m,T_m$
such that $T_j$ represents $\Pi_j$ and $T'_j$ represents $\Pi'_j$.
$\Pi_0 = \cS_n$ is represented by the \emph{universal PQ-tree}
with a single internal node which is a P-node that has children $1,\ldots,n$. 
Let's denote this tree \hbox{by $T_0$}. It suffices now to show the following:
\begin{description}
\item[Claim 1:]
Given a PQ-tree $T_{j-1}$ which represents $\Pi_{j-1}$,
one can construct in time $O(n)$ a PQ-tree $T'_j$ which represents $\Pi'_j$.
\item[Claim 2:]
Given a PQ-tree $T'_j$ which represents $\Pi'_j$,
one can construct in time $O(n)$ a PQ-tree $T_j$ which represents $\Pi_j$.
\end{description}
Claim 2 is covered by the work in~\cite{BL1976}. We may therefore focus 
on the proof of Claim 1.
In the sequel, we verify Claim 1 by proving a sligtly stronger 
statement. Details follow.


Suppose that a PQ-tree $T$ and a set $X \seq [n]$ of size at least $2$
satisfy condition~(\ref{eq:consecutive-integers}). Then we define
by $\FLAT(T,X)$ the PQ-tree which is obtained from $T$ as follows
(with the notation from Lemma~\ref{lem:consecutive-integers}):
\begin{itemize}
\item
If $x$ is a P-node, then replace the $X$-tree $T_x$ by another $X$-tree $T'_x$
of height $1$: the root of $T'_x$ is still the P-node $x$ but now all leaves 
of $T_x$ are made children of $x$. Note that $T$ without its subtree $T_x$ 
is exactly the same tree as $\FLAT(T,X)$ without its subtree $T'_x$.
\item
If $x$ is a Q-node, then replace the $X$-forest $F$ that is contained
in $T_x$ by the $X$-tree $T'_{x'}$ of height $1$ that is obtained from $F$
as follows: merge the roots of the trees in $F$ to a single new P-node $x'$
and make all leaves in $F$ children of $x'$. Note that $T$ without
its sub-forest $F$ is exactly the same tree as $\FLAT(T,X)$ without
its subtree $T'_{x'}$ 
\end{itemize}
Note that $\FLAT(T,X)$ contains an $X$-tree of height $1$ as a subtree
(just like any PQ-tree that is equivalent to $\FLAT(T,X)$). 
This has the implication that, for every 
permutation $\pi' \in \CONS^{-1}(\FLAT(T,X))$, the set $X\pi'$ 
consists of consecutive integers.

\begin{lemma} \label{lem:flatten}
Suppose that a PQ-tree $T$ and a set $X \seq [n]$ of size at least $2$
satisfy condition~(\ref{eq:consecutive-integers}). Then the following 
holds:
\begin{equation} \label{eq:X-subset}
\CONS^{-1}(\FLAT(T,X)) = \{\pi \sigma: \pi \in \CONS^{-1}(T) \mbox{ and } 
\sigma \in \cS_{X\pi}\} \enspace .
\end{equation}
Moreover, given a permutation $\pi' \in \CONS^{-1}(\FLAT(T,X))$
and the trees $T$ and $\FLAT(T,X)$, one can compute in time $O(n)$ 
permutations $\pi \in \CONS^{-1}(T)$ and $\sigma \in \cS_{X\pi}$ 
such that $\pi' = \pi\sigma$.
\end{lemma}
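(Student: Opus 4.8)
The plan is to establish the set equality~(\ref{eq:X-subset}) by a double inclusion, and then to extract the algorithmic claim from the structural description of $\FLAT(T,X)$ together with the consistency algorithm of Lemma~\ref{lem:consistency}. First I would fix notation following Lemma~\ref{lem:consecutive-integers}: let $x$ be the youngest common ancestor of all $X$-leaves in $T$, so that either $T_x$ is an $X$-tree (the P-node case) or $T_x$ contains an $X$-forest $F$ (the Q-node case), and let $T^* := \FLAT(T,X)$, which by construction contains an $X$-tree of height $1$ rooted at a P-node $x'$ (with $x' = x$ in the P-node case). The key observation to isolate at the start is that $T$ and $T^*$ agree completely outside the flattened region: $T$ minus its $X$-tree (resp.\ $X$-forest) is literally the same tree as $T^*$ minus its height-$1$ $X$-tree $T^*_{x'}$, and moreover the equivalence-class structure is preserved in the following sense --- the permitted modifications at every node other than inside the flattened region are in bijection, and the modifications available inside $T^*_{x'}$ (arbitrary permutations of the $|X|$ leaves of the P-node $x'$) are exactly the permutations of the block of positions that $X$ occupies.

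For the inclusion $\supseteq$, take $\pi \in \CONS^{-1}(T)$ and $\sigma \in \cS_{X\pi}$. Since $\pi^{-1} \in \CONS(T)$, by~(\ref{eq:consecutive-integers}) the elements of $X$ occur consecutively in $\pi^{-1}$, say in positions forming an interval $[a+1:a+|X|]$; equivalently $X\pi$ is this interval, and $\sigma$ permutes precisely these positions. I would argue that $(\pi\sigma)^{-1} = \sigma^{-1}\pi^{-1}$ is the frontier of some tree equivalent to $T^*$: starting from the tree equivalent to $T$ with frontier $\pi^{-1}$, one checks that by Lemma~\ref{lem:consecutive-integers} the $X$-leaves already form either an $X$-subtree or an $X$-subforest in that equivalent tree, so it is also (the frontier of) a tree equivalent to $T^*$ after merging/flattening, and then applying $\sigma^{-1}$ on the left is realized by a permitted P-node modification at $x'$. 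Hence $\pi\sigma \in \CONS^{-1}(T^*)$. For the reverse inclusion $\subseteq$, take $\pi' \in \CONS^{-1}(T^*)$, so $(\pi')^{-1}$ is the frontier of a tree $T^{**}$ equivalent to $T^*$; let $b+1,\ldots,b+|X|$ be the positions of the $X$-leaves (an interval, since $T^*$ has a height-$1$ $X$-subtree). Let $\pi$ be the permutation whose inverse is obtained from $(\pi')^{-1}$ by sorting the $X$-leaves within that block into increasing order (of their images under $(\pi')^{-1}$); then the corresponding rearrangement of $T^{**}$ inside $T^*_{x'}$ is a permitted P-node modification, and "un-flattening" (replacing the height-$1$ $X$-tree by the original $X$-tree or $X$-forest, with its children in the appropriate order) yields a tree equivalent to $T$, so $\pi \in \CONS^{-1}(T)$; and $\sigma := \pi^{-1}\pi'$ permutes exactly the block of positions occupied by $X\pi$, i.e.\ $\sigma \in \cS_{X\pi}$, with $\pi' = \pi\sigma$.

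For the algorithmic part, given $\pi' \in \CONS^{-1}(T^*)$ and the trees $T, T^* = \FLAT(T,X)$: run the consistency algorithm of Lemma~\ref{lem:consistency} on $T^*$ and $(\pi')^{-1}$ to obtain, in time $O(n)$, a concrete tree $T^{**}$ equivalent to $T^*$ with frontier $(\pi')^{-1}$; read off the block of positions occupied by $X$; define $\pi$ by sorting that block as above (an $O(n)$ operation), which by the argument above lies in $\CONS^{-1}(T)$; and output $\pi$ together with $\sigma := \pi^{-1}\pi'$, computable in $O(n)$, which lies in $\cS_{X\pi}$ and satisfies $\pi\sigma = \pi'$. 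The main obstacle I anticipate is the bookkeeping in the $\subseteq$ direction, specifically verifying carefully that the "un-flattening" map really sends trees equivalent to $T^*$ (after sorting the $X$-block) to trees equivalent to $T$ --- this requires using Lemma~\ref{lem:consecutive-integers} to know the precise shape of the $X$-region in $T$ (P-node: a genuine $X$-subtree whose internal structure is recovered verbatim; Q-node: an $X$-subforest of consecutive children of a Q-node, where one must check that collapsing those children to a P-node and back is consistent with the Q-node's reversal symmetry) and confirming that no additional orderings become consistent or get lost in the process. The rest is routine manipulation of compositions of permutations and of the $O(n)$ time bounds.
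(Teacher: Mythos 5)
Your $\supseteq$ direction follows essentially the paper's route (modifications outside the flattened region transfer between $T$ and $\FLAT(T,X)$, and the new P-node absorbs any internal ordering of the $X$-leaves), but your $\subseteq$ direction and the algorithm contain a genuine gap: you define $\pi$ by taking $(\pi')^{-1}$ and \emph{sorting the $X$-block into increasing order}, and then claim that un-flattening yields a tree equivalent to $T$ with frontier $\pi^{-1}$. This fails whenever the increasing order of the elements of $X$ is not itself realizable by the $X$-subtree (or $X$-forest) of $T$ — which happens as soon as that region contains Q-nodes. The paper's own illustration (Fig.~\ref{fig:flatten}) is a counterexample to your recipe: there $X=\{2,5,7,9\}$ and $T_x$ is a P-node whose children are the leaf $9$ and a Q-node with leaf children $2,7,5$; the achievable left-to-right orders of the $X$-block are $(2,7,5,9)$, $(9,2,7,5)$, $(5,7,2,9)$, $(9,5,7,2)$, and the sorted order $(2,5,7,9)$ is not among them. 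So the permutation $\pi$ your algorithm outputs need not lie in $\CONS^{-1}(T)$ at all, and the un-flattening step cannot be repaired by "choosing the children in the appropriate order", because no permitted modification of $T_x$ produces the sorted block. (Indeed, in the paper's example the correct $\pi$ has $X$-block $(2,7,5,9)$, not $(2,5,7,9)$.)

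The correct choice, which is what the paper does, is not to sort but to transfer: run the consistency algorithm on $\FLAT(T,X)$ and $(\pi')^{-1}$, split the resulting permitted modifications into those outside the flattened subtree and the single modification at the new P-node $x'$, apply only the outside ones to $T$ (leaving $T_x$, respectively the $X$-forest, untouched), and set $\pi := \FRONT(T')^{-1}$ for the resulting tree $T'$. Then $\pi\in\CONS^{-1}(T)$ automatically, $\pi$ agrees with $\pi'$ off $X$, and $\sigma:=\pi^{-1}\pi'\in\cS_{X\pi}$; this also gives the $O(n)$ bound. Your overall architecture (double inclusion plus Lemma~\ref{lem:consistency} for the algorithm) is the right one, but as written the $\subseteq$ half and the extraction procedure rest on the sorted-block choice of $\pi$, which is incorrect.
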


\begin{proof}
We use the notation from Lemma~\ref{lem:consecutive-integers} and 
from the above definition of $\FLAT(T,X)$. Remember that $x$ denotes 
the youngest common ancestor of all $X$-leaves in $T$. 
Suppose that $x$ is a P-node.\footnote{The proof under the assumption 
that $x$ is a Q-node is similar.} According to 
Lemma~\ref{lem:consecutive-integers}, $T_x$ is an $X$-tree. 
For sake of brevity, set $F(T,X) := \FLAT(T,X)$. \\
We first show that $\CONS^{-1}(F(T,X))$ is contained in the set on the 
right-hand side of~(\ref{eq:X-subset}). See Fig.~\ref{fig:flatten} 
(with $X$-leaves in $T$ and $F(T,X)$ highlighted in bold) for an 
illustration of the following reasoning. Choose an arbitrary but fixed 
permutation $\pi' \in \CONS^{-1}(F(T,X))$. Then there exist modifications 
at the internal nodes of $F(T,X)$ such that $F(T,X)$ is transformed into 
an equivalent PQ-tree, say $F(T,X)''$, whose frontier equals the inverse 
of $\pi'$. We can think of $F(T,X)''$ as being obtained in two stages:
\begin{itemize}
\item 
In Stage~1, we modify $F(T,X)$ only at $P$- or $Q$-nodes outside $F(T,X)_x$
and obtain a modified tree, say $F(T,X)'$, which still has $F(T,X)_x$ as a 
subtree.
\item
In Stage 2, we obtain $F(T,X)''$ by making one final modification 
at the root $x$ of $F'(T,X)_x$. 
\end{itemize}
Note that in Stage 2 nothing is changed except for the internal ordering
of the $X$-leaves in $F'(T,X)$. Now, since $T$ and $F(T,X)$ coincide outside their 
respective subtrees $T_x$ and $F(T,X)_x$, we can make all modifications 
falling into Stage 1 also in $T$. Denote the resulting tree by $T'$ 
and let $\pi$ be the inverse of $\FRONT(T')$. It follows 
that $\pi \in \CONS^{-1}(T)$ and $u\pi = u\pi'$ for each $u \in [n] \sm X$.
Set $\sigma = \pi^{-1}\pi'$ so that $\pi\sigma = \pi'$. 
If $u\pi \notin X\pi$, then $u \in [n] \sm X$ and $u\pi\sigma = u\pi' = u\pi$.
Thus $\sigma \in \cS_{X\pi}$. This completes the proof of ``$\seq$''
in~(\ref{eq:X-subset}). \\
We move on and prove that $\pi \in \CONS^{-1}(T)$ and $\sigma \in \cS_{X\pi}$
with $\pi' = \pi\sigma$ can be computed in time $O(n)$:
\begin{itemize}
\item
According to Lemma~\ref{lem:consistency}, the consistency problem 
for PQ-trees (see Definition~\ref{def:consistency}) can be solved 
in time $O(n)$. It follows that the PQ-tree $F(T,X)''$ whose frontier 
equals the inverse of $\pi'$ (as well as the corresponding modifications 
of $F(T,X)$) can be computed in time $O(n)$. 
\item
Applying the modifications which fall into Stage 1 to the PQ-tree $T$,
we get in time $O(n)$ the tree $T'$ and the 
permutation $\pi = \FRONT(T')^{-1}$. 
\item
As argued above, a suitable permutation $\sigma \in \cS_{X\pi}$
is now obtained by setting $\sigma := \pi'\pi^{-1}$.
\end{itemize}
We still have to show that $\CONS^{-1}(F(T,X))$ is a superset of the set 
on the right-hand side of~(\ref{eq:X-subset}). Choose an arbitrary
but fixed permutation of the form $\pi \sigma$ 
with $\pi \in \CONS^{-1}(T,X)$ and $\sigma \in \cS_{X\pi}$. Then there 
exist modifications at the internal nodes of $T$ which transform $T$ 
into an equivalent PQ-tree $T''$ whose frontier equals $\pi^{-1}$. 
We can think of $T''$ as being obtained in two stages:
\begin{itemize}
\item
In Stage~1, we modify $T$ only at $P$- or $Q$-nodes outside $T_x$
and obtain a modified tree, say $T'$, which still has $T_x$ as a
subtree.
\item
In Stage 2, we obtain $T''$ by making some final modifications
at the internal nodes of $T'_x$.
\end{itemize}
Since $T$ and $F(T,X)$ coincide outside their respective subtrees $T_x$ 
and $F(T,X)_x$, we can make all modifications falling into Stage 1 also 
in $F(T,X)$. Denote the resulting tree by $F(T,X)'$
and let $\pi'$ be the inverse of $\FRONT(F(T,X)')$. 
In combination with $\sigma \in \cS_{X\pi}$, it follows
that $u\pi\sigma = u\pi = u\pi'$ holds for each $u \in [n] \sm X$. 
Hence that $X\pi\sigma = X\pi'$ but $\pi\sigma$ and $\pi'$ may induce
different permutations on $X$. However we may reorder the children
of $x$ in $F(T,X)'$ such that the resulting PQ-tree $F(T,X)''$
represents $\pi\sigma$. This completes the proof of ``$\supseteq$'' 
in~(\ref{eq:X-subset}).
\end{proof}

\begin{figure}[hbt]
        \begin{center}
\begin{tikzpicture}



        \def\x{-4} \def\y{0}

        \node[fill=cyan,rectangle, minimum width = 1cm] (a) at (\x,\y) {root of $T$};
        \node[fill=cyan,circle] (b) at (\x-3,\y-1) {};
        \node[fill=cyan,circle] (c) at (\x,\y-1) {$x$};
        \node[fill=cyan,rectangle,minimum width = 1cm] (d) at (\x+3,\y-1) {};
        \node[fill=white] (e) at (\x-3.5,\y-2) {$1$};
        \node[fill=white] (f) at (\x-2.5,\y-2) {$4$};
        \node[fill=cyan,rectangle,minimum width = 1cm] (g) at (\x-1,\y-2) {};
        \node[fill=white] (h) at (\x+1,\y-2) {\bf 9};
        \node[fill=white] (i) at (\x+2,\y-2) {$8$};
        \node[fill=white] (j) at (\x+3,\y-2) {$3$};
        \node[fill=white] (k) at (\x+4,\y-2) {$6$};
        \node[fill=white] (l) at (\x-2,\y-3) {\bf 2};
        \node[fill=white] (m) at (\x-1,\y-3) {\bf 7};
        \node[fill=white] (n) at (\x,\y-3) {\bf5};

        \draw (a)--(b); 
        \draw (a)--(c);
        \draw (a)--(d);
        \draw (b)--(e);
        \draw (b)--(f);
        \draw (c)--(g);
        \draw (c)--(h);
        \draw (d)--(i);
        \draw (d)--(j);
        \draw (d)--(k);
        \draw (g)--(l);
        \draw (g)--(m);
        \draw (g)--(n);

\def\x{4} \def\y{0}

        \node[fill=cyan,rectangle, minimum width = 1cm] (a') at (\x,\y) {root of $F(T,X)$};
        \node[fill=cyan,circle] (b') at (\x-3,\y-1) {};
        \node[fill=cyan,circle] (c') at (\x,\y-1) {$x$};
        \node[fill=cyan,rectangle,minimum width = 1cm] (d') at (\x+3,\y-1) {};
        \node[fill=white] (e') at (\x-3.5,\y-2) {$1$};
        \node[fill=white] (f') at (\x-2.5,\y-2) {$4$};
        \node[fill=white] (h') at (\x+1.25,\y-2) {\bf 9};
        \node[fill=white] (i') at (\x+2,\y-2) {$8$};
        \node[fill=white] (j') at (\x+3,\y-2) {$3$};
        \node[fill=white] (k') at (\x+4,\y-2) {$6$};
        \node[fill=white] (l') at (\x-1.5,\y-2) {\bf 2};
        \node[fill=white] (m') at (\x-0.65,\y-2) {\bf 7};
        \node[fill=white] (n') at (\x+0.35,\y-2) {\bf 5};

        \node (o) at (0.25,0.4) {};
        \node (p) at (0.25,-8.4) {};

        \draw (a')--(b'); 
        \draw (a')--(c');
        \draw (a')--(d');
        \draw (b')--(e');
        \draw (b')--(f');
        \draw (c')--(l');
        \draw (c')--(m');
        \draw (c')--(n');
        \draw (c')--(h');
        \draw (d')--(i');
        \draw (d')--(j');
        \draw (d')--(k');

        \draw (o)--(p);

\def\x{5} \def\y{-3}

        \node[fill=cyan,rectangle, minimum width = 1cm] (a') at (\x,\y) {root of $F(T,X)'$};
        \node[fill=cyan,circle] (b') at (\x+2.5,\y-1) {};
        \node[fill=cyan,circle] (c') at (\x,\y-1) {$x$};
        \node[fill=cyan,rectangle,minimum width = 1cm] (d') at (\x-3,\y-1) {};
        \node[fill=white] (e') at (\x+3,\y-2) {$1$};
        \node[fill=white] (f') at (\x+2,\y-2) {$4$};
        \node[fill=white] (h') at (\x+1,\y-2) {\bf 9};
        \node[fill=white] (i') at (\x-2,\y-2) {$8$};
        \node[fill=white] (j') at (\x-3,\y-2) {$3$};
        \node[fill=white] (k') at (\x-4,\y-2) {$6$};
        \node[fill=white] (l') at (\x-1,\y-2) {\bf 2};
        \node[fill=white] (m') at (\x-0.3,\y-2) {\bf 7};
        \node[fill=white] (n') at (\x+0.3,\y-2) {\bf 5};

        \draw (a')--(b'); 
        \draw (a')--(c');
        \draw (a')--(d');
        \draw (d')--(i');
        \draw (d')--(j');
        \draw (d')--(k');
        \draw (c')--(l');
        \draw (c')--(m');
        \draw (c')--(n');
        \draw (c')--(h');
        \draw (b')--(f');
        \draw (b')--(e');

\def\x{-3.5} \def\y{-4}

        \node[fill=cyan,rectangle, minimum width = 1cm] (a') at (\x,\y) {root of $T'$};
        \node[fill=cyan,circle] (b') at (\x+2.5,\y-1) {};
        \node[fill=cyan,circle] (c') at (\x,\y-1) {$x$};
        \node[fill=cyan,rectangle,minimum width = 1cm] (d') at (\x-3,\y-1) {};
        \node[fill=white] (e') at (\x+3,\y-2) {$1$};
        \node[fill=white] (f') at (\x+2,\y-2) {$4$};
        \node[fill=white] (h') at (\x+1,\y-2) {\bf 9};
        \node[fill=white] (i') at (\x-2,\y-2) {$8$};
        \node[fill=white] (j') at (\x-3,\y-2) {$3$};
        \node[fill=white] (k') at (\x-4,\y-2) {$6$};
        \node[fill=white] (l') at (\x-1.8,\y-3) {\bf 2};
        \node[fill=white] (m') at (\x-1,\y-3) {\bf 7};
        \node[fill=white] (n') at (\x-0.2,\y-3) {\bf 5};
        \node[fill=cyan,rectangle,minimum width = 1cm] (g') at (\x-1,\y-2) {};

        \draw (a')--(b'); 
        \draw (a')--(c');
        \draw (a')--(d');
        \draw (d')--(k');
        \draw (d')--(j');
        \draw (d')--(i');
        \draw (g')--(m');
        \draw (g')--(l');
        \draw (g')--(n');
        \draw (c')--(g');
        \draw (c')--(h');
        \draw (b')--(f');
        \draw (b')--(e');

\def\x{5} \def\y{-6}

        \node[fill=cyan,rectangle, minimum width = 1cm] (a') at (\x,\y) {root of $F(T,X)''$};
        \node[fill=cyan,circle] (b') at (\x+2.5,\y-1) {};
        \node[fill=cyan,circle] (c') at (\x,\y-1) {$x$};
        \node[fill=cyan,rectangle,minimum width = 1cm] (d') at (\x-3,\y-1) {};
        \node[fill=white] (e') at (\x+3,\y-2) {$1$};
        \node[fill=white] (f') at (\x+2,\y-2) {$4$};
        \node[fill=white] (h') at (\x+1,\y-2) {\bf 7};
        \node[fill=white] (i') at (\x-2,\y-2) {$8$};
        \node[fill=white] (j') at (\x-3,\y-2) {$3$};
        \node[fill=white] (k') at (\x-4,\y-2) {$6$};
        \node[fill=white] (m') at (\x-1,\y-2) {\bf 5}; 
        \node[fill=white] (n') at (\x-0.3,\y-2) {\bf 2};
        \node[fill=white] (l') at (\x+0.3,\y-2) {\bf 9};

        \draw (a')--(b'); 
        \draw (a')--(c');
        \draw (a')--(d');
        \draw (d')--(i');
        \draw (d')--(j');
        \draw (d')--(k');
        \draw (c')--(l');
        \draw (c')--(m');
        \draw (c')--(n');
        \draw (c')--(h');
        \draw (b')--(f');
        \draw (b')--(e');

\end{tikzpicture}
        \end{center}
        \caption{At the top left: a PQ-tree $T$ whose subtree $T_x$
        is an $X$-tree for $X = \{2,5,7,9\}$. At the top right:
        the tree $F(T,X) = \FLAT(T,X)$. At the bottom right: a tree $F(T,X)''$
        that is equivalent to $F(T,X)$. In the middle: the trees $T'$ (on 
        the left) and $F(T,X)'$ (on the right). Compare with the proof of 
        Lemma~\ref{lem:flatten}. The permutation $\pi'$ in the proof
        is the inverse of $\FRONT(F(T,X)'') = (6,3,8,5,2,9,7,4,1)$. 
        The permutation $\pi$ in the proof is the inverse 
        of $(6,3,8,2,7,5,9,4,1)$. \label{fig:flatten} }
\end{figure}
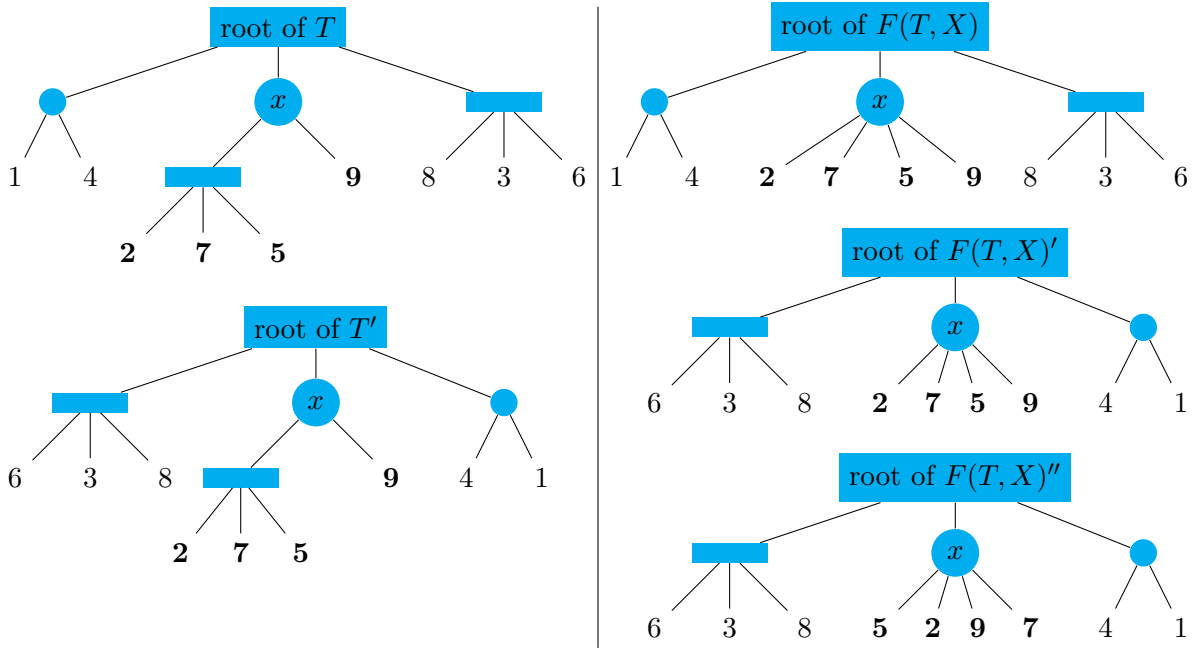

\noindent
The above Claim 1 is implied by the following result:

\begin{corollary} \label{cor:flatten}
For $j = 1,\ldots,m$, the following holds.
If $T_{j-1}$ represents $\Pi_{j-1}$, then
\[
T'_j := \CONS^{-1}(\FLAT(T_{j-1},X_{j-1})) 
\]
represents $\Pi'_j$. Moreover, given a permutation $\pi_j \in \Pi'_j$,
one can compute in time $O(n)$ permutations $\pi_{j-1} \in \Pi_{j-1}$
and $\sigma_{j-1} \in \cS_{X_{j-1}\pi_{j-1}}$ such 
that $\pi_j = \pi_{j-1}\sigma_{j-1}$.
\end{corollary}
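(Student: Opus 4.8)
The plan is to obtain Corollary~\ref{cor:flatten} as a direct specialization of Lemma~\ref{lem:flatten}, simply unwinding the definitions of the REDUCE-EXPAND iteration and of $\CONS^{-1}$. First I would recall that the procedure REDUCE-EXPAND defines
\[
\Pi'_j = \{\pi\sigma : \pi \in \Pi_{j-1} \mbox{ and } \sigma \in \cS_{X_{j-1}\pi}\} \enspace ,
\]
with the convention $X_0 = [n]$ and $\Pi_0 = \cS_n$. The hypothesis of the corollary is that $T_{j-1}$ represents $\Pi_{j-1}$, i.e., $\Pi_{j-1} = \CONS^{-1}(T_{j-1})$. I would then check that $T_{j-1}$ together with the set $X_{j-1}$ satisfies the precondition~(\ref{eq:consecutive-integers}) needed to apply $\FLAT$ and Lemma~\ref{lem:flatten}: for $j \ge 2$ this is exactly the reduction step that produced $\Pi_{j-1}$ (every $\pi \in \Pi_{j-1}$ has $X_{j-1}\pi$ consecutive, hence every $\rho = \pi^{-1} \in \CONS(T_{j-1})$ has the elements of $X_{j-1}$ occurring consecutively), and for $j = 1$ the case $X_0 = [n]$ is trivial since the whole universe is always a block. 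A small remark handling this boundary case cleanly (or just excluding $j=1$ as the trivial expansion $\Pi'_1 = \cS_n$) is worth a sentence.

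Granting the precondition, Lemma~\ref{lem:flatten} applied with $T := T_{j-1}$ and $X := X_{j-1}$ gives
\[
\CONS^{-1}(\FLAT(T_{j-1},X_{j-1})) = \{\pi\sigma : \pi \in \CONS^{-1}(T_{j-1}) \mbox{ and } \sigma \in \cS_{X_{j-1}\pi}\} = \{\pi\sigma : \pi \in \Pi_{j-1} \mbox{ and } \sigma \in \cS_{X_{j-1}\pi}\} \enspace ,
\]
and the right-hand side is precisely $\Pi'_j$ by the definition of the expansion step. Hence the PQ-tree $T'_j := \FLAT(T_{j-1},X_{j-1})$ represents $\Pi'_j$, which is the first assertion. (I would fix the evident typo in the statement, where $T'_j$ is written as $\CONS^{-1}(\FLAT(T_{j-1},X_{j-1}))$ rather than $\FLAT(T_{j-1},X_{j-1})$ itself.) For the algorithmic part, a permutation $\pi_j \in \Pi'_j$ is by the above a permutation in $\CONS^{-1}(\FLAT(T_{j-1},X_{j-1}))$; the ``moreover'' clause of Lemma~\ref{lem:flatten} then yields, in time $O(n)$ and using the already-constructed trees $T_{j-1}$ and $\FLAT(T_{j-1},X_{j-1})$, permutations $\pi_{j-1} \in \CONS^{-1}(T_{j-1}) = \Pi_{j-1}$ and $\sigma_{j-1} \in \cS_{X_{j-1}\pi_{j-1}}$ with $\pi_j = \pi_{j-1}\sigma_{j-1}$. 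That is exactly what is claimed.

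I do not expect any real obstacle here: the corollary is essentially a bookkeeping translation of Lemma~\ref{lem:flatten} into the language of the REDUCE-EXPAND sets, so the only things needing care are (i) verifying that condition~(\ref{eq:consecutive-integers}) holds at each step — which is built into how $\Pi_{j-1}$ was defined by the preceding reduction step, modulo the harmless $j=1$ boundary — and (ii) keeping the correspondence between a set being represented by $T$ (meaning $\CONS^{-1}(T)$) and the permutations actually consistent with $T$ (meaning $\CONS(T)$) straight, since Lemma~\ref{lem:flatten} is phrased in terms of $\CONS^{-1}$ throughout and inverts cleanly. The computational bound $O(n)$ is inherited verbatim from Lemma~\ref{lem:flatten}, which in turn rests on Lemma~\ref{lem:consistency}.
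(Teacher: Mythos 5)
Your proposal is correct and follows essentially the same route as the paper's own proof: both obtain the corollary as a direct specialization of Lemma~\ref{lem:flatten} with $T = T_{j-1}$ and $X = X_{j-1}$, identifying the right-hand side of~(\ref{eq:X-subset}) with $\Pi'_j$ via the definition of the expansion step and inheriting the $O(n)$ bound from the lemma's ``moreover'' clause. Your explicit verification of precondition~(\ref{eq:consecutive-integers}) through the preceding reduction step (including the trivial $j=1$ case with $X_0=[n]$), and your observation that $T'_j$ should read $\FLAT(T_{j-1},X_{j-1})$ rather than $\CONS^{-1}(\FLAT(T_{j-1},X_{j-1}))$, are if anything slightly more careful than the paper's wording.
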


\begin{proof}
Apply Lemma~\ref{lem:flatten} for $X = X_{j-1}$ and $T = T_{j-1}$.
Then $\CONS^{-1}(T) = \Pi_{j-1}$ and the right-hand side 
of~(\ref{eq:X-subset}) equals $\Pi'_j$. Since $X_{j-1}$ 
and $T_{j-1}$ satisfy~(\ref{eq:X-subset}), 
Lemma~\ref{lem:flatten} applies. Hence $\FLAT(T_{j-1},X_{j-1})$ 
represents $\Pi'_j$. Moreover, again by Lemma~\ref{lem:flatten}, 
we get the following: given a permutation $\pi' \in \Pi'_j$,
one can compute in time $O(n)$ permutations $\pi \in \Pi_{j-1}$
and $\sigma \in \cS_{X_{j-1}\pi}$ such that $\pi' = \pi\sigma$.
\end{proof}

\noindent
Note that, by Corollary~\ref{cor:flatten}, we have overcome
Obstacle 2 and also Obstacle 3.

\subsection{Putting Everything Together} \label{subsec:main-result}

We are now prepared for the main result in this paper:

\begin{theorem} \label{th:WC1P}
The following can be done in time $O(mn)$:
\begin{enumerate}
\item
Given an instance $I = (n,m,X_1,\ldots,X_m,\tau)$ of WPPSG,
either report that $I$ is not nice or, if $I$ is nice,
compute PQ-trees $T_0,T'_1,T_1,\ldots,T'_m,T_m$ 
such that $T'_j$ represents $\Pi'_j$ and $T_j$ 
\hbox{represents $\Pi_j$}.
\item
Given $I$ and $T_1,T'_2,T_2,\ldots,T'_m,T_m$, compute the 
description $(\pi_1,\sigma_1,\ldots,\sigma_{m-1})$ of an
ascending chain that starts at $I$ and ends at some instance 
of WPPSG$_0$.
\item
Given $I$ and $(\pi_1,\sigma_1,\ldots,\sigma_{m-1})$, compute
the instance $I_m$ that is given by
\begin{equation} \label{eq:instance-recursion}
I_1 = I^{\pi_1}, I_2 = I^1[1,\sigma_1], \ldots, I_m = I_m[m,\sigma_m]
\enspace .
\end{equation}
\item
Decide whether $I_m$ is a YES-instance.
\end{enumerate}
\end{theorem}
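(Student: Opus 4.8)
The plan is to establish each of the four items in sequence, using the results already developed, and to be careful that the running time accumulates to $O(mn)$ rather than something worse.

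\medskip
\noindent\textbf{Item 1.} I would implement the procedure REDUCE-EXPAND on PQ-trees. Start with $T_0$, the universal PQ-tree on $[n]$ built in time $O(n)$. For $j=1,\ldots,m$, apply Claim~1 (i.e.\ Corollary~\ref{cor:flatten}, which says $T'_j := \FLAT(T_{j-1},X_{j-1})$ represents $\Pi'_j$ and can be computed in time $O(n)$) to pass from $T_{j-1}$ to $T'_j$, and then apply Claim~2 (the Booth--Lueker reduction step, also $O(n)$) to pass from $T'_j$ to $T_j$. Before invoking $\FLAT$ we must know that $T_{j-1}$ and $X_{j-1}$ satisfy~(\ref{eq:consecutive-integers}); this holds because $T_{j-1}$ represents $\Pi_{j-1}$ and every $\pi \in \Pi_{j-1}$ has $X_{j-1}\pi$ consecutive by the reduction step that produced $\Pi_{j-1}$ (for $j=1$ use $X_0=[n]$, which is trivially consecutive under any permutation). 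If at any stage the Booth--Lueker reduction returns the null tree, then $\Pi_j=\eset$, and by Theorem~\ref{th:reduce-expand} $I$ is not nice; we report this and stop. Otherwise all $2m+1$ trees are produced in total time $O(mn)$.

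\medskip
\noindent\textbf{Item 2.} This is exactly the reading-off procedure described in the proof of Theorem~\ref{th:reduce-expand}. Pick any $\pi_m \in \Pi_m$ from the frontier of $T_m$. Then for $j=m,m-1,\ldots,2$, having $\pi_j \in \Pi_j \seq \Pi'_j$ in hand, apply the ``moreover'' part of Corollary~\ref{cor:flatten} to the trees $T_{j-1}$ and $T'_j$ to obtain, in time $O(n)$, permutations $\pi_{j-1} \in \Pi_{j-1}$ and $\sigma_{j-1} \in \cS_{X_{j-1}\pi_{j-1}}$ with $\pi_j = \pi_{j-1}\sigma_{j-1}$. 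After $m-1$ such steps we have the full description $(\pi_1,\sigma_1,\ldots,\sigma_{m-1})$, and by the argument in the proof of Theorem~\ref{th:reduce-expand} (equations~(\ref{eq:recursion})--(\ref{eq:final-instance})) the ascending chain with this description is valid and terminates in an instance of WPPSG$_0$. Total time $O(mn)$.

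\medskip
\noindent\textbf{Items 3 and 4.} For Item~3, one simply carries out the chain~(\ref{eq:ascending-chain}) explicitly: compute $I_1 = I^{\pi_1}$ by conjugating $\tau$ and relabelling all $m$ sets, then for $j=1,\ldots,m-1$ compute $I_{j+1} = I_j[j,\sigma_j]$, which multiplies the current $\tau$ by $\sigma_j$ and relabels the sets $X_{j+1},\ldots,X_m$ by $\sigma_j$; each such step touches $O(n)$ data per set, so the whole chain costs $O(mn)$. Alternatively, by~(\ref{eq:final-instance}) the specification sets of $I_m$ are just $X_j^m = X_j\pi_j$, which are already consecutive, and the permutation of $I_m$ is $\tau\pi_1\sigma_1\cdots\sigma_{m-1} = \tau\pi_m$; computing these directly is even simpler. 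For Item~4, $I_m$ is an instance of WPPSG$_0$, hence of V2-WPPSG$_0$ after passing to the Version~2 formulation via Lemma~\ref{lem:V2-WPPSG}, so by Theorem~\ref{th:WPPSG0} the sorting strategy decides it correctly; running the sorting strategy on $I_m$ takes $O(mn)$ time since each of the $m$ sorting steps sorts at most $n$ entries. By Lemma~\ref{lem:transformation} (and the reverse direction, since $\transition{*}$ is an equivalence relation), $I_m$ is a YES-instance iff $I$ is, completing the algorithm.

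\medskip
\noindent The main obstacle I expect is bookkeeping rather than conceptual: making sure the relabellings applied at each elementary transformation are propagated consistently so that the $\sigma_j$ produced in Item~2 really are elements of $\cS_{X_j^j}$ in Item~3, and verifying that the per-step $O(n)$ bounds for PQ-tree flattening and Booth--Lueker reduction genuinely multiply out to $O(mn)$ rather than hiding an extra factor (e.g.\ the cost of storing and copying $m$ sets each of size up to $n$ at every one of $m$ steps would be $O(m^2 n)$, so one must argue that each transformation only re-labels the sets, an $O(n)$-per-set operation that in aggregate over the chain is $O(mn)$ because the cheaper direct formula $X_j^m = X_j\pi_j$, $\tau_{\mathrm{new}} = \tau\pi_m$ sidesteps re-labelling entirely). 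Everything else follows by directly assembling the lemmas and theorems already proved.
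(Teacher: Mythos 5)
Your proposal follows the paper's proof essentially step for step: Item 1 is the PQ-tree implementation of REDUCE-EXPAND via $\FLAT$ plus the Booth--Lueker reduction step (your explicit check that $T_{j-1}$ and $X_{j-1}$ satisfy~(\ref{eq:consecutive-integers}) before flattening is a welcome detail that the paper glosses over), Item 2 is the backward read-off of $(\pi_1,\sigma_1,\ldots,\sigma_{m-1})$ via the ``moreover'' part of Corollary~\ref{cor:flatten}, and Item 4 is the sorting strategy justified through Lemma~\ref{lem:V2-WPPSG} and Theorem~\ref{th:WPPSG0}, all with the same per-step $O(n)$ accounting.

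The one concrete flaw is in your shortcut for Item 3. The first link of the ascending chain is $I \ra I^{\pi_1}$, which by~(\ref{eq:equal-up-to}) \emph{conjugates} $\tau$, so expanding the recursion gives the final permutation $\pi_1^{-1}\tau\pi_1\sigma_1\cdots\sigma_{m-1} = \pi_1^{-1}\tau\pi_m$, not $\tau\pi_m$ as you wrote (your own explicit description ``conjugate $\tau$, then multiply by $\sigma_j$'' contradicts this formula). The slip is not cosmetic: as you yourself observe, carrying out the chain naively costs $O(m^2 n)$ because each elementary transformation relabels all later specification sets, so the direct formula is exactly what secures the $O(mn)$ bound for Item 3 --- and it must read $I_m = (n,m,X_1\pi_1,\ldots,X_m\pi_m,\pi_1^{-1}\tau\pi_m)$ with $\pi_j = \pi_1\sigma_1\cdots\sigma_{j-1}$ computed by maintaining the cumulative permutation at cost $O(n)$ per step, which is precisely how the paper argues. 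With that correction, Item 3 matches the paper's proof and the rest of your argument is sound.
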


\begin{proof}
\begin{enumerate}
\item
Represent the set $\Pi_0 = \cS_n$ by the universal PQ-tree $T_0$.
For $j = 1,\ldots,m$, do the following: 
\begin{description}
\item[Expansion Step:] 
Given $T_{j-1}$, set $T'_j = \FLAT(T_{j-1},X_{j-1})$.
\item[Reduction Step:] 
The method of Booth and Lueker can be used to report that $\Pi_j = \eset$ 
and, if $\Pi_j \neq \eset$, to compute a PQ-tree $T_j$ that represents $\Pi_j$.
\end{description}
The correctness of the above procedure follows inductively:
\begin{itemize}
\item 
If $T_{j-1}$ represents $\Pi_{j-1}$ then, according to
Corollary~\ref{cor:flatten}, $\Pi'_j$ is represented by 
the PQ-tree $T'_j  = \FLAT(T_{j-1},X_{j-1})$.
\item
If $T'_j$ represents $\Pi'_j$, then the method of Booth and Leuker
computes a PQ-tree $T_j$ that represents $\Pi_j$ unless $\Pi_j = \eset$.
\item
If $\Pi_j = \eset$, then $\Pi_m = \eset$. According to
Theorem~\ref{th:reduce-expand}, the latter can happen only
if $I$ is not nice. Therefore $\Pi_j \neq \eset$ provided 
that $I$ is nice.
\end{itemize}
We move on to time analysis.
The construction of $T_0$ takes time $O(n)$. According to
Corollary~\ref{cor:flatten}, the expansion step takes time $O(n)$.
According to~\cite{BL1976}, the reduction step takes time $O(n)$.
Since each of these steps is executed $m$-times, the overall
time bound is $O(mn)$.
\item
\begin{enumerate}
\item
Set $\pi_m := \FRONT(T_m)$. 
\item
For $j = m$ downto $1$, compute 
$\pi_{j-1} \in \Pi_{j-1}$ and $\sigma_{j-1} \in \cS_{X_{j-1}\pi_{j-1}}$
such that $\pi_j = \pi_{j-1}\sigma_{j-1}$.
\item
Return $(\pi_1,\sigma_1,\ldots,\sigma_m)$.
\end{enumerate}
As shown in the proof of Theorem~\ref{th:reduce-expand}, 
the returned tuple, $(\pi_1,\sigma_1,\ldots,\sigma_m)$,
is a description of an ascending chain that starts in $I$
and ends in an instance of WPPSG$_0$. Thus the above procedure
is correct. The run-time is dominated by the loop in Step 2(b).
According to Corollary~\ref{cor:flatten}, each run through the loop
takes time $O(n)$. Since the loop is executed $m$-times,
we obtain the overall time bound $O(mn)$.
\item
Expanding the recursion in~(\ref{eq:instance-recursion}),
we obtain 
\[
I_m = (n,m\ ,\ X_1\pi_1,\ldots,X_m\pi_m\ ,\ \pi_1^{-1}\tau\pi_m)
\]
where $\pi_j = \pi_1\sigma_1\ldots\sigma_{j-1}$. Given $\pi_{j-1}$, 
the permutation $\pi_j = \pi_{j-1}\sigma_{j-1}$ can be computed
in time $O(n)$. Thus the instance $I_m$ can be computed in time $O(mn)$.
\item
It is easy to see that the sorting strategy applied to $\tau_0 \in \cS_n$
and $X_1,\ldots,X_m \seq [n]$ can be implemented in time $O(mn)$.\footnote{The
time complexity is dominated by $m$-times running BUCKET-SORT for sorting
an unsorted sequence consisting of less than $n$ numbers in the range 
from $1$ to $n$.} Since $I_m$ is an instance of WPPSG$_0$,
this strategy can be used to decide in time $O(mn)$ whether $I_m$ 
is a YES-instance.
\end{enumerate}
\end{proof}

\begin{corollary} \label{cor:WC1P}
The problem WPPSG$_2$ can be solved in time $O(mn)$.
\end{corollary}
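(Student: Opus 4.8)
The plan is simply to assemble the three-step algorithm sketched in Section~\ref{subsec:nice-instances} and to invoke Theorem~\ref{th:WC1P} for the running-time bound on each phase. Given an input instance $I = (n,m,X_1,\ldots,X_m,\tau)$ of WPPSG$_2$ — which, by definition of WPPSG$_2$, is nice — I would first run the PQ-tree implementation of REDUCE-EXPAND from part~1 of Theorem~\ref{th:WC1P} to obtain the trees $T_0,T'_1,T_1,\ldots,T'_m,T_m$ in time $O(mn)$. Since $I$ is nice, Theorem~\ref{th:reduce-expand} guarantees $\Pi_m \neq \eset$, so these trees are well-defined; if one prefers not to rely on the promise that the input is nice, the very same computation doubles as an $O(mn)$ niceness test that rejects $I$ when $\Pi_m = \eset$.

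Next I would apply part~2 of Theorem~\ref{th:WC1P} to extract, in time $O(mn)$, the description $(\pi_1,\sigma_1,\ldots,\sigma_{m-1})$ of an ascending chain leading from $I$ to some instance $I_m$ of WPPSG$_0$, and then part~3 to actually compute $I_m = (n,m,X_1\pi_1,\ldots,X_m\pi_m,\pi_1^{-1}\tau\pi_m)$, again in time $O(mn)$. Finally, by part~4 of Theorem~\ref{th:WC1P}, the sorting strategy applied to $I_m$ decides in time $O(mn)$ whether $I_m$ is a YES-instance; this is correct because $I_m$ is an instance of WPPSG$_0$ and, by Theorem~\ref{th:WPPSG0} together with Lemma~\ref{lem:V2-WPPSG}, the sorting strategy succeeds on exactly the YES-instances of WPPSG$_0$.

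For the correctness of the overall reduction, I would observe that $I \transition{*} I_m$ by construction, so Lemma~\ref{lem:transformation} gives that $I_m$ is a YES-instance whenever $I$ is; since $\transition{*}$ is an equivalence relation, in particular symmetric, the converse holds as well. Hence $I$ is a YES-instance of WPPSG$_2$ if and only if $I_m$ is a YES-instance of WPPSG$_0$, and the algorithm answers this correctly. Summing the four phases, each bounded by $O(mn)$, yields the claimed total running time $O(mn)$.

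I do not expect a genuine obstacle: the corollary is an immediate consequence of Theorem~\ref{th:WC1P}. The only point that calls for a moment's care is the ``if and only if'' in the correctness argument — Lemma~\ref{lem:transformation} is stated in one direction only, so one must explicitly invoke the symmetry of $\transition{*}$ to obtain the reverse implication, ensuring that a NO-answer on $I_m$ certifies a NO-answer on the original instance $I$.
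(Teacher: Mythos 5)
Your proposal is correct and follows essentially the same route as the paper: run the PQ-tree implementation of REDUCE--EXPAND as the niceness test, use Theorem~\ref{th:WC1P} to build the ascending chain and the instance $I_m$ of WPPSG$_0$, decide $I_m$ with the sorting strategy, and transfer the answer back via Lemma~\ref{lem:transformation}. Your explicit remark that the symmetry of $\transition{*}$ is needed to get the reverse implication is a point the paper glosses over (it simply asserts the ``iff''), but it is the same argument in substance.
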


\begin{proof}
We use the notation from Theorem~\ref{th:WC1P}. If the given instance $I$
is not nice then, as described in the proof of Theorem~\ref{th:WC1P}, this 
will be revealed during the execution of the REDUCE-EXPAND procedure.
We may assume therefore that the given instance $I$ is nice. According to 
Theorem~\ref{th:WC1P}, we can compute the instance $I_m$ and test whether 
it is a YES-instance in time $O(mn)$. By virtue of 
Lemma~\ref{lem:transformation}, $I_m$ is a YES-instance iff $I$ is 
a YES-instance. Thus the problem WPPSG$_2$ can be solved in time $O(mn)$.
\end{proof}

\begin{corollary}
It can be checked in time $O(mn)$ whether a matrix $A \in \{0,1\}^{n \times m}$
has the weak consecutive-ones property.
\end{corollary}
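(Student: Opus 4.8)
The plan is to reduce the claim directly to Theorem~\ref{th:WC1P}. First I would identify the matrix $A \in \{0,1\}^{n\times m}$ with the collection of column sets $X_j := \{i \in [n] : A[i,j]=1\}$ for $j=1,\ldots,m$. By the set-theoretic reformulation of Definition~\ref{def:C1-property} and the definition of the weak C1-property via REDUCE-EXPAND, the matrix $A$ has the weak C1-property iff the sets $X_1,\ldots,X_m$ do, which by Theorem~\ref{th:reduce-expand} holds iff the WPPSG instance $I := (n,m,X_1,\ldots,X_m,\id)$ is nice (the permutation component plays no role in niceness, so $\id$ is a convenient choice). Thus it suffices to decide niceness of $I$ in time $O(mn)$, and part~1 of Theorem~\ref{th:WC1P} does exactly this: in time $O(mn)$ it either reports that $I$ is not nice or constructs the PQ-trees $T_0,T'_1,T_1,\ldots$; by Theorem~\ref{th:reduce-expand} the former case is precisely the case $\Pi_m = \eset$, i.e., the case where $A$ lacks the weak C1-property. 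One then answers ``yes'' or ``no'' accordingly.

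The one point that needs care is that Theorem~\ref{th:WC1P}, and the operation $\FLAT$ together with Lemma~\ref{lem:consecutive-integers} on which it rests, are set up for genuine WPPSG instances, i.e., those with $|X_j|\ge 2$, whereas a column of $A$ may have zero or one $1$'s. I would therefore insert a preprocessing step that scans $A$ once (in time $O(mn)$) and deletes every column with at most one $1$. This is harmless: in a run of REDUCE-EXPAND a specification set $X_j$ with $|X_j|\le 1$ causes the reduction step at stage $j$ to act as the identity (a set of at most one integer is vacuously a set of consecutive integers) and the expansion step at stage $j+1$ to act as the identity (since $\cS_{X_j\pi}=\{\id\}$), so deleting such columns does not change whether the final $\Pi$-set is empty. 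After the deletion every remaining $X_j$ satisfies $2\le|X_j|\le n$, so Theorem~\ref{th:WC1P} applies to the reduced instance, whose size is still $O(mn)$.

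I do not expect a genuine obstacle here: the corollary is essentially a restatement of Theorem~\ref{th:WC1P}(1) combined with Theorem~\ref{th:reduce-expand}. The only things to get right are the bookkeeping just described --- the degenerate-column preprocessing and the observation that niceness, hence the weak C1-property, does not depend on the permutation component $\tau$ --- together with confirming that the alternative ``report that $I$ is not nice'' in Theorem~\ref{th:WC1P}(1) is triggered exactly when some $\Pi_j$, and hence $\Pi_m$, becomes empty during REDUCE-EXPAND, which is what the proof of that theorem establishes.
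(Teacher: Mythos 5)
Your argument is correct and is essentially the paper's intended one: the paper leaves this corollary without an explicit proof precisely because it follows immediately from Theorem~\ref{th:WC1P}(1) together with Theorem~\ref{th:reduce-expand} (niceness $\Leftrightarrow$ $\Pi_m\neq\eset$ $\Leftrightarrow$ weak C1-property, independently of $\tau$), which is exactly your reduction. Your extra preprocessing of columns with at most one $1$ is a sensible piece of bookkeeping (the paper tacitly assumes $|X_j|\ge 2$ throughout) and does not change the approach.
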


\appendix

\section{Proof of Lemma~\ref{lem:consistency}} \label{sec1:app}

Let $\pi = (1\pi,\ldots,n\pi)$ be the given permutation 
on $[n]$ and let $T$ be the given PQ-tree over $[n]$. 
For each node $x$ \hbox{in $T$}, let $P_{min}(x|\pi)$ 
be the smallest $i \in [n]$ such that $i\pi$ is a leaf in $T_x$. 
The quantity $P_{max}(x|\pi)$ is defined analogously. Let $x_0$ 
denote the root of $T$. It is obvious 
that $P_{min}(x_0|\pi) = 1$ and $P_{max}(x_0|\pi) = n$.
The quantities $P_{min}(x|\pi)$ and $P_{max}(x|\pi)$, with $x$ 
ranging over all nodes of $T$, can be computed in linear 
time in a bottom-up fashion:
\begin{enumerate}
\item
If $x$ is a leaf in $T$, say $x = i$, then we 
set $P_{min}(x|\pi) := P_{max}(x|\pi) := i\pi^{-1}$.
\item
If $x$ is an internal node in $T$ with children $x_1,\ldots,x_r$,
then we set 
\[
P_{min}(x|\pi) := \min_{j=1,\ldots,r}P_{min}(x_j|\pi)\ \mbox{ and }\  
P_{max}(x|\pi) := \max_{j=1,\ldots,r}P_{min}(x_j|\pi) \enspace .
\]
\end{enumerate}
We assume from now that the $P_{min}$- and the $P_{max}$-parameters 
are at our disposal. \\
To solve the consistency problem for the input instance $(T,\pi)$, 
we use an algorithm $A$ that visits the internal nodes of $T$ in 
preorder and, for each currently visited node $x$, does the following:
\begin{enumerate}
\item
Let $x_1,\ldots,x_r$ (in this order) be the children of $x$ in $T$.
\item
If $x$ is a P-node, then let $x'_1,\ldots,x'_r$ be the permutation 
of $x_1,\ldots,x_r$ such that $P_{min}(x'_1|\pi) <\ldots< P_{min}(x'_r|\pi)$.
For reasons of efficiency, we apply BUCKET-SORT here. \\
If $x$ is a Q-node and $P_{min}(x_1|\pi) < P_{min}(x_2|\pi)$, then 
set $x'_j = x_j$ for $j = 1,\ldots,r$. \\
If $x$ is a Q-node and $P_{min}(x_1|\pi) > P_{min}(x_2|\pi)$,
then set $x'_j = x_{r-j+1}$ for $j = 1,\ldots,r$.
\item
Check whether the condition
\begin{equation} \label{eq:consistency}
\forall j=1,\ldots,r-1: P_{max}(x'_j|\pi) < P_{min}(x'_{j+1},\pi)
\end{equation}
is satisfied. If the answer is ``No'', then return 
``error'' and stop. If the answer is ``Yes'', then  
modify $T$ by making $x'_1,\ldots,x'_r$ the new ordering of 
the children of $x$.\footnote{If CPP should not destroy the 
original tree $T$, then CPP could create a copy of $T$ in 
the beginning and make all modifications in this copy.}
\end{enumerate}
When the traversal is finished, $A$ checks whether $\FRONT(T) = \pi$.
If the answer is ``No'', then $A$ reports ``error'' and stops.
If the answer is ``Yes'', then $A$ returns $T$ and stops.\footnote{
This final check is not really necessary but it later simplifies 
the proof of correctness.}
Before proving that $A$ solves the consistency problem, let's
argue that $A$ runs in time $O(n)$. It is rather obvious that,
for each fixed internal node $x$ with $r$ children, Steps 1--3 
can be executed in time $O(r)$. We therefore obtain a time bound
for algorithm $A$ that is proportional to the total number of 
children of internal nodes in $T$ and therefore proportional 
to the number of edges in $T$. Hence $A$ has a time bound $O(n)$. \\
Let's denote by $T$ the original PQ-tree that was part of the input.
During the preorder traversal of $T$, $T$ is modified many times
(one possible modification at each internal node). We denote by $T'$ 
the PQ-tree resulting from these modifications when the traversal is 
finished. \hbox{Clearly $T'$} is equivalent to $T$ because only permitted
modifications were used for the re\-or\-ga\-ni\-za\-tion \hbox{of $T$}.
The proof of the lemma can now be completed by showing
that the frontier of $T'$ equals $\pi$ provided that $\pi \in \CONS(T)$.
If $\pi \notin \CONS(T)$, there is nothing to show. We may therefore
suppose that $\pi \in \CONS(T)$. It follows that there exists
a PQ-tree $T''$ which is equivalent to $T$ and has frontier $\pi$.
Since $T''$ is equivalent to $T$, $T''$ can be obtained from $T$
by visiting the internal nodes of $T$ in preorder and, for each
currently visited node $x$, making a permitted modification
at $x$. This is just like \hbox{algorithm $A$} has transformed $T$
into $T'$. It suffices to show that $T' = T''$. The main
argument for the coincidence of $T'$ and $T''$ is that every
modification that is chosen by $A$ is enforced if we want to 
preserve the chance of having a PQ-tree with frontier $\pi$ at
the end of the preorder traversal. Let's flesh out this
central argument by making the following observations:
\begin{itemize}
\item
The modifications chosen at the internal nodes in $T_x$ determine 
the internal order of the leaves of $T_x$. The modifications chosen
at nodes outside $T_x$ have no influence on this internal order.
\item
The modification chosen at the internal node $x$ determines the
internal order of the leaves in $T_x$ partially: if $x$
has $r$ children $x_1,\ldots,x_r$ (ordered from left to
right) and the modification is to make $x'_1,\ldots,x'_r$ 
the new order of $x_1,\ldots,x_r$, then it is determined
that each leaf of $T_{x'_1}$ is left of each leaf in $T_{x'_2}$,
each leaf of $T_{x'_2}$ is left of each leaf in $T_{x'_3}$, 
and so on. Note that Condition~(\ref{eq:consistency})
controls whether this matches with the left-to-right order
of the corresponding components of $\pi$.
\item
The choice of $x'_1,\ldots,x'_r$ in Step 2 of algorithm $A$ is
enforced: for any other order of $x_1,\ldots,x_r$, there is
an immediate mismatch with $\pi$.
\item
Even the order $x'_1,\ldots,x'_r$ chosen in Step 2 may produce 
a mismatch with $\pi$. But algorithm $A$ takes care of this 
possibility in Step 3 by checking Condition~(\ref{eq:consistency}). 
\end{itemize}
From this discussion and our assumption that $\pi\in\CONS(T)$
is witnessed by $T''$, it follows that $T' = T''$ and 
therefore $\FRONT(T') = \pi$. \hfill \qed

\section{Proof of Lemma~\ref{lem:consecutive-integers}}
\label{sec2:app}

Fig.~\ref{fig:lem-X-tree} and~\ref{fig:proof-X-tree}
serve as an illustration for the following proof.

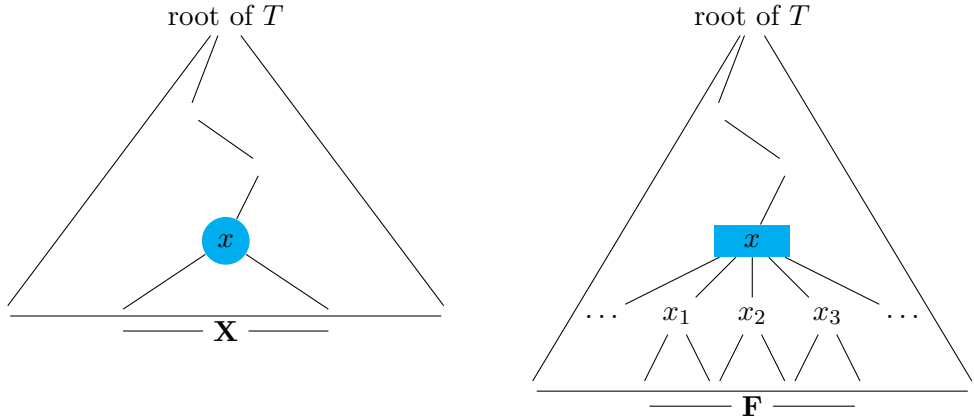
\begin{figure}[hbt] 
        \begin{center}
\begin{tikzpicture}



        \def\x{-3} \def\y{0}

        \node[fill=white] (a) at (\x,\y) {root of $T$};
        \node[fill=white] (a1) at (\x-0.5,\y-1.3) {};
        \node[fill=white] (a2) at (\x+0.5,\y-2) {};
        \node[fill=cyan,circle] (x) at (\x,\y-3) {$x$};
        \node[fill=white] (b) at (\x-3,\y-4) {};
        \node[fill=white] (c) at (\x+3,\y-4) {};
        \node[fill=white] (z1) at (\x-1.5,\y-4) {}; 
        \node[fill=white] (z2) at (\x+1.5,\y-4) {}; 
        \node[fill=white] (z1') at (\x-1.5,\y-4.2) {}; 
        \node[fill=white] (z2') at (\x+1.5,\y-4.2) {}; 
        \node[fill=white] (z) at (\x,\y-4.2) {\bf X};

        \draw (a)--(a1);
        \draw (a1)--(a2);
        \draw (a2)--(x);
        \draw (a)--(b);
        \draw (a)--(c);
        \draw (b)--(c);
        \draw (x)--(z1);
        \draw (x)--(z2);
        \draw (z1')--(z);
        \draw (z2')--(z);

        \def\x{4} \def\y{0}

        \node[fill=white] (a) at (\x,\y) {root of $T$};
        \node[fill=white] (a1) at (\x-0.5,\y-1.3) {};
        \node[fill=white] (a2) at (\x+0.5,\y-2) {};
        \node[fill=cyan,rectangle,minimum width = 1cm] (x) at (\x,\y-3) {$x$};
        \node[fill=white] (x0) at (\x-2,\y-4) {$\ldots$};
        \node[fill=white] (x1) at (\x-1,\y-4) {$x_1$};
        \node[fill=white] (x2) at (\x,\y-4) {$x_2$};
        \node[fill=white] (x3) at (\x+1,\y-4) {$x_3$};
        \node[fill=white] (x4) at (\x+2,\y-4) {$\ldots$};
        \node[fill=white] (b) at (\x-3,\y-5) {};
        \node[fill=white] (c) at (\x+3,\y-5) {};
        \node[fill=white] (z1) at (\x-1.5,\y-5) {}; 
        \node[fill=white] (z2) at (\x+1.5,\y-5) {}; 
        \node[fill=white] (z1') at (\x-1.5,\y-5.2) {}; 
        \node[fill=white] (z2') at (\x+1.5,\y-5.2) {}; 
        \node[fill=white] (z) at (\x,\y-5.2) {\bf F}; 
        \node[fill=white] (z3) at (\x-0.5,\y-5) {}; 
        \node[fill=white] (z4) at (\x+0.5,\y-5) {}; 

        \draw (a)--(a1);
        \draw (a1)--(a2);
        \draw (a2)--(x);
        \draw (a)--(b);
        \draw (a)--(c);
        \draw (b)--(c);
        \draw (x)--(x0);
        \draw (x)--(x1);
        \draw (x)--(x2);
        \draw (x)--(x3);
        \draw (x)--(x4);
        \draw (z1')--(z);
        \draw (z2')--(z);
        \draw (x1)--(z1);
        \draw (x1)--(z3);
        \draw (x2)--(z3);
        \draw (x2)--(z4);
        \draw (x3)--(z4);
        \draw (x3)--(z2);

\end{tikzpicture}
        \end{center}
        \caption{The subtree $T_x$ on the left is an $X$-tree
          that is rooted at a P-node $x$. The subtree $T_x$ on 
          the right contains an $X$-forest and is rooted at
          a Q-node $x$.  \label{fig:lem-X-tree} }
\end{figure}

\begin{figure}[hbt] 
        \begin{center}
\begin{tikzpicture}



        \draw (0,0) -- (-5,-4);
        \draw (0,0) -- (5,-4);
        \draw (-5,-4) -- (5,-4);
        \draw (0,0) -- (-2,-2);  
        \draw (0,0) -- (0,-2);  
        \draw (0,0) -- (2,-2);  
        \draw (-2,-2) -- (-3,-4);
        \draw (-2,-2) -- (-1,-4);
        \draw (0,-2) -- (-1,-4);
        \draw (0,-2) -- (1,-4);
        \draw (2,-2) -- (1,-4);
        \draw (2,-2) -- (3,-4);
        \draw (2,-2) -- (1.7,-3);
        \draw (2,-2) -- (2.3,-3);
        \draw (1.7,-3) -- (1.4,-4);
        \draw (1.7,-3) -- (2,-4);
        \draw (2.3,-3) -- (2,-4);
        \draw (2.3,-3) -- (2.6,-4);
        
        \node (x) at (0.3,0) {x};
        \node (y') at (-1.7,-2) {$y'$};
        \node (y'') at (0.3,-2) {$y''$};
        \node (y) at (1.7,-2) {$y$};
        \node (z') at (-2,-4.3) {$z'$};
        \node (L') at (-2,-4) {$|$};
        \node (z'') at (0,-4.3) {$z''$};
        \node (L'') at (0,-4) {$|$};
        \node (z+) at (1.7,-4.3) {$z^+$};
        \node (L+) at (1.7,-4) {$|$};
        \node (z-) at (2.3,-4.3) {$z^-$};
        \node (L-) at (2.3,-4) {$|$};
        \node (usw) at (2,-3) {$\ldots$};


\end{tikzpicture}
        \end{center}
        \caption{Illustration of the proof of Lemma~\ref{lem:consecutive-integers}:
        reversing the order of the children of the node $y$ has the effect that
        the leaves $z',z'',z^+ \in X$ and $z^- \notin X$ occur in the
        left-right order $z',z'',z^-,z^+$ so the the $X$-leaves do not
        occur consecutively. \label{fig:proof-X-tree} }
\end{figure}
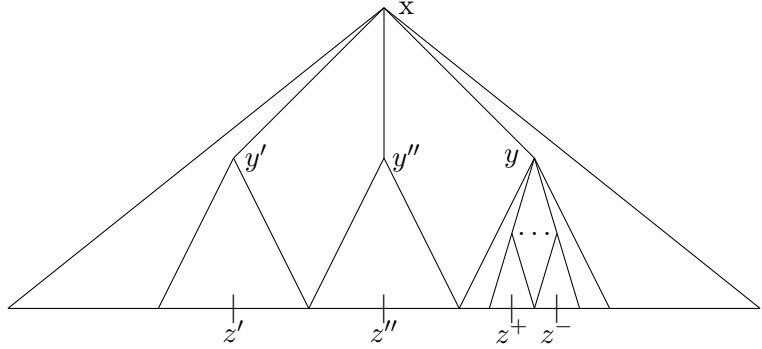

The choice of $x$ implies that $x$ must have two children,
say $y'$ and $y''$, such that each of the subtrees $T_{y'}$ and $T_{y''}$
has a leaf from $X$, say $T_{y'}$ contains the leaf $z' \in X$
and $T_{y''}$ contains the leaf $z'' \in X$. 
\begin{description}
\item[Claim:] 
For each child $y$ of $x$, either all leaves in $T_{y}$ are members of $X$
or none of them belongs \hbox{to $X$}. 
\item[Proof of the Claim:]
Assume for the sake of contradiction that there exists a child $y$ of $x$ 
such that $T_y$ contains a leaf from $X$ and another leaf outside $X$, 
say the leaves $z^+ \in X$  and $z^- \notin X$. The node $y$ may coincide 
with one of the nodes $y'$ and $y''$ but not with both. For reasons of 
symmetry, we may assume that $y \neq y'$. We discuss how the 
leaves in the set $\{z',z^+,z^-\}$ can possibly be ordered from left 
to right:
\begin{itemize}
\item 
Since $T$ satisfies~(\ref{eq:consecutive-integers}), the leaf $z^-$
cannot be between $z'$ and $z^+$.
\item
Since $z^+$ and $z^-$ are leaves in $T_y$ and $z'$ is a leaf outside $T_y$,
it follows that $z'$ cannot be between $z^+$ and $z^-$. 
\item
It follows that, in the ordering of leaves from left to right, the
leaf $z^+$ must be between $z'$ and $z^-$. This leaves the orderings 
$z'z^+z^-$ and $z^-z^+z'$ as the only remaining possibilities.
\end{itemize}
Let $z$ be the youngest common ancestor of $z^+$
and $z^-$ in $T_y$. Let $z_1$ be the child of $z$ with the leaf $z^+$
in $T_{z_1}$ and let $z_2$ be the child of $z$ with the leaf $z^-$ 
in $T_{z_2}$. By reversing the order of the children of $z$, we obtain 
a PQ-tree $T'$ that is equivalent to $T$ but has the property that 
the leaves $z'$, $z^+$ and $z^-$ occur in $T'$ in the 
order $z'z^-z^+$ or in the order $z^+z^-z'$.\footnote{A reversal
of the order of the children of $z$ is possible regardless of whether $z$
is a P- or a Q-node.} We may conclude that there exists 
a permutation $\pi \in \CONS(T)$ such that $X\pi$ is not a set 
of consecutive integers. We arrived at a contradiction, wich proves 
the claim.
\end{description}
From the claim, the assertion of the lemma is obvious.


\begin{thebibliography}{1}

\bibitem{BL1976}
Kellogg~S. Booth and George~S. Leuker.
\newblock Testing for the consecutive ones property, interval graphs, and graph
  planarity using {PQ}-tree algorithms.
\newblock {\em Journal of Computer and System Sciences}, 13(3):335--379, 1976.

\bibitem{GJMP1980}
Michael~R. Garey, David~S. Johnson, Gary~L. Miller, and Christos~H.
  Papadimitriou.
\newblock The complexity of coloring circular arcs and chords.
\newblock {\em {SIAM} Journal on Algebraic and Discrete Methods},
  1(2):216--227, 1980.

\bibitem{G1980}
Martin~C. Golumbic.
\newblock {\em Algorithmic Graph Theory and Perfect Graphs}.
\newblock Academic Press, 1980.

\bibitem{KLPS2007}
Antoon W.~J. Kolen, Jan~Karel Lenstra, Christos~H. Papadimitriou, and Frits
  C.~R. Spieksma.
\newblock Interval scheduling: A survey.
\newblock {\em Naval Research Logistics}, 54(5):530--543, 2007.

\end{thebibliography}

\end{document}